\documentclass[11pt]{article}
\usepackage[secthm]{evan}
\usepackage{url}
\usepackage{bbm}

\def\cynthia#1{{ {\color{red}{\bf CD:} #1}}}
\def\Elbert#1{{ {\color{blue}{\bf ED:} #1}}}
\newcommand{\Var}{\operatorname{Var}}
\newcommand{\E}{\mathbb{E}}

\newcommand{\Unif}{\operatorname{Unif}}
\usepackage{algorithm}
\usepackage{algpseudocode}

\newcommand{\Lap}{\operatorname{Lap}}
\usepackage{float}
\newcommand{\Interact}{\operatorname{Interact}}
\usepackage{xspace}

\newcommand{\Poly}{\operatorname{Poly}}

\newcommand{\iid}{\overset{\textit{iid}}{\sim}}
\newcommand{\pair}{\widetilde{C}^{\mathrm{pair}}}
\newcommand{\triple}{\widetilde{C}^{\mathrm{triple}}}

\newcommand{\Bigabs}[1]{\Bigl\lvert#1\Bigr\rvert}
\newcommand{\tO}{\widetilde{O}}
\newcommand{\tS}{\widetilde{S}}
\definecolor{navy}{rgb}{0, 0, 0.75}
\newcommand{\highlight}{\textcolor{navy}}
\newcommand{\Z}{\mathbb{Z}}

\newcommand{\Bin}{\mathrm{Bin}}

\DeclareMathOperator{\TV}{TV}

\newcommand{\cX}{\mathcal{X}}

\newcommand{\botel}{\bot}

\renewcommand{\eps}{\varepsilon}

\usepackage{csquotes}
\MakeOuterQuote{"}

\newcommand\blfootnote[1]{
    \begingroup \renewcommand\thefootnote{}\footnote{#1}
    \addtocounter{footnote}{-1}
    \endgroup
}

\title{Differentially Private Verification of Distribution Properties}
\author{
    Elbert Du \\
    Harvard University \\
    \url{edu1@g.harvard.edu}
\and
    Cynthia Dwork \\
    Harvard University \\
    \url{dwork@seas.harvard.edu}
\and
    Pranay Tankala \\
    Harvard University \\
    \url{pranay_tankala@g.harvard.edu}
\and
    Linjun Zhang \\
    Rutgers University \\
    \url{linjun.zhang@rutgers.edu}
}
\date{\today}

\begin{document}
\maketitle

\thispagestyle{empty}

\begin{abstract}

A recent line of work initiated by Chiesa and Gur and further developed by Herman and Rothblum investigates the sample and communication complexity of verifying properties of distributions with the assistance of a powerful, knowledgeable, but untrusted prover. 
In this work, we initiate the study of {\em differentially private} distribution property verification. After all, if we do not trust the prover to help us with verification, why should we trust it with our sensitive sample?
We map a landscape of differentially private prover-aided proofs of properties of distributions.

In the non-private case it is known that one-round (two message) private-coin protocols can have substantially lower complexity than public-coin (AM) protocols. In contrast, the possibility for improvement in differentially private interactive proofs  depends on the privacy parameter regime and the privacy model.  Drawing on connections between privacy and replicability and privacy amplification techniques in the literature we show for sample size $s$:
\begin{itemize}
\item There exists a reduction from {\em any} one-round $(\varepsilon,\delta)$-differentially private private-coin interactive proof to a one-round
    public-coin differentially private interactive proof with the same privacy parameters, for the parameter regime $\varepsilon = O(1/\sqrt{s})$ and $\delta= O(1/s^{5/2})$, and with the same sample and communication complexities. 
    \item If the verifier's message in the private-coin interactive proof is $O(1/\sqrt{\log s})$  {\em locally} differentially private -- a far more relaxed privacy parameter regime in a different model -- then applying one additional transformation again yields a one-round public-coin protocol with the same privacy bound and the same sample and computational complexities.  

\item However, when the privacy guarantee is very relaxed  ($\varepsilon \in \Omega(\log s)$), private coins indeed reduce sample and communication complexities. 

\end{itemize}
We also obtain a computationally efficient Merlin-Arthur (one-message) proof for privately testing whether samples are drawn from a product distribution, and prove that its sample complexity is optimal in its dependence on $N$ and $\varepsilon$ (and within a $polylog N$ factor in its dependence on distance) by reducing uniformity testing to independence testing with Boolean attributes and appealing to known lower bounds on sample complexity for (even possibly inefficient) private uniformity testing.  
\end{abstract}

\blfootnote{This work was supported in part by Simons Foundation Grant 733782, Cooperative Agreement CB20ADR0160001 with the United States Census Bureau, NSF CAREER DMS-2340241 and Renaissance Philanthropy "AI for Math" Fund.

An earlier version of this paper contained an error; we withdraw the claims on private interactive arguments of proximity.  See Appendix~\ref{sec:errata}.}

\newpage
\thispagestyle{empty}
\tableofcontents
\newpage
\pagestyle{plain}
\setcounter{page}{1}

\section{Introduction}
Given sample access to a distribution $\mathcal{D}$, what can we learn about $\mathcal{D}$ with the aid of a powerful, but untrusted prover who has full access to $\mathcal{D}$? The study of proof systems for testing properties of {\em distributions} was initiated by Chiesa and Gur~\cite{chiesa_et_al:LIPIcs.ITCS.2018.53}, who obtained separations in sample complexity between the standard (no-prover) model and single-message (MA and NP) proof systems, as well as between MA and private-coin interactive proofs.  The subject was further developed by Herman and Rothblum in a sequence of  works on verifying quantities of a distribution~\cite{herman2022verifying,Herman23_doubly_efficient,herman2024verifyreasonabledistributionproperty,herman2024public}. For example, the prover may claim that $\mathcal{D}$ has entropy $m$, or support size $s$; the goal of the interaction is to verify that the claim is approximately correct.  

Our work was inspired by the remarkable finding that, for a distribution over $[N]$, while approximating certain basic quantities, including the support size, entropy, and distance to the uniform distribution, all require $\tilde{\Omega}(N)$ samples~\cite{NIPS2013_53c04118}, there are 2-message private-coin interactive proofs for verification of each of these quantities in which the verifier's sample complexity is only $\tilde{O}(\sqrt N)$~\cite{herman2022verifying}.\footnote{Communication and computation complexity are also important, but for clarity of exposition in this Introduction we restrict our attention to sample complexity.}  More generally, the proof system in~\cite{herman2022verifying} applies to any {\em label-invariant} (aka {\em symmetric}) property, in which membership in the property is closed under permutation of the elements in the domain.  

In this paper, we introduce a new dimension to this investigation:  If we cannot trust the prover to be honest in its claims about the distribution, why should we trust it with sensitive information in our dataset? 
Accordingly, we initiate the study of {\em differentially private}~\cite{dwork2006calibrating,Dwork_McSherry_Nissim_Smith_2017} interactive proofs of properties of distributions:

\emph{What can we learn from an untrusted prover while protecting the privacy of our sample?}

There are two places in which the verifier might leak information about its sample: in its messages to the prover, if any, and in its decision whether to accept or reject.
In the no-prover (standard) case, there are no messages, so we only need to worry about privacy leakage in the decision.  For this case, \cite{aliakbarpour2017differentiallyprivateidentitycloseness, aliakbarpour2017differentiallyprivateidentityclosenessArxiv} obtained property-testing algorithms whose decisions ensure differential privacy via an elegant generic reduction from non-private to private testers that increases sample complexity by at most a factor of $(1/\varepsilon)$, where $\varepsilon$ is the privacy parameter\footnote{This reduction appears in the full version of the paper, or \cite{aliakbarpour2017differentiallyprivateidentityclosenessArxiv}.}.  This generic transformation is easily extended to the case of public-coin (aka Arthur-Merlin or AM) proofs (Section~\ref{sec:generic}).  This is not surprising: in AM proof systems the verifier's messages are restricted to be random strings and are therefore independent of the sample, so, as in the no-prover case, the only threat to privacy arises in the decision. 

In~\cite{herman2024public}, Herman obtained an AM protocol for verifying any label-invariant property using $\tilde{\Omega}(N^{2/3})$ samples.  Unlike the protocols in the private-coin setting, this protocol does not match the best known lower bound in \cite{chiesa_et_al:LIPIcs.ITCS.2018.53}. This gap is currently unresolved.  We show that, by relaxing to approximate differential privacy~\cite{dwork2006our}, also known as $(\varepsilon,\delta)$-differential privacy, we can use the Propose-Test-Release framework of Dwork and Lei~\cite{dwork2009differential} to adapt Herman's non-private AM proof system  while paying a smaller price 
for privacy than that of the generic transformation (Section~\ref{sec:label_invariant}). The rough intuition is that, due to the randomness of the sample, even the non-private proof system has to cope with variability in certain counts of elements, and this uncertainty ``typically'' dominates noise introduced for privacy. Following the propose-test-release paradigm, we use a differentially private test to determine whether the situation is indeed``typical.''

The landscape for differentially private {\em private-coin} interactive proofs is more complicated.  First, we show that for small privacy losses, specifically, for the parameter regime $\varepsilon = O(1/\sqrt{n})$ and $\delta= O(1/n^{5/2})$, private coins buy us nothing, at least for one-round (2-message) protocols: In this regime, there exists a reduction from {\em any} one-round $(\varepsilon,\delta)$-differentially private private-coin interactive proof to a one-round public-coin differentially private interactive proof with the same privacy parameters, sample complexity, and communication complexity.  The proof of this result relies on a beautiful conversion from differential privacy to {\em replicability}, due to Bun, Gaboardi, Hopkins, Impagliazzo, Lei, Pitassi, Sivakumar, and Sorrel~\cite{bun2023stabilitystableconnectionsreplicability} (see also~\cite{reproducibility}).  Speaking informally, a replicable algorithm operating on a sample is ``very likely'' to produce the identical output when run on a different sample from the same distribution.  Suppose the verifier chooses its message using a replicable algorithm. Then, by replicability, the prover could instead choose its own sample {\em and generate the same message}, obviating the need for the verifier's message. We do not know if it is possible to completely do away with the verifier-to-prover message without increasing the sample complexity\footnote{Even in the non-private case there is no known separation between Merlin-Arthur proofs and Arthur-Merlin proofs for properties of distributions; see the discussion in Section 1.2 of~\cite{chiesa_et_al:LIPIcs.ITCS.2018.53}}. Replicable algorithms employ a random seed that stays fixed over different draws from the sample, and in the first message in our differentially private AM  protocol the verifier sends the random seed to the prover.  Starting from a differentially private algorithm and applying the transformation in~\cite{bun2023stabilitystableconnectionsreplicability} to get a replicable algorithm, we can use this trick to replace the verifier's message with a random string, yielding an AM protocol.  Details for this and the next result appear in Section~\ref{sec:private-to-public}.

In {\em local} differential privacy, respondents randomize their data before sending them to the analyst. This model is popular in industry as it moves the trust boundary to the client, and indeed Apple, uses local differential privacy, for example, to privately compute statistics about emoji usage on iPhones \cite{apple2017local}. An interactive proof system uses local differential privacy if each element in the sample is treated using local differential privacy.  If the verifier's message in the private-coin interactive proof is $O(1/\sqrt{\log n})$ locally differentially private, then applying in addition a powerful result on privacy amplification via shuffling (a procedure in which the source of a message is disconnected from its contents) due to Feldman, McMillan, and Talwar~\cite{feldman2021hidingclonessimplenearly} yields a new reduction to a one-round public-coin protocol.  The new reduction again preserves the privacy bound and the sample and computational complexities of the private-coin protocol.  This is a far more relaxed privacy parameter regime than for our first reduction, but the local differentially private model is more constrained than the standard, centralized, model.
    
In Section~\ref{sec:independence}, we derive what to our knowledge is the first (private or non-private) single-message (MA) proof protocol for testing whether a distribution is a product distribution.
When $s < N$, which is the parameter regime of interest to us, we prove this bound is tight in its dependence on the $N,\varepsilon$, and, up to $polylog N$ factors, in the distance parameter $\sigma$.  We do this by reducing uniformity testing to independence testing with Boolean-valued attributes, and appealing to a lower bound on the sample complexity of private uniformity testing due to~\cite{ASZ18}.
For the upper bound we derive the first {\em  private identity tester}, meaning that the tester's work is $O(s)$ which can be much less than $N$.
        

\subsection{Additional Related Work}
Distribution testing in the standard (no prover) model was introduced by Goldreich and Ron~\cite{goldreich2011testing} and Batu {\it et al.}~\cite{batu2000testing}.
There is a broad literature on this topic; see~\cite{canonne2020survey} for an excellent survey.  Closest to our work are results on identity and uniformity testing~\cite{goldreich2011testing,Paninski,diakonikolas2016collisionbasedtestersoptimaluniformity,DK16,diakonikolas2016collisionbasedtestersoptimaluniformity,ADK15}, Goldreich's reduction from closeness to uniformity~\cite{goldreich2020uniform}, which we review in Section~\ref{sec:identity-to-uniformity}, and $k$-wise or general independence~\cite{Alonkwise,ADK15}.

There is also a large literature on differentially private distribution testing.  Beyond the above-mentioned work of Aliakbarpour {\it et al.}~\cite{aliakbarpour2017differentiallyprivateidentitycloseness,aliakbarpour2017differentiallyprivateidentitycloseness}, we note the excellent work on tight sample complexity bounds of Acharya {\it et al.}~\cite{ASZ18}, whose lower bounds (in the absence of a prover) provide a natural baseline for evaluating our positive results when a prover is present.  
Canonne {\it et al.}~\cite{CKMUZ20} obtain differentially private identity testers for high-dimensional distribution families, including known-covariance Gaussians and product distributions over $\{\pm1\}^d$,   obtaining sample complexities that match the non-private minimax rate in several parameter regimes.  Subsequent improvements were obtained by Narayanan~\cite{Narayanan22}.
Differentially private distribution testing in the local model, in which each of $s$ parties holds a sample from the distribution, is studied in~\cite{pmlr-v89-acharya19b}.
%

Moving toward the interactive setting in the sense that the tester is provided with additional but not necessarily trusted information, \cite{AAS26} studies independence testing in a model in which the tester receives potentially untrusted predictive distributional information and remains valid in the worst case while improving its sample complexity when the prediction is accurate.  This work extends a line of investigation in augmented algorithms dating back to at least~\cite{canonne2014testing}.
%

We study the private verification of distribution properties via interactive proofs. Interactive proofs were introduced by Babai~\cite{babai1985trading} and Goldwasser, Micali, and Rackoff~\cite{goldwasser1985knowledge,goldwasser89} in the context of proving computational statements about an input that is fully known to the prover and fully accessible to the verifier. In distribution testing, the distribution is akin to the input, but the verifier only has access to it through sampling 

As noted earlier, the study of prover-assisted property testing for distributions was initiated by Chiesa and Gur~\cite{chiesa_et_al:LIPIcs.ITCS.2018.53}.
In addition to specific results discussed above, Herman and Rothblum also considered the computational complexity of the prover and in~\cite{Herman23_doubly_efficient} 
obtained {\em doubly efficient} proofs for label-invariant properties, in which both the prover and the verifier run in polynomial time, improving over their previous work in that the prover runs in polynomial time. Recent work \cite{biswas2025interactive} further shows that by augmenting the verifier with stronger query access (e.g., conditional sampling oracles), the sample complexity of prover-assisted distribution testing can be reduced exponentially to polylogarithmic in the domain size.

Several recent works have further developed the landscape of interactive proofs for distribution testing. \cite{GHR25-farness} study interactive proofs for $\epsilon$-\emph{farness}, the promise problem of distinguishing pairs of distributions that are $\epsilon$-far from pairs that are identical, as well as farness from uniformity. Their results give tradeoffs between the verifier's sample complexity and the honest prover's sample complexity, and are complementary to the label-invariant and quantity-verification problems that motivated this work. \cite{CPSTY25} revisits the Chiesa--Gur framework, showing that modifying the verification task, for example, by switching from verifying uniformity to verifying far-from-uniformity, can lead to striking improvements, including constant verifier sample complexity in some settings; they also initiate the study of multi-prover and zero-knowledge variants for distribution verification. These works explore new non-private proof-system formulations for distribution testing, whereas our focus is on how differential privacy constrains and enables prover-aided verification.

\section{Preliminaries}

The following is the definition of differential privacy, introduced in \cite{Dwork_McSherry_Nissim_Smith_2017}:
\begin{definition}
Datasets $D$ and $D'$ are \emph{adjacent databases} if they differ on the value of at most one element.
\end{definition}

\begin{definition}
An interactive protocol $\mathcal{P}$ is $(\varepsilon,\delta)$-differentially private if for any pair of adjacent datasets $X,X'$, any adversary $A$, and any set of transcripts $E$, we have

$$\Pr[\Interact(\mathcal{P}, A, X) \in E] \le e^\varepsilon \cdot \Pr[\Interact(\mathcal{P}, A, X') \in E] + \delta$$

where $\Interact(\mathcal{P}, A, X)$ denotes the transcript output by the protocol $\mathcal{P}$ given some fixed (potentially adversarial) prover $A$, and an honest verifier that follows the protocol with sample $X$.
\end{definition}

One common DP mechanism is the Laplace mechanism. We say that a function $f$ has \emph{global sensitivity} $\Delta$ if on any adjacent datasets $X,X'$, $\left|f(X) - f(X')\right| \le \Delta$. It is known that the mechanism which releases $f(X) + \Lap\left(0,\frac{\Delta}{\varepsilon}\right)$ is $(\varepsilon,0)$-DP.

However, in cases where the global sensitivity is large while "most" datasets $X$ satisfy $\left|f(X) - f(X')\right| \le \Delta'$ for all adjacent datasets $X'$ and some $\Delta' \ll \Delta$, we would want to instead add noise scaled to $\Delta'$. We can do this with a technique called \emph{Propose-Test-Release} \cite{dwork2009differential} in which we propose a bound on the sensitivity, privately test whether this bound holds, and release the noisy estimate using noise scaled to the proposed bound rather than the global sensitivity if the test is passed.

In \emph{Interactive Proofs of Proximity}, $\mathcal{P}$ is an interaction between a prover and a verifier. The verifier wishes to determine whether a distribution $\mathcal{D}$, which they have sample access to, satisfies some property $P$. This protocol has proximity parameters $\sigma_1 < \sigma_2$ and satisfies:

\begin{itemize}
    \item \textbf{Completeness:} If the prover is honest and there exists some distribution $Q \in P$ such that $D_{TV}(\mathcal{D},Q) \le \sigma_1$, then the verifier accepts with high probability.
    \item \textbf{Soundness:} If there is no distribution $Q \in P$ such that $D_{TV}(\mathcal{D},Q) \le \sigma_2$, then the verifier rejects with high probability regardless of the prover's strategy.
\end{itemize}
When $\sigma_1>0$ we say the tester is {\em tolerant}.

In this work, we will be working with AM (Arthur-Merlin) protocols, also referred to as \emph{Public Coin Interactive Proofs}. In this setting, the only messages that the verifier is allowed to send to the prover are random bits, and the number of bits cannot depend on the dataset. In particular, these messages are completely independent of samples drawn by the verifier, so we can assume that all interactions occur prior to the samples being drawn \cite{chiesa_et_al:LIPIcs.ITCS.2018.53}.

We note that this is analogous to the public coin setting in property verification of functions, in which we have no secret information and the prover can see all the random coins we flip \cite{Goldwasser86}. In this setting, for any protocol, the prover getting every random coin we flip is sufficient for the prover to compute any message we could have sent them. When verifying properties of functions, there is no gap between the public coin and private coin setting \cite{Goldwasser86}. In contrast, there is provably a significant gap between public coin and private coin protocols for distribution property verification \cite{chiesa_et_al:LIPIcs.ITCS.2018.53}.

When reducing DP private coin protocols to DP AM protocols, we will use the following definitions to convert from DP to the form that we need for the reduction:

\begin{definition}
    (Max-Information \cite{dwork2015generalizationadaptivedataanalysis}). Let $X$ and $Z$ be jointly distributed random variables over the domain $(\mathcal{X},\mathcal{Z})$. The $\beta$-approximate max-information between $X$ and $Z$, denoted $I^\beta_\infty(X;Z)$, is defined as

    $$I^\beta_\infty(X;Z) = \log \sup_{\mathcal O\subseteq\mathcal X\times\mathcal Z; \Pr((X,Z) \in\mathcal O)>\beta} \frac{\Pr\left[(X,Z) \in\mathcal O\right]-\beta}{\Pr\left[X \otimes Z \in\mathcal O\right]}$$
    We say that an algorithm $\mathcal{A}:\mathcal{X}^n \rightarrow \mathcal{Y}$ has $\beta$-approximate max-information of $k$, denoted as $I^\beta_\infty(X;\mathcal{A}(X)) \le k$, if for every distribution $\mathcal{S}$ over elements of $\mathcal{X}^n$, we have $I^\beta_\infty(\mathbf{X};\mathcal{A}(X)) \le k$ when $\mathbf{X}\sim \mathcal{S}$. We denote by $I^\beta_{\infty,P}(\mathbf{X};\mathcal{A}(X)) \le k$ the same guarantee with the additional requirement that $\mathcal{S}$ be a product distribution.
\end{definition}

\begin{definition}
    (One-way perfect generalization \cite{bun2023stabilitystableconnectionsreplicability}). An algorithm $\mathcal{A}:\mathcal{X}^m \rightarrow \mathcal{Y}$ is said to be $(\beta,\varepsilon,\delta)$-one-way perfectly generalizing if for every distribution $D$ over $\mathcal{X}$, there exists a distribution $Sim_D$ such that with probability at least $1-\beta$ over the draw of an i.i.d. sample $S \sim D^m$, for every output set $O \subseteq \mathcal{Y}$ we have

    $$\Pr[\mathcal{A}(S) \in O] \le e^\varepsilon \Pr_{Sim_D}[O]+\delta.$$
\end{definition}

\begin{definition}
    (Replicability \cite{reproducibility}). Let $D$ be a distribution over domain $\mathcal{X}$. Let $\mathcal{A}$ be a randomized algorithm that takes as input samples from $D$. We say that $\mathcal{A}$ is $\rho$-reproducible if

    $$\Pr_{S,S',r}[\mathcal{A}(S,r) = \mathcal{A}(S',r)] \ge 1-\rho$$
    where $S,S'$ are sets of samples drawn i.i.d. from $D$ and $r$ represents the internal randomness of $\mathcal{A}$.
\end{definition}

Intuitively, if the distribution of outputs is similar when the algorithm is run on two samples drawn i.i.d. from the same distribution $D$, then we can make those output distributions agree on most random seeds $r$.

\paragraph{Miscellaneous Notation} At various points in this paper, we will write $f(x) \lesssim g(x)$ (resp. $\gtrsim, \asymp$) to indicate that two functions $f$ and $g$ satisfy $f(x) = O(g(x))$ (resp. $\Omega, \Theta$).
\section{Technical Overview}

\subsection{Lower Bound for Private Coin DP Interactive Proofs}

Our first result is a lower bound showing that given sufficiently small $(\varepsilon,\delta)$, anything we can do with $(\varepsilon,\delta)$-DP private coin interactive proofs can also be done with $(\varepsilon,\delta)$-DP public coin
protocols. Specifically,

\begin{theorem}
\label{thm: main reduction}
Given a one-round $(\varepsilon,\delta)$-DP interactive proof verifying property $P$ with sample complexity $s$ and communication complexity $c$ with $\varepsilon = O\left(\frac{1}{\sqrt{s}}\right)$ and $\delta = O\left(\frac{1}{s^{5/2}}\right)$, there exists a one-round $(\varepsilon,\delta)$-DP public coin interactive proof verifying property $P$ with sample complexity $n$ and communication complexity $c$.
\end{theorem}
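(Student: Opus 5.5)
The plan is to turn the verifier's first message into something the prover can compute on its own, up to a public random seed, following the DP $\to$ replicability pipeline of \cite{bun2023stabilitystableconnectionsreplicability} sketched in the introduction.

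\textbf{Setup.} Write the one-round private-coin protocol as follows. The verifier draws $S\sim\mathcal{D}^s$ and computes a first message $m=M(S)$ using private randomness. The prover replies with $a$. The verifier then decides $\mathrm{Dec}(S,m,a)$. The whole transcript is $(\varepsilon,\delta)$-DP, so in particular the map $S\mapsto M(S)$ is $(\varepsilon,\delta)$-DP.

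\textbf{Step 1: DP implies one-way perfect generalization.} Show that $M$ is $(\beta,\varepsilon',\delta')$-one-way perfectly generalizing. The route is through max-information: the standard bound of \cite{dwork2015generalizationadaptivedataanalysis} gives, for product distributions, $I^\beta_{\infty,P}(\mathbf X;M(\mathbf X)) = O(\varepsilon^2 s+\varepsilon\sqrt{s\log(1/\beta)})$ when $\delta$ is small enough (roughly $\delta\lesssim\beta/s$ appears in their statement). \emph{This is exactly where the hypotheses $\varepsilon=O(1/\sqrt s)$ and $\delta=O(s^{-5/2})$ are used.} They make the max-information $O(1)$ with $\beta$ polynomially small. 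Then use the lemma of \cite{bun2023stabilitystableconnectionsreplicability} converting bounded max-information into one-way perfect generalization, with simulator $Sim_{\mathcal D}$ equal to the marginal law of $M(S)$.

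\textbf{Step 2: one-way perfect generalization implies replicability.} Apply the correlated-sampling transformation of \cite{bun2023stabilitystableconnectionsreplicability}. Using a shared random seed $r$, build $M'(S,r)$ by rejection sampling from $Sim_{\mathcal D}$ weighted by the density of $M(S)$. This gives the following guarantees:
\begin{itemize}
\item $M'(S,r)$ is replicable: $\Pr_{S,S',r}[M'(S,r)=M'(S',r)]\ge 1-\rho$.
\item For most $S$, the output law of $M'(S,\cdot)$ is close in total variation to that of $M(S)$.
\item $M'$ keeps the $(\varepsilon,\delta)$ guarantee, or can be made to by post-processing.
\end{itemize}

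\textbf{Step 3: the public-coin protocol.} The verifier sends only a uniformly random seed $r$. The honest prover, who knows $\mathcal D$, samples its own $S'\sim\mathcal D^s$ and uses $m'=M'(S',r)$ as the message it believes the verifier "would have sent". It then replies with $a$ computed for $m'$, and can include $m'$ itself. The verifier computes $M'(S,r)$ from its own sample and rejects if this differs from the $m'$ the prover used; otherwise it runs $\mathrm{Dec}(S,m',a)$.

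The three properties then go as follows.
\begin{itemize}
\item \emph{Completeness.} This loses only $\rho$ plus the TV slack from Step 2, by replicability.
\item \emph{Soundness.} Any cheating public-coin prover induces a cheating strategy against the original protocol, since the verifier's effective message is $M'(S,r)\approx M(S)$ in distribution. The soundness error increases only by the TV slack.
\item \emph{Privacy.} The transcript is a post-processing of $(r, M'(S,r))$ together with the DP decision, so the $(\varepsilon,\delta)$ parameters carry over. The seed is data-independent.
\end{itemize}
Sample complexity is unchanged. Communication stays $c$ up to the seed, which can be counted as public coins.

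\textbf{Main obstacle.} I expect the hard part to be Step 1 together with the bookkeeping of constants. We must check that the max-information bound genuinely yields $O(1)$ generalization in exactly this regime, with $\delta=O(s^{-5/2})$ enough to absorb the $\delta/\beta$-type losses. We must also check that the rejection-sampling step keeps $\rho$ a small constant without blowing up privacy. My first attempt would use the quantitative theorem in \cite{bun2023stabilitystableconnectionsreplicability} directly, choosing $\beta=\Theta(1/s)$ and verifying that the resulting $\rho$ is a small constant, which suffices because completeness and soundness only need constant error.
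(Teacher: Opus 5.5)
Your overall route is the paper's: DP $\to$ max-information $\to$ one-way perfect generalization $\to$ replicability, then a public seed, a prover who simulates the message on its own sample, and a consistency check. But you have assumed away the step the paper's reduction theorem is really about.

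In a private-coin protocol the decision generally uses the same coins $\omega$ that produced $m$ (e.g.\ the verifier's secret knowledge of how $m$ was assembled from its samples). You write $\mathrm{Dec}(S,m,a)$, and after switching to the correlated-sampling $M'(S,r)$ those coins no longer exist. Feeding the now-public seed to the decision instead would hand the secret to the prover. The paper removes this dependence by replacing the decision with $\mathcal{A}''$, a majority vote over the seeds $r$ consistent with $\mathcal{A}(S,r)=m$. This costs a factor $2$ in both error probabilities, by Markov. Alternatively, you could redraw $\omega$ from its posterior given $(S,m)$. This is exact, because the prover's reply depends on $(S,\omega)$ only through $m$, and it makes your $\mathrm{Dec}(S,m,a)$ without loss of generality. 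But it has to be argued.

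Your soundness sentence also needs more than ``$M'(S,r)\approx M(S)$ in distribution''. The public-coin prover sees $r$, not just $m$, and without replicability that extra information can push its success probability far above that of any private-coin prover. Replicability is what fixes this: for most $r$ there is a value $\mu_r$ with $\Pr_S[M'(S,r)\neq\mu_r]$ small. So any other $m'$ almost never passes the check, and conditioning on passing barely changes the law of $S$. That is the role of the paper's Case~1/Case~2 comparison; done carefully, it costs an additive $O(\rho)$.

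Two further steps fail as proposed. First, $\beta=\Theta(1/s)$ is the wrong choice. With $\varepsilon\asymp 1/\sqrt s$, the bound $O(\varepsilon^2 s+\varepsilon\sqrt{s\log(1/\beta)})$ becomes $\Theta(\sqrt{\log s})$, and Lemmas~\ref{lem:max_info_generalizability} and~\ref{lem:generalizability_replicability} then give a vacuous $\rho$. You need the max-information to be a small constant, hence $\beta$ a small constant. Lemma~\ref{lem:DP_max_info} takes $\varepsilon=\rho/\sqrt{8s\log(1/\rho)}$ and $\beta=\rho^3$, and its side condition $\delta\le\varepsilon\rho^6/s^2$ is exactly where $\delta=O(s^{-5/2})$ comes from.

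Second, privacy does not follow by post-processing, because $(r,M'(S,r))$ is not an $(\varepsilon,\delta)$-DP release. For a fixed seed, $M'(S,r)$ is a deterministic function of $S$. Since $M'(S,\cdot)$ has (essentially) the law of $M(S)$, neighbours satisfy $\Pr_r[M'(S,r)\neq M'(S',r)]\ge\TV(M(S),M(S'))$. For an $(\varepsilon,\delta)$-DP message this can be of order $\varepsilon\gg\delta$. Now let a prover send $m'=M'(S,r)$ together with an answer that $\mathrm{Dec}$ accepts with constant probability. On the seeds where the two outputs differ, the verifier accepts with constant probability on $S$ and never on $S'$. So the consistency check alone adds $\Theta(\varepsilon)$ to $\delta$.

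The paper's protocol contains the same check, and its proof does not argue privacy at all; it stops once replicability is obtained. So the claimed $(\varepsilon,\delta)$ guarantee is established by neither argument.
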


These are the values of $(\varepsilon,\delta)$ which allow us to achieve $\rho$-replicability for constant $\rho$ through the transfer theorems of \cite{bun2023stabilitystableconnectionsreplicability}. When we have $\rho$-replicability for constant $\rho$, this tells us that given fixed randomness, the message sent from verifier to prover will be the same with high probability over the choice of sample. As such, we can simulate this interaction by just sending the randomness and letting the prover figure out (with high probability) what the message would have been. The prover sends the verifier both what they think the verifier's message should've been as well as their response. If the prover's guess matches what the verifier would have sent, then the verifier can make their decision just as they did in private coin protocol.  If the prover's guess does not match what the verifier would have sent, they reject.

The replicability parameter $\rho$ bounds the additional probability of rejection for completeness, and the probability of acceptance when the property is not satisfied at most doubles, corresponding to a de-randomization step that removes the dependence of the decision on the randomness used to generate the verifier's message. After we remove this dependence, the prover has no information they can use to increase the probability of acceptance by the verifier.

Furthermore, applying guarantees of privacy amplification by shuffling from  \cite{feldman2021hidingclonessimplenearly} allows us to achieve a similar guarantee for $\varepsilon$-local DP when $\varepsilon = O\left(\frac{1}{\sqrt{\log s}}\right)$, a much more relaxed privacy guarantee. Specifically, applying shuffling to  an $\varepsilon$-local DP mechanism for $\varepsilon = O\left(\frac{1}{\sqrt{\log s}}\right)$ gives us exactly the $(\varepsilon,\delta)$ above that achieves $\rho$-replicability for constant $\rho$.

\begin{remark}
The parameter realm $\varepsilon < \frac{1}{\sqrt{s}}$ is of great interest. In the context of counting queries, $\frac{1}{\sqrt{s}}$-DP can be achieved by adding noise of order $O(\sqrt{s})$, comparable to the sampling error.  By the advanced composition theorem~\cite{dwork2010boosting}, taking just a slightly smaller $\varepsilon = O\left(\frac{1}{\sqrt{s \ln (1/\delta)}}\right)$ defeats reconstruction attacks~\cite{dinurnissim} that use $\Theta(s)$ queries~\cite{dwork2007price,dwork2008new}, ensuring $(O(1), \delta)$-differential privacy overall.
\end{remark}

\noindent
Here are some examples of what Theorem~\ref{thm: main reduction} gives us:
\begin{enumerate}
    \item  To beat the $\tilde{O}(N^{2/3})$ sample complexity of \cite{herman2024public} for verifying label-invariant properties, we would have the following two constraints on $s$:
    $2 \le 1/\varepsilon^2$ (this comes from $\eps \le 1/\sqrt{s}$ to apply our reduction) and
$s \le N^{2/3-\gamma}$ for some suitably small constant $\gamma$ (to beat the sample and communication complexity of \cite{herman2024public}).
Then if we have a private coin $(\varepsilon,\delta$)-DP protocol for property $P$, then there is a public coin DP protocol with the same privacy guarantees and communication and sample complexities.

    \item To beat our protocol for label-invariant properties in Section~\ref{sec:label_invariant}, it suffices to beat the trivial $O(N)$ communication protocol in which the prover sends a full description of the distribution $D$ to the verifier. This is because our sample and communication complexity bound of $\tilde{O}((N/\varepsilon)^{2/3})$ gives us a sample complexity of $\Omega(N)$ when $\varepsilon \le 1/\sqrt{s}$.

    \item Public coin lower bounds appear (to us) to be easier to prove than private coin lower bounds. See for example the \cite{chiesa_et_al:LIPIcs.ITCS.2018.53} lower bound which uses simulation to show that public coin protocols can do at most quadratically better than testers, where this simulation only works because of the lack of a secret held during the communication process. This result also gives us a way to convert lower bounds on DP AM protocols to lower bounds on DP private-coin protocols: if the communication complexity implied by an (hypothetical) lower bound is smaller than $N$ for $\varepsilon < 1/\sqrt{s}$, then this would imply the existence of a matching lower bound in the private coin setting.
\end{enumerate}

\subsection{Independence Testing}

The next result is a matching upper and lower bound for independence testing. The upper bound is achieved by a non-interactive proof system in which the prover sends the marginal distribution to the verifier, and then the verifier runs an identity tester to determine whether the distribution matches the unique product distribution defined by the marginals they were sent.

If the distribution is a product distribution and the prover is honest, then the marginals sent to the verifier are the true marginals of the distribution, and they run an identity tester against the true distribution, which will accept with high probability.

If the distribution is $\sigma$-far from any product distribution, it must be $\sigma$-far from the distribution defined by the marginals sent by the prover, regardless of what marginals the prover sends.

As such, independence testing can be solved with the same sample complexity as identity testing. Applying known bounds for testing identity with DP, we get the following theorem:

\begin{theorem}
    There exists an $\varepsilon$-DP non-interactive proof protocol $\mathcal{A}$ for verifying independence that has $O\left(\sqrt{N}/(\sigma\sqrt{\varepsilon})+\sqrt{N}/\sigma^2\right)$ sample complexity, $O\left(\sqrt{N}\log N/(\sigma\sqrt{\varepsilon})+\sqrt{N}\log N/\sigma^2\right)$ verifier runtime, and $O(\sum_{i=1}^d |A_i|)$ communication complexity with the following completeness and soundness guarantees:

    \textbf{Completeness:} If $D$ is a product distribution over $A_1 \times A_2 \times \dots \times A_d$ and the prover is honest, then the verifier accepts with probability at least $0.75$.\\

    \textbf{Soundness:} If $D$ is $\sigma$-far from any product distribution over $A_1 \times A_2 \times \dots \times A_d$, the verifier accepts with probability at most $0.25$ for any prover.
\end{theorem}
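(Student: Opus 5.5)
The protocol is the one sketched just before the statement. The prover sends claimed marginals $\hat p_1,\dots,\hat p_d$, with $\hat p_i$ a distribution over $A_i$; this takes $O(\sum_i |A_i|)$ numbers, suitably discretized. The verifier forms the product distribution $Q=\hat p_1\otimes\cdots\otimes\hat p_d$ over the domain of size $N=\prod_i|A_i|$. It then runs an $\varepsilon$-DP identity tester against $Q$ with proximity parameter $\sigma$, and accepts or rejects accordingly.

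Privacy is immediate. The only message depends on the prover and not on the sample, so the transcript's dependence on the sample is entirely through the tester's decision. I will invoke a known $\varepsilon$-DP identity tester with sample complexity $O(\sqrt N/(\sigma\sqrt\varepsilon)+\sqrt N/\sigma^2)$ and error at most $1/4$. An example is the tester of \cite{ASZ18}, or the generic reduction of \cite{aliakbarpour2017differentiallyprivateidentitycloseness}. The DP guarantee then holds for every fixed $Q$, and hence for every adversarial prover.

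\textbf{Completeness.} If $D$ is a product distribution and the prover is honest, then $Q=D$ exactly, and the tester accepts with probability at least $3/4$. The only care needed is in discretizing the marginals. The prover can send them to precision $\mathrm{poly}(\sigma/N)$, so $d_{TV}(Q,D)$ is tiny. Alternatively, I can use a tolerant-at-tiny-distance version of the tester, or note that the verifier only needs $Q$ up to the precision the tester uses.

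\textbf{Soundness.} If $D$ is $\sigma$-far from every product distribution, then in particular $d_{TV}(D,Q)>\sigma$ for whatever product $Q$ the prover induces. The tester therefore rejects with probability at least $3/4$, regardless of what the prover sends. A malformed message, meaning anything that is not a valid set of distributions, is rejected outright.

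The main obstacle is the runtime claim. The verifier should run in time $\tilde O(s)$, not $N$, because $Q$ has support size $N$ and cannot be written down. I would use a collision/chi-square style identity tester that touches $Q$ only through:
\begin{itemize}
\item the values $Q(x)$ at sampled points $x$, each computable as $\prod_i \hat p_i(x_i)$ in time $O(d)\le O(\log N)$;
\item global quantities such as $\|Q\|_2^2=\prod_i\|\hat p_i\|_2^2$, computable from the marginals.
\end{itemize}
If the chosen tester requires flattening or bucketing $Q$, as in Goldreich's reduction from identity to uniformity (Section~\ref{sec:identity-to-uniformity}), I would implement the flattening map lazily. Each sample $x$ is mapped to a random sub-element among $\lceil N Q(x)\rceil$-style copies, which needs only $Q(x)$. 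The filter mixing with uniform is likewise sampleable in $O(\log N)$ time. This simulates the DP uniformity tester on the transformed samples in $O(s\log N)$ time. The transformation is applied sample-by-sample, so neighboring datasets map to neighboring datasets and $\varepsilon$-DP is preserved. The constant loss in distance from the reduction only affects constants, which yields the stated bounds.
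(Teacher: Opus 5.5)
Your skeleton matches the paper's. The prover sends marginals, and the verifier forms the product $Q$ and runs a private identity test against it. Completeness holds because $Q=D$, soundness because every valid message induces a product $Q$ with $\TV(D,Q)\ge\sigma$, and privacy because the message precedes the sample.

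\textbf{The gap is the runtime}, which is where the paper's proof does its real work.

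First, the off-the-shelf testers you cite do not give the statement. The generic reduction of \cite{aliakbarpour2017differentiallyprivateidentitycloseness} pays a $1/\varepsilon$ factor, giving $O(\sqrt N/(\sigma^2\varepsilon))$ samples. The testers of \cite{ASZ18} and Algorithm~1 of \cite{aliakbarpour2017differentiallyprivateidentitycloseness} do get the right sample complexity, but only through Goldreich's reduction, which costs time $\Omega(N)$.

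Second, your lazy flattening does not remove that cost. Splitting $x$ into $m_x=\lfloor 6Nq_1(x)\rfloor$ labels, and routing $x$ to $\bot$ with probability $1-\theta_x$, needs only $Q(x)$. But a sample landing in $\bot$ must go to a uniform label in a block of size $6Nb$, where $b=1-\frac{1}{6N}\sum_{x}\lfloor 6Nq_1(x)\rfloor$. This is a sum over all $N$ points, and the floors prevent it from factoring over the marginals. Without $b$ you cannot sample the residual labels, and with a wrong block size the image of $D=Q$ is no longer $U_{6N}$. A ceiling-based split has the same problem: the alphabet size $\sum_x\lceil\cdot\rceil$ is unknown and the per-label masses are non-uniform.

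Your other option, a chi-square-type statistic using $Q(x)$ at sampled points and $\|Q\|_2^2$, has sensitivity scaling like $1/Q(x)$ (or, after flattening, with the largest bucket count). So Laplace noise does not give pure $\varepsilon$-DP at the target sample size, and the paper flags exactly this obstruction.

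What is missing is an approximate reduction plus a tester that tolerates its error. The paper's steps are:
\begin{enumerate}
\item Estimate $b$ to within $\eta=c_0\sigma^2$ from $O(\eta^{-2})$ simulated draws of $q_1$. This costs no privacy.
\item Privately estimate $a=p_2(\bot)$ from $O(\eta^{-2}+1/(\eta\varepsilon))$ fresh samples with Laplace noise, and reject if $|\widetilde a-\widetilde b|\ge 4\eta$. This check is needed for soundness, since $\TV(R_0,\widehat R)\le(\overline b-b)+|a-b|$.
\item Split $\bot$ into an overestimated $\overline m=\lceil 6N(\widetilde b+\eta)\rceil$ labels and embed everything in an alphabet of size $M=6N+O(N\eta)$.
\item Run the private unique-elements test against $U_M$.
\end{enumerate}

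A generic tolerant tester would be too expensive here. The unique-elements test works because its robustness is one-sided: enlarging the residual block can only raise $\E K$. Moreover $0\le \mu_M-\E K\le s^2(M-6N)/(6N)^2=O(\eta s^2/N)$, which fits under the acceptance margin of order $s^2\sigma^2/N$. On the soundness side, $\TV(\widehat R,U_M)\ge \sigma/3-11\eta-O(1/N)\ge\sigma/4$.

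Finally, the extra $O(\sigma^{-4}+1/(\sigma^2\varepsilon))$ samples are absorbed into the stated bound only in the sparse regime $\sqrt N/(\sigma\sqrt\varepsilon)+\sqrt N/\sigma^2\le c_*N$. That same regime is what makes $\sqrt N/(\sigma\sqrt\varepsilon)$ the correct privacy term in \cite{ASZ18}. Your argument needs this assumption too.
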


For the lower bound, we reduce uniformity testing to independence testing. Uniformity is known to have the same sample complexity as identity, so this gives us a matching lower bound up to log factors.

\begin{theorem}
    Suppose $A$ is a protocol for verifying independence with $s$ samples from a distribution $D$ over the space $A_1 \times A_2 \times \dots \times A_n$ and $C$ communication complexity. Further, suppose $A$ has the following completeness and soundness guarantees:

\textbf{Completeness}: If $D$ a product distribution and the prover is honest, then the verifier will accept with probability at most $0.9$.

\textbf{Soundness}: If $D$ is at least $\sigma$-far from a product distribution, then the verifier will accept with probability at most $0.1$ for any prover.

Then $s \ge \tilde{\Omega}\left(\frac{\sqrt{N}}{\sigma^2}\right)$
\end{theorem}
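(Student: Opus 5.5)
The plan is to reduce uniformity testing over a domain of size $M$ (with $M \approx N$ up to a log factor) to independence testing over Boolean attributes, so that the domain is $\{0,1\}^n$ with $N = 2^n$. Then appeal to the known lower bound $\Omega(\sqrt{M}/\sigma^2)$ for uniformity testing. The prover must be neutralized. In the reduction, the distributions we feed to the independence verifier are either exactly uniform on $\{0,1\}^n$ (a product distribution) or far from every product distribution. In the yes case the honest prover's strategy is a fixed, known function of the uniform distribution, and the verifier can simulate it; for instance, it can simulate the prover's messages by running the honest prover on the uniform distribution. In the no case soundness holds against every prover, in particular against this simulated one. The simulated protocol is therefore a standalone tester with the same sample complexity $s$, and the uniformity lower bound applies to it. (I read the completeness condition as ``accepts with probability at least $0.9$''.)

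\textbf{Step 1 (embedding).} Given samples from $D$ on $[M]$, choose a map $\phi:[M]\to\{0,1\}^n$ such that $D=U_{[M]}$ pushes forward to the uniform distribution on $\{0,1\}^n$. Take $M=2^n$ and $\phi$ a bijection, so that $\phi(D)$ is uniform, hence product, when $D$ is uniform.

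\textbf{Step 2 (farness transfer).} Show that if $D$ is $\sigma$-far from uniform, then $\phi(D)$ is $\Omega(\sigma)$-far (or $\tilde\Omega(\sigma)$-far) from every product distribution. This fails for a plain bijection: a $\sigma$-far $D$ could itself be a non-uniform product distribution. So I would not reduce from arbitrary far distributions. Instead I would use the hard instance family behind the lower bound: Paninski-type perturbations $U(1\pm\sigma z_i)$ on pairs, as in \cite{ASZ18} for the private case. The task is to show that a random member of this family is, with high probability, $\Omega(\sigma)$-far from every product distribution on $\{0,1\}^n$.

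The argument: any product distribution $Q$ that is $c\sigma$-close to $P_z$ must have marginals close to those of $P_z$. For random signs $z$ the marginals of $P_z$ are within roughly $\sigma/\sqrt{N}$ of $1/2$. Hence $Q$ is $o(\sigma)$-close to uniform, and since $P_z$ is $\sigma$-far from uniform, the triangle inequality gives a contradiction. Making this quantitative is the main obstacle. Controlling the distance between $Q$ and uniform from marginal deviations costs about $\sqrt{n}\cdot\max_j|q_j-1/2|$ via Hellinger tensorization, and I expect that bound to lose only $\mathrm{polylog}\, N$ factors, which is the source of the $\tilde\Omega$. If needed I would rescale $\sigma$ by a $1/\log N$ factor, or use a union bound over a net of product distributions, with concentration over the random $z$.

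\textbf{Step 3.} The resulting tester distinguishes uniform from the random perturbation family with $s$ samples. The information-theoretic argument behind the uniformity lower bound, via the chi-square or indistinguishability bound for the Paninski mixture, then forces $s=\Omega(\sqrt{N}/\sigma^2)$ up to the log losses from Step 2. This yields $s\ge\tilde\Omega(\sqrt N/\sigma^2)$. The privacy version of the lower bound in \cite{ASZ18} carries over identically.
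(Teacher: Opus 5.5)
Your argument is correct in outline, but it takes a different route from the paper's.

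\textbf{The paper's route.} The paper builds a black-box reduction. It pads $D$ to $[2^d]$, maps it bijectively to $\{0,1\}^d$, and runs $A$. It then adds a noisy check that every empirical marginal is within $O(\sqrt{\log d/s})$ of $1/2$. That extra check is exactly what handles the obstacle you noticed: a far-from-uniform $D$ that is itself a non-uniform product. Soundness splits on the product $D^{(0)}$ with the same marginals as $D''$. Either $D^{(0)}$ is far from uniform, so some marginal is off and the check fires. Or $D''$ is far from $D^{(0)}$, and a KL chain-rule plus Pinsker bound (using marginals in $[1/3,2/3]$) gives $\TV(D'',D^{(0)})\le(1+\frac{3\sqrt d}{2})\TV(D'',Q^*)$, so $A$ rejects. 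This yields a uniformity verifier on \emph{all} inputs. After the ``prover cannot help for uniformity'' simulation, any uniformity lower bound, private or not, therefore applies verbatim. The price is an additive $\tilde O(1/\sqrt s)$ term in the farness threshold, which the paper must argue away at the end.

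\textbf{Your route.} You instead open the lower-bound box and use only Paninski-type instances, whose marginals are already near $1/2$. So you need no marginal check and incur no $1/\sqrt s$ term. Your Hellinger-tensorization step plays the role of the paper's KL/Pinsker lemma, comparing $Q$ to $U$ rather than to $D^{(0)}$. The cost is dependence on the structure of one specific hard family. For the private analogue you would have to verify the near-uniform-marginal property for the instances of \cite{ASZ18}.

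\textbf{One quantitative fix.} Write $\sigma_u$ for the uniformity distance of $P_z$. For constant $c$, the claim ``$Q$ is $o(\sigma_u)$-close to uniform'' is false: marginal errors of $c\sigma_u$ on $n$ coordinates allow $\TV(Q,U)$ of order $c\sigma_u\sqrt n$. Take $c\asymp 1/\sqrt n$ instead. Then, for typical $z$, $\TV(Q,U)\le 2\sqrt n\,c\sigma_u+O(\sigma_u/\sqrt N)<(1-c)\sigma_u$. This contradicts $\TV(P_z,U)\ge\sigma_u$ for every product $Q$ at once, so no net is needed. Hence typical instances are $\Omega(\sigma_u/\sqrt{\log N})$-far from every product. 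Choosing $\sigma_u\asymp\sigma\sqrt{\log N}$ yields $s=\Omega(\sqrt N/(\sigma^2\log N))$. This is valid when $\sigma\sqrt{\log N}\lesssim 1$, which is the same regime restriction the paper's reduction needs.
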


This is proved by mapping the uniform distribution onto $\{0,1\}^n$ and noting that a distribution is uniform iff it has uniform marginals and it is a product distribution. Verifying these two properties does not immediately solve uniformity as the closest product distribution to a distribution with uniform marginals may not be the uniform distribution. However, the closest product distribution in KL divergence to any distribution is the product distribution with the same marginals, and we can use Pinsker's inequality to relate total variation distance to KL divergence. Doing this achieves an $O(\sqrt{d})$ gap between the distance to the nearest product distribution and the distance to the product distribution with the same marginals, provided the marginals are close to uniform. Since $d = \log N$, this ends up only contributing log factors and we get a matching lower bound.

\subsection{Generic Reduction for \texorpdfstring{$\varepsilon$}{Epsilon}-DP AM Protocols}

The next result is a generic reduction that converts any non-private AM protocol to an $\varepsilon$-DP AM protocol with an $O(1/\varepsilon)$ blowup. 

\begin{theorem}
Suppose AM distribution tester $T$ can test whether some distribution $D$ $\xi$-approximately satisfies $\mathcal{P}$ for some property $\mathcal{P}$ with probability $\frac{5}{6}$. Suppose further that $T$ has sample complexity $s$ and communication complexity $c$. There exists an $\varepsilon$-DP AM tester $T'$ that can test whether $D$ $\xi$-approximately satisfies $\mathcal{P}$ with probability $\frac{2}{3}$ using sample complexity $\left\lceil\frac{6}{\varepsilon}\right\rceil s$ and communication complexity $c$.
\end{theorem}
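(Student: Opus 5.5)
Following the reduction of Aliakbarpour et al.\ mentioned in the introduction, the plan is subsampling plus majority-free sampling, adapted to the AM setting. Set $k = \lceil 6/\varepsilon\rceil$. The new verifier $T'$ draws $ks$ samples and partitions them into $k$ disjoint blocks $B_1,\dots,B_k$ of size $s$. It also picks a uniformly random index $j\in[k]$. Since $T$ is public coin, we may assume all of $T$'s random messages are sent before any samples are drawn. So $T'$ plays exactly the public-coin interaction of $T$ with the prover: it sends the same random strings, receives the same prover messages, and has communication complexity $c$. Note that $j$ need not be sent, and the messages are independent of the data. At the end, $T'$ runs $T$'s decision procedure on block $B_j$ together with the transcript. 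It then outputs that decision, except that with probability $1/2$ (say; the exact constant is tuned below) it outputs a uniformly random bit.

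\textbf{Privacy.} The transcript prefix (verifier coins and prover responses) has a distribution independent of the sample for any adversarial prover $A$. So it suffices to bound, for each fixed transcript, the ratio of acceptance probabilities on adjacent datasets $X,X'$ of size $ks$. These differ in one element, which lies in a single block $B_i$. The decision depends on that element only if $j=i$, which happens with probability $1/k\le \varepsilon/6$. Writing $p,p'$ for the acceptance probabilities, before mixing we get $|p-p'|\le 1/k$. After mixing with a fair coin with weight $1/2$, both probabilities lie in $[1/4,3/4]$. Hence the multiplicative ratio is at most $1+\frac{1/(2k)}{1/4} = 1+2/k \le 1+\varepsilon/3\le e^{\varepsilon}$, and similarly for rejection. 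Combined with the data-independence of the prefix, this gives pure $\varepsilon$-DP of the full transcript.

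\textbf{Completeness and soundness.} Each block is an i.i.d.\ sample of size $s$ from $D$, and the coins are independent of it. So $T$ run on $B_j$ succeeds with probability $\ge 5/6$ against the honest prover, and fails with probability $\le 1/6$ against any prover. After mixing with weight $\lambda$, the success probability is $(1-\lambda)\cdot 5/6 + \lambda/2$. With $\lambda=1/2$ this gives only $2/3$, which is exactly the required bound.

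\textbf{Main obstacle.} This last step, the constant bookkeeping, is the one to watch. The mixing weight must be large enough to keep probabilities bounded away from $0$ and $1$ for the multiplicative bound. It must also be small enough to retain success $2/3$. With $\lambda=1/2$ both work out: the probabilities land in $[1/4,3/4]$ and the success probability is $\frac12\cdot\frac56+\frac14=\frac23$, consistent with the $6/\varepsilon$ factor in the statement. The verification that the ratio $1+2/k\le e^\varepsilon$ for $k\ge 6/\varepsilon$ is routine.
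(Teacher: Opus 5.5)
Your proposal is correct and is essentially the paper's argument: complete the public-coin interaction before sampling, apply $T$'s decision to a random size-$s$ portion of the $\lceil 6/\varepsilon\rceil s$ samples so that any one element affects the outcome with probability at most $\varepsilon/6$, and randomize the output so that acceptance and rejection probabilities stay bounded away from $0$, which turns the additive gap into a multiplicative one. The differences are only in constants: the paper takes a random subsample and flips the answer with probability $1/6$ (probabilities at least $1/6$, ratio $1+\varepsilon$, success at least $25/36$), whereas your fixed blocks with fair-coin mixing at weight $1/2$ give probabilities in $[1/4,3/4]$, ratio $1+\varepsilon/3$, and success exactly $2/3$.
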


The proof of this uses the same proof technique as a similar result in \cite{aliakbarpour2017differentiallyprivateidentitycloseness, aliakbarpour2017differentiallyprivateidentityclosenessArxiv}\footnote{As mentioned before, the reduction is in the full version \cite{aliakbarpour2017differentiallyprivateidentityclosenessArxiv}.} which converts non-private testers (without a prover) to $\varepsilon$-DP testers with an $O(1/\varepsilon)$ blowup. The key idea to this result is noting that since messages in AM protocols cannot depend on the sample, we can first complete all of the communication and then draw the sample and treat the decision part of the protocol as a tester, thus allowing us to conclude our proof in the same way as they did.

\subsection{Improved Bounds for Label-Invariant Properties with Propose-Test-Release}

The next result improves upon this generic result by showing that we do not need to pay an $O(1/\varepsilon)$ factor in the sample complexity when verifying label-invariant properties. Instead, we adopt the propose-test-release framework to modify the AM protocol of \cite{herman2024public}, described in Algorithm~\ref{alg:non-private-label-invariant} for privacy, achieving a multiplicative blowup of $O(\varepsilon^{2/3})$ rather than $O(\varepsilon)$.

Specifically, we introduce a protocol, Algorithm~\ref{alg:label-invariant}, which modifies Algorithm~\ref{alg:non-private-label-invariant} by adding Laplace noise scaled to the sensitivity of each check involving the sample.  However, one of these checks counts 3-way collisions among samples, which has sensitivity depending on the dataset, so we must use propose-test-release to bound the sensitivity of this check. Additionally, the introduction of DP means the weight of heavy elements must also be measured with DP, and as such, the protocol must be able to handle elements with weight up to $\log(1/\delta)/\varepsilon s$ for $(\varepsilon,\delta)$-DP and sample size $s$ as opposed to $1/s$ in the non-private protocol. This change results in a non-trivial modification in the technical analysis of the DP protocol. 

\begin{theorem}
    Given privacy parameters $\eps,\delta > 0$ and a proximity parameter $\sigma > 0$ that are not too small (i.e. satisfying $\log(1/\delta)/\eps\sigma \le N^{o(1)}$), along with access to $s = \tO(N^{2/3}/\eps^{2/3}) \cdot \mathrm{poly}(\log(1/\delta)/\sigma)$ samples from $D$, \Cref{alg:label-invariant} satisfies the following completeness and soundness conditions:
    \begin{itemize}
        \item \textbf{Completeness:} If the prover is honest, then with probability at least $3/4$, the verifier accepts and outputs tagged samples $(S_1, \pi_1), \ldots, (S_s, \pi_s)$ satisfying
        \[
            \frac{1}{s}\sum_{i \in [s] : \pi_i \ge \frac{\sigma}{N}} \biggl(1 - \min\Bigl\{\frac{\pi_i}{D(S_i)}, \frac{D(S_i)}{\pi_i}\Bigr\}\biggr) \le O(\sigma^3).
        \]
        and
        \[
            \frac{1}{s}\sum_{i\in[s]:\pi_i<\frac{\sigma}{N}} D(S_i) \le O\Bigl(\frac{\sigma}{N}\Bigr).
        \]
        \item \textbf{Soundness:} If the prover is dishonest, then with probability at least $3/4$, the verifier either rejects or outputs tagged samples $(S_1, \pi_1), \ldots, (S_s, \pi_s)$ satisfying
        \[
            \frac{1}{s}\sum_{i \in [s] : \pi_i \ge \frac{\sigma}{N}} \biggl(1 - \min\Bigl\{\frac{\pi_i}{D(S_i)}, \frac{D(S_i)}{\pi_i}\Bigr\}\biggr) \le O(\sigma).
        \]
        and
        \[
            \frac{1}{s}\sum_{i\in[s]:\pi_i<\frac{\sigma}{N}} D(S_i) \le O\Bigl(\frac{\sigma}{N}\Bigr).
        \]
    \end{itemize}
    (Note that the right-hand side of the first inequality in the completeness guarantee is $O(\sigma^3)$, but the corresponding quantity in the soundness guarantee is $O(\sigma)$. The sets of inequalities are otherwise identical up to suppressed constant factors.)
\end{theorem}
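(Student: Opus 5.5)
The plan is to follow the structure of Herman's non-private AM protocol (Algorithm~\ref{alg:non-private-label-invariant}) check by check. For each check in the private version (Algorithm~\ref{alg:label-invariant}), I would show three things. First, the Laplace noise added at that check is dominated by the sampling fluctuation the non-private analysis already absorbs. Second, the heavy-element threshold, now raised to $\log(1/\delta)/(\eps s)$, still leaves the light part with enough collisions. Third, the propose-test-release step for the $3$-way collision count passes with high probability on typical samples. Completeness and soundness then reduce to Herman's guarantees with inflated error terms, and privacy follows from basic composition together with the PTR guarantee.

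The key steps, in order, are as follows.

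\textbf{(1) Privacy.} Each statistic released (the heavy-element counts, the pair-collision count $\pair$ and the triple-collision count $\triple$) has a sensitivity bound. Pair counts restricted to the tagged buckets have sensitivity $O(\text{max bucket load})$, which is handled by truncation. For $\triple$ the local sensitivity is the maximum number of pairs colliding with a single sample. I propose the bound $\Delta' = \mathrm{polylog}$ times its expectation, test it by checking with Laplace noise the distance to a dataset of larger local sensitivity, and release. Composition then gives $(\eps,\delta)$-DP, because the prover's messages are independent of the sample in an AM protocol.

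\textbf{(2) Concentration.} Using $s = \tO(N^{2/3}/\eps^{2/3})\,\mathrm{poly}(\log(1/\delta)/\sigma)$, I would show that the collision statistics have expectations of order $s^2/N$ and $s^3/N^2$ within each bucket of weight-class $\pi$. The standard deviations of these statistics then exceed the noise scales $1/\eps$ and $\Delta'/\eps$. The condition $s^3/N^2 \gtrsim \mathrm{polylog}/\eps$ is exactly what forces $s \gtrsim (N/\eps)^{2/3}$. The assumption $\log(1/\delta)/(\eps\sigma)\le N^{o(1)}$ keeps the heavy threshold and the polylog factors within the stated budget.

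\textbf{(3) Completeness.} With an honest prover, all noisy checks pass with probability $\ge 3/4$ by Chebyshev plus Laplace tails. The tags $\pi_i$ equal the true probabilities except for elements misclassified by the noisy heavy test. Misclassified elements lie within a $(1\pm\sigma^3)$ factor or fall in the $\pi_i<\sigma/N$ region, giving the $O(\sigma^3)$ and $O(\sigma/N)$ bounds.

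\textbf{(4) Soundness.} I would mirror Herman's argument. If the tags are off by more than $O(\sigma)$ on average, then the true collision counts in some bucket deviate from the prover-implied counts by a margin of order $\sigma\cdot$ (expectation). By step~(2) this margin exceeds noise plus sampling error, so the verifier rejects.

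The main obstacle I expect is step~(1)/(2) for $\triple$. Local sensitivity depends on the heaviest elements present, so the PTR test must pass on typical honest samples and still hold for elements of weight up to $\log(1/\delta)/(\eps s)$, which is strictly larger than the non-private $1/s$ threshold. My first attempt would be to bound the maximum multiplicity of any non-heavy element by $O(\log(1/\delta)/\eps)$ via a Chernoff bound. I would then propose $\Delta' = O(\log^2(1/\delta)/\eps^2)$, check whether the resulting noise still fits under the $s^3/N^2$ standard deviation, and absorb any excess into the $\mathrm{poly}(\log(1/\delta)/\sigma)$ factor.
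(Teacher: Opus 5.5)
\textbf{Gap: the sample complexity.} Your architecture matches the paper's: a check-by-check analysis with Chebyshev plus Laplace tails, PTR on the maximum multiplicity, and Herman's bin-wise moment claim as a black box for soundness. But your plan does not deliver the theorem's quantitative content, the $(N/\eps)^{2/3}$ rate.

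You propose $\Delta' = O(\log^2(1/\delta)/\eps^2)$ for the triple counts, which is quadratic in the maximum multiplicity. The noise then has scale $\Delta'/\eps \asymp \log^2(1/\delta)/\eps^3$. For this to be swamped by $\E[\triple_j] \gtrsim (s^3/N^2)(\sigma^3/2B)$ you need $s \gtrsim N^{2/3}/\eps$ up to $\mathrm{poly}(\log(1/\delta)/\sigma)$ factors. That is exactly the generic reduction's rate. The leftover $\eps^{-1/3}$ cannot be absorbed into a $\mathrm{poly}(\log(1/\delta)/\sigma)$ factor.

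The missing observation is that the local sensitivity is linear. The private data are $T$ and $T'$; $S$ is public, and Step~1 caps each multiplicity in $S$ at $3$. So replacing one $T_r$ changes $\triple_j$ by at most $3m(T')$, and replacing one $T'_{r'}$ changes it by at most $3m(T)$. Hence $\Delta \asymp \log s + \log(1/\delta)/\eps$ suffices, which is the paper's choice. The honest PTR test passes because $m(T) \le 2p_{\mathrm{max}}s + 2\log(s/\beta)$, by the Chernoff variant with a union bound over $k$. Your multiplicity bound omits this $\log s$ term.

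\textbf{The comparison criterion is also wrong.} You compare the privacy noise to the standard deviation of the statistic. What matters is the slack in the $(1\pm\tau)$ acceptance window. You need $\Delta/\eps \lesssim c\tau\,\E[\triple_j]$, and separately, via Chebyshev, $\sqrt{\Var[\triple_j]} \lesssim c\tau\,\E[\triple_j]$ using $\Var[\triple_j] \le 3(1+2p_{\mathrm{max}}s)\E[\triple_j]$. That variance bound is where the raised heavy threshold enters.

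At the target sample size, for small $\eps$, the noise $\log(1/\delta)/\eps^2$ on $\triple_j$ exceeds its standard deviation by roughly an $\eps^{-1/2}$ factor. Insisting that noise stay below the standard deviation would again force $s \gtrsim N^{2/3}/\eps$.

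Your stated binding condition $s^3/N^2 \gtrsim \mathrm{polylog}/\eps$ only gives $s \gtrsim N^{2/3}\eps^{-1/3}$. The condition that actually yields $(N/\eps)^{2/3}$ is $s^3/N^2 \gtrsim \Delta/\eps \asymp \log(s)/\eps + \log(1/\delta)/\eps^2$.

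\textbf{Smaller points.}
\begin{itemize}
\item In soundness the bins are adversarial, so you cannot assume each heavy bin has true collision mean of order $s^2/N$. The paper first uses Markov together with the passed noisy check to lower-bound $\E[\pair_j]$ and $\E[\triple_j]$, and only then applies Chebyshev.
\item The margin Herman's claim requires is $O(\tau)=O(\sigma^3)$ relative, not $\sigma$.
\item There is no noisy heavy test inside Algorithm~2. Honest tags are exact, so the completeness tag bounds are immediate.
\item Privacy is not part of this statement.
\end{itemize}
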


This states that the protocol gives us \emph{tagged samples}, or ordered pairs $(S_i,\pi_i)$ of elements $S_i \in [N]$ and their weights $\pi_i \approx D(S_i)$ for distribution $D$. \cite{herman2024public} shows how to verify any label-invariant property with an approximate histogram computed using these tagged samples entirely in postprocessing (i.e. not using the verifier's sample), and we obtain tagged samples with the same guarantees, so black boxing their result allows us to achieve private verification of any label-invariant property.

\section{Lower Bound for Private Coin DP Interactive Proofs}
\label{sec:private-to-public}

In this section, we provide a reduction from one-round differentially private private-coin Interactive Proofs for sufficiently small $\varepsilon$ to differentially private public-coin (AM) protocols by converting differential privacy to replicability through the transfer theorems in \cite{bun2023stabilitystableconnectionsreplicability}.

More concretely, Corollary 3.12 of \cite{rogers2016maxinformationdifferentialprivacypostselection} converts DP to bounded max-info at a cost of a multiplicative $\sqrt{s}$ blow-up in the max-info for sample size $s$, and then Lemma 3.14 and Theorem 3.17 of \cite{bun2023stabilitystableconnectionsreplicability} converts this guarantee to one-way perfect generalization and then replicability at no additional cost. 

Then, if we are working with replicable protocols, we can guarantee that the message sent from the verifier to the prover will be the same with high probability over the choice of sample. As such, we can simply remove the sample from the equation altogether and just send the random seed, which is sufficient to determine the message with high probability.

Then, we consider an alternate private-coin protocol in which the decision does not use the randomness that was used to generate the first message. We can do this by enumerating over all such random seeds and taking the majority decision. This at most doubles the probability of any such decision (in particular the probability of the incorrect decision is at most doubled), while in this protocol, an adversarial prover has no additional information that they can use to further increase the probability of acceptance when $\mathcal{D}$, the distribution we are sampling from, does not satisfy the property.

This reduction is not computationally efficient, as the goal is solely to ensure that the sample and communication complexity remain unchanged. 

\begin{theorem}
    Suppose $\mathcal{A}(S,r)$ is a $\rho$-replicable algorithm, $\mathcal{D}$ is the population distribution, and $\mathcal{R}$ is a distribution over random seeds. Additionally, suppose we have the following (private coin) protocol

    \begin{enumerate}
        \item The Verifier draws sample $S \sim \mathcal{D}^n$ and randomness $r \sim \mathcal{R}$ and sends $m = \mathcal{A}(S,r)$ to the prover.
        \item the Prover responds with $m' = f(m,\mathcal{D})$.
        \item The Verifier computes $\mathcal{A}'(T,S,r, r')$ where $T = (m,m')=(\mathcal{A}(S,r),m')$ is the transcript of the interaction and $r'$ is any new randomness drawn for $\mathcal{A}'$, which outputs $1$ to accept and $0$ to reject.
    \end{enumerate}

    satisfying the following completeness and soundness guarantees:

    \textbf{Completeness:} if $T$ was generated according to the protocol (that is, if $T = (\mathcal{A}(S,r),f(m,\mathcal{D}))=(\mathcal{A}(S,r),f(\mathcal{A}(S,r),\mathcal{D}))$) and $\mathcal{D}$ satisfies the property we wish to verify, then

$$\Pr_{S \sim \mathcal{D}^n,r \sim \mathcal{R}, r' \sim \mathcal{R}'}[\mathcal{A}'(T,S,r,r') = 1] \ge 0.99.$$

\textbf{Soundness:} if $\mathcal{D}$ is $\xi$-far from the property we wish to verify, then for any function $f'$ generating $m' = f'(m,\mathcal{D})=f'(\mathcal{A}(S,r),\mathcal{D})$ and $T = (\mathcal{A}(S,r),m')$,

$$\Pr_{S \sim \mathcal{D}^n,r \sim \mathcal{R},r' \sim \mathcal{R}'}[\mathcal{A}'(T,S,r,r') = 1] \le 0.01.$$

Then, there exists an AM protocol in which the verifier sends only random string $r \sim \mathcal{R}$ to the prover satisfying the following completeness and soundness guarantees:

\textbf{Completeness:} if $T$ was generated according to the protocol (that is, if the prover was honest) and $\mathcal{D}$ satisfies the property we wish to verify, then the verifier accepts with probability at least $0.98-\rho$.

\textbf{Soundness:} if $\mathcal{D}$ is $\xi$-far from the property we wish to verify, for any function $f'$ generating $T = f'(r,\mathcal{D})$, then the verifier accepts with probability at most $0.02$.

Furthermore, the sample complexity of this protocol is $|S|$, and the communication complexity is $|\mathcal{R}| + C$ where $C$ is the communication complexity of the original protocol and $|\mathcal{R}|$ is the size of the support of $\mathcal{R}$.

\end{theorem}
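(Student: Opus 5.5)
We build the AM protocol in two stages, following the outline above. First we modify the decision rule so that it no longer depends on the seed $r$ except through the transcript. Define $\mathcal{B}(T,S,r'')$ as follows: for each seed $\tilde r$ in the support of $\mathcal{R}$, compute the acceptance probability
\[
p(T,S,\tilde r)=\Pr_{r'}\bigl[\mathcal{A}'(T,S,\tilde r,r')=1\bigr],
\]
and accept with probability equal to the average of these over $\tilde r\sim\mathcal{R}$ conditioned on $\mathcal{A}(S,\tilde r)=m$. (Alternatively, accept if a majority-weighted vote of the conditioned seeds accepts.) This is computationally inefficient, which the theorem permits.

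The AM protocol is then as follows. The verifier sends a fresh $r\sim\mathcal{R}$. The prover returns a pair $(\hat m,m')$; the honest prover draws its own sample $S'\sim\mathcal{D}^n$, sets $\hat m=\mathcal{A}(S',r)$, and $m'=f(\hat m,\mathcal{D})$. The verifier draws $S$ and rejects unless $\mathcal{A}(S,r)=\hat m$. Otherwise it runs the decision rule $\mathcal{A}'(T,S,r,r')$ on $T=(\hat m,m')$. The communication is $|\mathcal{R}|+C$ and the sample is $S$ alone, as claimed.

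\textbf{Completeness.} By $\rho$-replicability, $\Pr[\mathcal{A}(S,r)\ne\mathcal{A}(S',r)]\le\rho$. On the complementary event, the transcript coincides exactly with the transcript of the private-coin protocol run on $(S,r)$ with an honest prover. A union bound then gives acceptance probability at least $0.99-\rho\ge0.98-\rho$. Any slack here covers the factor-two loss if we use the derandomized rule $\mathcal{B}$ instead of $\mathcal{A}'$.

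\textbf{Soundness, and why the derandomization is needed.} This is the main obstacle. In the AM protocol the cheating prover knows $r$ before committing to $\hat m$ and $m'$. In the private-coin protocol, by contrast, $f'$ sees only $m$, not $r$. Consequently, a prover strategy $g(r,\mathcal{D})$ is not directly a private-coin strategy. We handle this by first passing to a private-coin protocol whose decision depends on $r$ only through $m$, so that knowing $r$ gives the prover no advantage beyond knowing $m$.

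Concretely, define the rule $\mathcal{A}''(T,S)$ as the majority (over $\tilde r$ drawn from the posterior of $r$ given $m=\mathcal{A}(S,\tilde r)$) of $\mathcal{A}'(T,S,\tilde r,r')$. Since $(S,\tilde r)$ and $(S,r)$ have the same joint law given $m$, a Markov argument gives the following: if a prover makes $\mathcal{A}'$ accept with probability $q$, the majority rule accepts with probability at most $2q$. Applying this to any $f'$ bounds soundness by $0.02$. For completeness, the same averaging gives rejection probability at most $2\cdot0.01$.

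Now fix any AM prover $g$ and run the AM verifier with the rule $\mathcal{A}''$. On the event $\mathcal{A}(S,r)=\hat m$, acceptance depends only on $(\hat m,m',S)$. For each fixed $m$, the prover's best response is a fixed $m'$ regardless of $r$. The AM acceptance probability is therefore at most
\[
\sum_m \Pr[\mathcal{A}(S,r)=m]\,\max_{m'}\Pr[\mathcal{A}''((m,m'),S)=1\mid m],
\]
which is exactly the private-coin acceptance probability of the optimal $f'(m)$. By the previous step this is at most $0.02$.

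The care in this argument lies in two points. First, conditioning on $\hat m$ (a function of $r$) must leave the conditional law of $S$ given $\mathcal{A}(S,r)=m$ intact. Second, we must check that the majority/averaging step gives exactly the stated constants $0.98-\rho$ and $0.02$.
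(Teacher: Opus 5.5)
Your protocol (with the rule $\mathcal{A}''$) is exactly the paper's: the prover names $\hat m$, the verifier checks $\mathcal{A}(S,r)=\hat m$, and the decision is a majority over seeds consistent with $(S,m)$. The two Markov arguments and the plan of reducing AM soundness to soundness of the modified private-coin protocol also match the paper, and your completeness bound $1-\rho-0.02$ is correct.

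\textbf{The gap.} It is the step ``for each fixed $m$, the prover's best response is a fixed $m'$ regardless of $r$,'' which you use to bound the AM acceptance probability by $\sum_m \max_{m'}\Pr_{S,r}[\mathcal{A}(S,r)=m \wedge \mathcal{A}''((m,m'),S)=1]$. Removing $r$ from the decision rule does not remove it from the consistency check. Given $r$, the AM prover succeeds with $(\hat m,m')$ with probability $\Pr_S[\mathcal{A}(S,r)=\hat m \wedge \mathcal{A}''((\hat m,m'),S)=1]$. The surviving set $\{S:\mathcal{A}(S,r)=\hat m\}$ depends on $r$, so a prover who knows $r$ (and may deliberately name an atypical $\hat m$) knows more about $S$ than one who sees only $m$. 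This is the point you flagged, and for this protocol it cannot be closed with the constant $0.02$.

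\textbf{A counterexample.} On a far $\mathcal{D}$, take $r$ uniform on $[k]$ and disjoint sample events $B_1,\ldots,B_k$, each of probability $p$. Let $\mathcal{A}(S,r)$ be $1$ if $S\in B_r$ and $0$ otherwise, so $\rho=2p(1-p)$. Let the decision accept iff $m=1$ and $S\in B_{m'}$; it ignores $r$, so $\mathcal{A}''=\mathcal{A}'$. Every private-coin $f'$ wins with probability at most $\Pr[r=f'(1)]\cdot p=p/k$. The AM prover answering $(1,r)$ wins with probability $p$. For $p=0.04$ and $k=10$ this is $0.004$ versus $0.04$, with $\rho<0.1$.

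The paper's Case 2 has the same hole. The first inequality in its last display bounds a conditional probability given $\mathcal{A}(S,r)=m_r$ by an unconditional maximum over $m'$, and it fails in this example (it reads $p\ge 1$). That unconditional quantity is not the private-coin acceptance probability in any case.

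\textbf{The repair.} Use replicability a second time, now in soundness.
Let $m^\star_r$ be the most likely value of $\mathcal{A}(S,r)$ and $q_r$ its probability. Since $\sum_m \Pr_S[\mathcal{A}(S,r)=m]^2\le q_r$, replicability gives $\E_r[1-q_r]\le\rho$.
Take the AM prover to be deterministic without loss of generality, and define $v(m)=\max_{m'}\Pr_{S,r'}[\mathcal{A}''((m,m'),S,r')=1]$ with $S\sim\mathcal{D}^n$ unconditioned.
A prover naming $\hat m_r\ne m^\star_r$ wins with probability at most $1-q_r$. A prover naming $m^\star_r$ wins with probability at most $v(m^\star_r)$; this uses only that $\mathcal{A}''$ ignores $r$ and that $S$ is independent of $r$.
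The private-coin prover that answers each $m$ with the maximizer in $v(m)$ wins with probability at least $\E_r[v(m^\star_r)]-\E_r[1-q_r]$. By your Markov step this is at most $0.02$.

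Hence this AM protocol has soundness error at most $0.02+2\rho$, not $0.02$. That still leaves a constant gap against completeness $0.98-\rho$ when $\rho<0.1$, but the constant in the statement must be adjusted accordingly.
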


\begin{proof}

For any fixed $T,S,r'$ where $T = (m,m')$, let $\mathcal{A}''(T,S,r')$ be defined as follows:

$$\mathcal{A}''(T,S,r') = \mathbbm{1} \left[\Pr_{r \sim \mathcal{R}}[\mathcal{A}'(T,S,r,r') = 1 \mid \mathcal{A}(S,r) = m] \ge \frac{1}{2}\right]$$

Consider the following AM protocol:

\begin{enumerate}
    \item The Verifier draws a random string $r \sim \mathcal{R}$ and sends it to the prover.
    \item The Prover draws a sample $S'$, computes $m = \mathcal{A}(S',r)$, then computes $m' = f(m,\mathcal{D})$, and sends $T = (m,m')$ to the verifier.
    \item The verifier verifies that $m = \mathcal{A}(S,r)$. If so, the verifier samples $r' \sim \mathcal{R}'$, computes $\mathcal{A}''(T,S,r')$, and outputs the response. If not, the verifier rejects.
\end{enumerate}

    \paragraph{Proof of Completeness}

    If the prover is honest, then the verifier will accept if $\mathcal{A}(S,r) = \mathcal{A}(S',r)$ and $\mathcal{A}''(T,S,r') = 1$. We can compute these probabilities separately.
    First, by replicability, we have
    $$\Pr_{S,S' \sim \mathcal{D}^n, r\sim \mathcal{R}}[\mathcal{A}(S,r) \ne \mathcal{A}(S',r)] \le \rho$$
    Next, if the prover is honest and $\mathcal{A}(S,r) = \mathcal{A}(S',r)$, then the transcript $T = (m,m')$ matches the transcript $T$ that would have been produced by the original protocol. Thus, we have
    $$\Pr_{S \sim \mathcal{D}^n, r \sim \mathcal{R}, r' \sim \mathcal{R}'}[\mathcal{A}'(T,S,r,r') = 1 \mid \mathcal{A}(S,r) = \mathcal{A}(S',r)] \ge 0.99.$$
    Since for any $S,r$ we have $\mathcal{A}''(T,S,r') = 0$ iff $\Pr_{r \sim \mathcal{R}}[\mathcal{A}'(T,S,r,r') = 0 \mid \mathcal{A}(S,r) = m] > \frac{1}{2}$,
    \begin{align*}
        &\Pr_{S \sim \mathcal{D}^n, r' \sim \mathcal{R}'}[\mathcal{A}''(T,S,r') = 0\mid \mathcal{A}(S,r) = \mathcal{A}(S',r)] \\
        &\qquad\qquad\le 2\Pr_{S \sim \mathcal{D}^n, r \sim \mathcal{R}, r' \sim \mathcal{R}'}[\mathcal{A}'(T,S,r,r') = 0\mid \mathcal{A}(S,r) = \mathcal{A}(S',r)] \\
        &\qquad\qquad\le 0.02.
    \end{align*}
    where the first inequality follows from applying Markov's inequality, using the fact that
    \begin{align*}
        &\mathbbm{E}_{S \sim \mathcal{D}^n, r' \sim \mathcal{R}'}[\Pr_{r\sim \mathcal{R}}[\mathcal{A}'(T,S,r,r')=0\mid \mathcal{A}(S,r) = \mathcal{A}(S',r)]]\\
        &\qquad\qquad=\Pr_{S \sim \mathcal{D}^n, r' \sim \mathcal{R}',r \sim \mathcal{R}}[\mathcal{A}'(T,S,r,r')=0 \mid \mathcal{A}(S,r) = \mathcal{A}(S',r)].
    \end{align*}
    Combining these with a union bound, we get
    $$\Pr_{S \sim \mathcal{D}^n, r' \sim \mathcal{R}'}[\mathcal{A}''(T,S,r') = 1] \ge 0.98 - \rho.$$

    \paragraph{Proof of Soundness}

    Given any randomized prover, each transcript gives some fixed probability of acceptance by the verifier over the remaining sources of randomness, as the verifier does not need to use $S$ or $r'$ to send $r$, and thus can choose them after receiving the transcript $T$. Thus, a prover that deterministically sends $T$ maximizing the probability of acceptance by the verifier is the prover that maximizes probability of acceptance by the verifier in the AM protocol.  Therefore, suppose WLOG that we have a deterministic prover such that $T_r = (m_r,m'_r)$:
    \begin{align*}
        &\Pr[\text{Verifier accepts} \mid \text{Verifier sends }r] \\
        &\qquad\qquad= \Pr_{S \sim \mathcal{D}^n}[\mathcal{A}(S,r) = m_r] \cdot \Pr_{S \sim \mathcal{D}^n, r' \sim \mathcal{R}'}[\mathcal{A}''(T_r,S,r') = 1 \mid \mathcal{A}(S,r) = m_r]\\
        &\qquad\qquad=\Pr_{S \sim \mathcal{D}^n, r' \sim \mathcal{R}'}[\mathcal{A}''(T_r,S,r') = 1 \cap \mathcal{A}(S,r) = m_r]\end{align*}
    Now, consider the following modified private-coin protocol:
    \begin{enumerate}
        \item The Verifier draws sample $S \sim \mathcal{D}^n$ and randomness $r \sim \mathcal{R}$ and sends $m = \mathcal{A}(S,r)$ to the prover.
        \item the Prover responds with $m' = f(m,\mathcal{D})$.
        \item The Verifier computes $\mathcal{A}''(T,S, r')$ where $T = (m,m')$ is the transcript of the interaction and $r'$ is any new randomness drawn for $\mathcal{A}'$, which outputs $1$ to accept and $0$ to reject.
    \end{enumerate}
    Here, the only difference from the original private coin protocol is the use of $\mathcal{A}''$ instead of $\mathcal{A}'$ in step 3.  By an analysis analogous to the proof of completeness, we have for any $T = (A(S,r),m')$: 
    $$\mathbbm{E}_{S,r'}[\Pr_r[\mathcal{A}'(T,S,r,r')=1]] = \Pr_{S,r',r}[\mathcal{A}'(T,S,r,r')=1 ] \le 0.01.$$
    Applying Markov's inequality, we get
    
    \begin{align*}\Pr_{S \sim \mathcal{D}^n,r' \sim \mathcal{R}'}[\mathcal{A}''(T,S,r') =1]=&\Pr_{S,r'}\left[\Pr_r[\mathcal{A}'(T,S,r,r')=1\mid \mathcal{A}(S,r) = m] \ge \frac{1}{2}\right] \\
\le& \frac{\E_{S,r'}[\Pr_r[\mathcal{A}'(T,S,r,r')=1\mid \mathcal{A}(S,r) = m]}{1/2}\le \frac{0.01}{1/2} = 0.02
\end{align*}

    so this protocol must have the verifier accept with probability at most $0.02$ when $\mathcal{D}$ is $\xi$-far from the property. We claim that this modified private coin protocol has weaker soundness than our AM protocol. To see this, we have two cases:
    
    Case 1: $\mathcal{A}(S,r) \ne m_r$. In this case, the verifier always rejects in the AM protocol, so the private coin protocol cannot have stronger soundness.
    
    Case 2: $\mathcal{A}(S,r) = m_r$. In this case, there exists a (dishonest) prover for the modified private coin protocol which on receiving $m_r$, outputs $m^*$ such that $T = (m_r,m^*)$ maximizes
    $$\Pr_{S \sim \mathcal{D}^n, r' \sim \mathcal{R}'}[\mathcal{A}''(T,S,r') = 1].$$


    Note that for any $r$,
    \begin{align*}\Pr_{S \sim \mathcal{D}^n, r' \sim \mathcal{R}'}[\mathcal{A}''(T,S,r') = 1] &\ge \Pr_{S \sim \mathcal{D}^n, r' \sim \mathcal{R}'}[\mathcal{A}''(T_r,S,r') = 1 \mid \mathcal{A}(S,r) = m_r]\\
    &\ge \Pr_{S \sim \mathcal{D}^n, r' \sim \mathcal{R}'}[\mathcal{A}''(T_r,S,r') = 1 \cap \mathcal{A}(S,r) = m_r].\end{align*}
    The first inequality follows from the fact that $T_r = (m_r,m')$ is one specific instantiation of the second message and therefore is covered by the maximizer.
    
    Therefore, in case 2, the prover can again achieve a higher probability of acceptance with the private coin protocol. Since the modified private coin protocol guarantees that the verifier accepts with probability at most $0.02$, the AM protocol achieve the same soundness guarantee.
\end{proof}

We can combine this with the following lemmas to get our main result:

\begin{lemma}
\label{lem:DP_max_info}
    (Corollary 3.12 of \cite{rogers2016maxinformationdifferentialprivacypostselection}). Fix $n \in \mathbb{N}$, sufficiently small $\rho \in (0,1)$. Let $\varepsilon = \frac{\rho}{\sqrt{8m\log(1/\rho)}}$, $\delta \le \frac{\varepsilon \rho^6}{m^2}$. Let $\mathcal{A}:\mathcal{X}^m \rightarrow \mathcal{Y}$ be an $(\varepsilon,\delta)$-DP algorithm. Then, for any distribution $D$ over $\mathcal{X}$, if $S \sim D^m$, $I_\infty^{\rho^3}(S;\mathcal{A}(S)) \le O(\rho)$.
\end{lemma}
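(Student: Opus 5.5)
This lemma is imported from Rogers et al.\ (Corollary 3.12), so my plan is to reconstruct their argument. The route is: first bound the \emph{pure} max-information of a pure-DP algorithm on product inputs, then handle the $\delta$ term by a simulation step.

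\textbf{Step 1: pointwise privacy-loss concentration.} Fix $D$ and let $S\sim D^m$, $Y=\mathcal{A}(S)$. The object to control is the log-likelihood ratio
\[
Z(S,y)=\log\frac{\Pr[\mathcal{A}(S)=y]}{\Pr[\mathcal{A}(S')=y]},
\]
where $S'\sim D^m$ is an independent copy, i.e.\ the denominator is the marginal of $Y$. I will write this as a telescoping sum $\sum_{i=1}^m Z_i$ by resampling one coordinate at a time. Each $Z_i$ compares the output law with coordinate $i$ fixed against the law with coordinate $i$ averaged over $D$. Under $(\varepsilon,0)$-DP this gives $|Z_i|\le \varepsilon$ and $\E[Z_i\mid \text{past}]\le O(\varepsilon^2)$, using the standard KL bound $\varepsilon(e^\varepsilon-1)$ for $\varepsilon$-indistinguishable laws. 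So $(Z_i)$ is a bounded-increment supermartingale-type sequence.

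\textbf{Step 2: apply Azuma.} Azuma then yields
\[
\Pr\bigl[Z>m\varepsilon^2/2+\varepsilon\sqrt{2m\log(1/\beta)}\bigr]\le\beta .
\]
Discarding the bad set of mass $\beta$ gives $I^\beta_\infty(S;Y)\le m\varepsilon^2/2+\varepsilon\sqrt{2m\log(1/\beta)}$. Now substitute $\varepsilon=\rho/\sqrt{8m\log(1/\rho)}$ and $\beta=\rho^3/2$:
\begin{itemize}
\item the term $\varepsilon\sqrt{2m\log(1/\beta)}$ is $O(\rho)$, since $\log(1/\beta)\asymp\log(1/\rho)$;
\item the term $m\varepsilon^2$ is $O(\rho^2/\log(1/\rho))=O(\rho)$.
\end{itemize}
Hence the pure-DP case gives max-information $O(\rho)$ at failure probability $\rho^3/2$.

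\textbf{Step 3: extend to $\delta>0$.} I expect this to be the main obstacle, because the pointwise bound $|Z_i|\le\varepsilon$ fails for approximate DP. The plan is to use the characterization that an $(\varepsilon,\delta)$-DP mechanism on neighbors $x,x'$ has output laws that are each within statistical distance $O(\delta/\varepsilon)$-type slack of a pair of genuinely $O(\varepsilon)$-indistinguishable laws. This is Lemma 3.17 in Kasiviswanathan--Smith / Rogers et al. Inside each Azuma step, one then says: with probability at least $1-O(\sqrt{\delta}/\varepsilon)$ per coordinate over $(x_i,y)$, the increment $|Z_i|\le O(\varepsilon)$.

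The main cost is bookkeeping, since the pointwise ``good'' event must hold for all $m$ coordinates simultaneously. A union bound puts the total exceptional mass at $O(m\sqrt{\delta}/\varepsilon)$, possibly with a different polynomial depending on the exact lemma used. With $\delta\le\varepsilon\rho^6/m^2$ this becomes
\[
m\sqrt{\varepsilon\rho^6/m^2}/\varepsilon=\rho^3/\sqrt{\varepsilon}.
\]
That is too large by a factor of $1/\sqrt{\varepsilon}$, so the naive $\sqrt{\delta}$ conversion will not suffice. I would instead use the sharper conversion in which the bad-output mass is $O(\delta/(1-e^{-\varepsilon}))=O(\delta/\varepsilon)$. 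The total bad mass is then $m\delta/\varepsilon\le\rho^6/m\le\rho^3/2$. Combining this with the Azuma failure probability $\rho^3/2$ gives total slack $\beta=\rho^3$, and the max-information bound stays $O(\rho)$ because the increments are inflated only by constant factors.
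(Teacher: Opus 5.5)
\textbf{The gap is in Step 3.} Your Steps 1--2 are the standard pure-DP argument and are fine. This needs the filtration that reveals $Y$ first and then $S_1,S_2,\dots$ from their posteriors, so that $\E[Z_i\mid Y,S_{<i}]$ is the KL divergence of the posterior of $S_i$ from $D$. The paper gives no proof of this lemma (it is imported from Rogers et al.\ via Bun et al.), so the benchmark is their argument. Step 3 has two problems.

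First, your reason for abandoning the standard conversion is an arithmetic slip. The per-coordinate slack in Rogers et al.\ is $O(\sqrt{\delta/\varepsilon})$, not $O(\sqrt{\delta}/\varepsilon)$. The total slack, in $\beta$ and additively in the bound, is therefore $O(m\sqrt{\delta/\varepsilon})$. The hypothesis $\delta\le\varepsilon\rho^6/m^2$ is \emph{exactly} the condition $m\sqrt{\delta/\varepsilon}\le\rho^3$; that is how the lemma's parameters were calibrated, and with the correct exponent the standard route closes.

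Second, the sharper $O(\delta/\varepsilon)$ route you substitute is asserted, not proved. A union bound showing $|Z_i|\le 2\varepsilon$ for all $i$ outside joint mass $O(m\delta/\varepsilon)$ controls only the increments. Azuma also needs the conditional drift $\E[Z_i\mid Y,S_{<i}]$ to be controlled. Under $(\varepsilon,\delta)$-DP this KL term can be unbounded on some histories; consider a mechanism that publishes its whole sample with probability $\delta$. Truncating $Z_i$ on the bad set shifts its conditional mean by an amount governed by the \emph{conditional} mass of the bad set given $(Y,S_{<i})$. That mass is small only on average over histories. Saying the increments are inflated only by constant factors does not address this. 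This is exactly the point where Rogers et al.\ apply Markov over histories and pay the square root.

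Your sharper route can be made rigorous, but it needs the following missing step. Set $\tilde Z_i=Z_i\mathbbm{1}[|Z_i|\le 2\varepsilon]$. Let $q$ be the posterior of $S_i$ given $(Y,S_{<i})$, $p=D$, $r=q/p=e^{Z_i}$, and $G=\{|Z_i|\le 2\varepsilon\}$.
\begin{enumerate}
\item Since $r\log r\le (r-1)+(r-1)^2$ for $r\ge 1/2$, you get $\E[\tilde Z_i\mid Y,S_{<i}]\le q(G)-p(G)+O(\varepsilon^2)\le O(\varepsilon^2)+p(G^c)$.
\item The term $p(G^c)=\Pr_{x\sim D}[|Z_i(S_{<i},x,Y)|>2\varepsilon]$ has expectation $O(\delta/\varepsilon)$. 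With the unrevealed coordinates drawn fresh from $D$, the laws of $\mathcal{A}(S_{<i},x,X_{>i})$ and $\mathcal{A}(S_{<i},X_{\ge i})$ are mixtures of neighboring pairs, hence $(\varepsilon,\delta)$-indistinguishable. Since $Y$ given $S_{<i}$ has the second law, the Kasiviswanathan--Smith bound applies.
\item The same observation, applied under the first law, is what actually justifies your $O(\delta/\varepsilon)$ bound on $\Pr[|Z_i|>2\varepsilon]$; you did not argue this.
\item Apply Azuma to the centered truncated increments and Markov to $\sum_i p(G_i^c)$ at threshold $\rho$. With $\delta\le\varepsilon\rho^6/m^2$, the extra failure probability is $O(\rho^5)$ and the extra additive term is $\rho$. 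The total slack then stays below $\rho^3$ for small $\rho$, and the bound stays $O(\rho)$.
\end{enumerate}
Without this argument, or the correct $\sqrt{\delta/\varepsilon}$ bookkeeping, Step 3 is not yet a proof.
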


\begin{lemma}
\label{lem:max_info_generalizability}
    (Lemma 3.14 of \cite{bun2023stabilitystableconnectionsreplicability}). Fix $m \in \mathbb{N}$, $k > 0$, $\beta \in (0,1)$ and $\hat{\beta} = \sqrt{\frac{\beta}{1-2^{-k}}}$. Let $\mathcal{A}:\mathcal{X}^m \rightarrow \mathcal{Y}$ be an algorithm. Then, for every distribution $D$ over $\mathcal{X}$ and $S \sim D^n$, if $I^\beta_\infty(S;\mathcal{A}(S)) \le k$, then $\mathcal{A}$ is $(\hat{\beta},2k,\hat{\beta})$-one way perfectly generalizing.
\end{lemma}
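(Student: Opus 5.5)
The plan is a direct averaging argument. Take the simulator to be the marginal law of the output: $Sim_D$ is the distribution of $\mathcal{A}(S)$ when $S\sim D^m$. Under this choice, the product distribution $S\otimes \mathcal{A}(S)$ in the definition of max-information is exactly ``draw $S\sim D^m$ and independently $Z\sim Sim_D$.'' The hypothesis $I^\beta_\infty(S;\mathcal{A}(S))\le k$ then says that every joint event $\mathcal{O}\subseteq\mathcal{X}^m\times\mathcal{Y}$ satisfies
\[
\Pr[(S,\mathcal{A}(S))\in\mathcal{O}]\le 2^k\Pr[S\otimes Z\in\mathcal{O}]+\beta .
\]
If the logarithm is natural, $2^k$ becomes $e^k$ and nothing changes below. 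Our task is to turn this ``on average over $S$'' statement into a ``for most $S$, uniformly over output sets'' statement, losing a square root in $\beta$ and doubling the exponent.

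First I will define the bad samples. Let $B$ be the set of $s$ for which some $O\subseteq\mathcal{Y}$ has $\Pr[\mathcal{A}(s)\in O]>e^{2k}Sim_D(O)+\hat\beta$. For each $s$ I fix the canonical worst set $O_s$. It consists of the outputs where the density of $\mathcal{A}(s)$ with respect to $Sim_D$ exceeds $e^{2k}$, together with any part of the law of $\mathcal{A}(s)$ that is singular to $Sim_D$. This set maximizes $\Pr[\mathcal{A}(s)\in O]-e^{2k}Sim_D(O)$, so $s\in B$ exactly when this maximum exceeds $\hat\beta$. Choosing $O_s$ canonically also makes the combined set below measurable.

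Next I glue these into one joint event, $\mathcal{O}=\{(s,y): s\in B,\ y\in O_s\}$, and plug it into the max-information inequality. The left side is
\[
\E_S\bigl[\mathbbm{1}_B\,\Pr[\mathcal{A}(S)\in O_S]\bigr]\ \ge\ e^{2k}\,\E_S\bigl[\mathbbm{1}_B\,Sim_D(O_S)\bigr]+\hat\beta\Pr[S\in B].
\]
The right side is $2^k\,\E_S[\mathbbm{1}_B\,Sim_D(O_S)]+\beta$. Since $e^{2k}\ge 2^k$, the $Sim_D$ terms can be dropped, leaving $\hat\beta\Pr[S\in B]\le\beta$. Hence $\Pr[S\in B]\le\beta/\hat\beta=\sqrt{\beta(1-2^{-k})}\le\hat\beta$. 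So with probability at least $1-\hat\beta$ the sample is good, and for good $s$ every $O$ satisfies $\Pr[\mathcal{A}(s)\in O]\le e^{2k}Sim_D(O)+\hat\beta$. This is $(\hat\beta,2k,\hat\beta)$-one-way perfect generalization.

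The main obstacle is the gluing step. The max-information hypothesis only controls a single joint event, while the conclusion must hold for every output set simultaneously. This is why I pick the pointwise-maximal violating set $O_s$, defined via the density ratio, rather than an arbitrary witness. I also need to check that the event $\mathcal{O}$ has joint probability exceeding $\beta$, as the supremum in the definition of max-information requires. If it does not, then already $\hat\beta\Pr[B]\le\Pr[(S,\mathcal{A}(S))\in\mathcal{O}]\le\beta$ and the same bound follows. The precise factor $1-2^{-k}$ in $\hat\beta$ is then only slack. If one wants it exactly, one keeps the term $(e^{2k}-2^k)\,\E[\mathbbm{1}_B Sim_D(O_S)]$ instead of discarding it, but this is not needed for the stated conclusion.
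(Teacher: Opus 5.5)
Your argument is correct, and it takes a genuinely different and slightly sharper route than the one behind the cited statement. Note that the paper itself gives no proof; it simply imports the lemma from \cite{bun2023stabilitystableconnectionsreplicability}. The constants $2k$ and $\sqrt{\beta/(1-2^{-k})}$ appear to reflect the standard pointwise argument, which runs as follows:
\begin{itemize}
\item Take the single joint event $\mathrm{Bad}$ on which the density of the law of $(S,\mathcal{A}(S))$ with respect to the product exceeds $2^{2k}$. On this event the product measure is at most $2^{-2k}$ times the joint measure, so max-information gives $\Pr[\mathrm{Bad}](1-2^{-k})\le\beta$.
\item Markov's inequality over $S$ then bounds the conditional mass of $\mathrm{Bad}_S$ under $\mathcal{A}(S)$ by $\hat\beta$, except with probability $\hat\beta$.
\item For good $S$, any $O$ splits into $O\setminus\mathrm{Bad}_S$, where the ratio is at most $2^{2k}\le e^{2k}$, and $O\cap\mathrm{Bad}_S$, which has mass at most $\hat\beta$.
\end{itemize}

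Your glued event is essentially the same superlevel set, restricted to $s\in B$. The difference is that you apply Markov to the per-sample excess $\sup_O(\Pr[\mathcal{A}(s)\in O]-e^{\varepsilon}Sim_D(O))$ rather than to the bad mass itself. That excess already has average at most $\beta$ at $e^{\varepsilon}=2^k$, so you need neither the doubled exponent nor the $1/(1-2^{-k})$ factor. In fact your proof yields $(\beta/\gamma,\,k\ln 2,\,\gamma)$-one-way perfect generalization for every $\gamma>0$ (with $\log$ base $2$), for example $(\sqrt\beta,k\ln 2,\sqrt\beta)$. Unlike $\hat\beta$, this does not blow up as $k\to 0$.

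In exchange, the pointwise route gives a reusable likelihood-ratio bound over most of the joint mass, and it works with a single, obviously measurable event. In your version, measurability of $\{(s,y): s\in B,\ y\in O_s\}$ needs a justification: take $O_s$ to be the $s$-section of the superlevel set of the joint-to-product density, plus the singular part. This is trivial in the finite-output setting where the paper applies Lemma~\ref{lem:generalizability_replicability}.

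One small correction to your closing remark: the $1-2^{-k}$ factor is an artifact of the pointwise route and is not recovered by keeping the $(e^{2k}-2^k)$ term. You do not need it anyway, since $\beta/\hat\beta\le\hat\beta$ already.
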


\begin{lemma}
\label{lem:generalizability_replicability}
    (Theorem 3.17 of \cite{bun2023stabilitystableconnectionsreplicability}, restated).
    Fix $m \in \mathbb{N}$ and $\beta,\varepsilon,\delta \in (0,1)$. Let $\mathcal{A}:\mathcal{X}^m \rightarrow \mathcal{Y}$ be a $(\beta,\varepsilon,\delta)$-one way perfectly generalizing algorithm with finite output space. Then, for any distribution $D$ over $\mathcal{X}$, there exists an algorithm $\mathcal{A}'$ with identical output distribution to $\mathcal{A}$ that is $4(\beta+2\varepsilon+\delta)$-replicable.
\end{lemma}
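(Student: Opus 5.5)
The statement immediately above is Lemma~\ref{lem:generalizability_replicability}: a $(\beta,\varepsilon,\delta)$-one-way perfectly generalizing algorithm $\mathcal{A}$ with finite output space can be re-implemented, with the same output distribution, as a $4(\beta+2\varepsilon+\delta)$-replicable algorithm. The plan is a correlated-sampling construction. The shared random seed $r$ will encode a correlated-sampling procedure, and the replicable algorithm $\mathcal{A}'(S,r)$ will use $r$ to draw one output from the distribution $P_S$ of $\mathcal{A}(S)$.

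Concretely, the seed $r$ is an infinite i.i.d.\ sequence of pairs $(y_j,u_j)$, with $y_j$ uniform on the finite output space $\mathcal{Y}$ and $u_j$ uniform on $[0,1]$. The algorithm outputs $y_{j^*}$, where $j^*$ is the first index with $u_j \le P_S(y_j)$. By rejection sampling, this output has law exactly $P_S$, so the output distribution is identical to that of $\mathcal{A}$.

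The standard correlated-sampling bound then gives, for fixed $S,S'$,
\[
\Pr_r\bigl[\mathcal{A}'(S,r)\neq\mathcal{A}'(S',r)\bigr]\le \frac{2\,d_{TV}(P_S,P_{S'})}{1+d_{TV}(P_S,P_{S'})}\le 2\,d_{TV}(P_S,P_{S'}).
\]
It therefore suffices to show that $\E_{S,S'}[d_{TV}(P_S,P_{S'})]\le 2(\beta+2\varepsilon+\delta)$. By the triangle inequality through $Sim_D$, this in turn reduces to showing $\E_S[d_{TV}(P_S,Sim_D)]\le \beta+2\varepsilon+\delta$.

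On the event $G$ of probability at least $1-\beta$ given by one-way perfect generalization, we have $P_S(O)\le e^\varepsilon Sim_D(O)+\delta$ for every $O$. Taking $O=\{y:P_S(y)>Sim_D(y)\}$ yields
\[
d_{TV}(P_S,Sim_D)=P_S(O)-Sim_D(O)\le (e^\varepsilon-1)+\delta\le 2\varepsilon+\delta,
\]
using $e^\varepsilon-1\le 2\varepsilon$ for $\varepsilon<1$. Off $G$, the distance is trivially at most $1$, which contributes at most $\beta$. Summing the two cases gives the bound, and the constant $4$ comes from combining the factor $2$ of correlated sampling with the factor $2$ of the triangle inequality.

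The main subtlety is that the one-way guarantee is only an upper bound on $P_S$ relative to $Sim_D$. I expect this to be harmless, because TV distance is determined by the one-sided excess $\sup_O\bigl(P_S(O)-Sim_D(O)\bigr)$, so only that direction is needed. A second point to handle is that the construction depends on $P_S$ rather than on a sample from $\mathcal{A}$. This is allowed because the lemma only asks for an algorithm $\mathcal{A}'$ with identical output distribution, and $\mathcal{A}'$ may compute $P_S$ exactly, which is possible since $\mathcal{Y}$ is finite. Finally, I need termination of the rejection sampling: it halts almost surely since $\sum_y P_S(y)/|\mathcal{Y}|=1/|\mathcal{Y}|>0$, so each trial succeeds with positive probability.
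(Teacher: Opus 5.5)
Your proof is correct and follows the same argument as the cited Theorem 3.17 of Bun et al., which the paper imports without giving a proof of its own. That argument runs correlated sampling over the finite output space, which bounds the disagreement probability by $2\,d_{TV}(P_S,P_{S'})$, and uses one-way perfect generalization to get $d_{TV}(P_S,Sim_D)\le e^{\varepsilon}-1+\delta\le 2\varepsilon+\delta$ on the good event. Combining this with the triangle inequality through $Sim_D$ yields exactly the constant $4(\beta+2\varepsilon+\delta)$.
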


\begin{corollary}
    Given a one-round $(\varepsilon,\delta)$-DP interactive proof verifying property $P$ with sample complexity $s$ and communication complexity $c$ with $\varepsilon = O\left(\frac{1}{\sqrt{s}}\right)$ and $\delta = O\left(\frac{1}{s^{5/2}}\right)$, there exists a one-round $(\varepsilon,\delta)$-DP public coin interactive proof verifying property $P$ with sample complexity $s$ and communication complexity $c$.
\end{corollary}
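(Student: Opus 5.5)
The plan is to chain the three transfer results (Lemma~\ref{lem:DP_max_info}, Lemma~\ref{lem:max_info_generalizability}, Lemma~\ref{lem:generalizability_replicability}) to turn the verifier's first-message map $S \mapsto m$ into a $\rho$-replicable algorithm for a small constant $\rho$. We then feed that into the preceding theorem, which converts a one-round private-coin protocol with a replicable first message into an AM protocol. The final step checks that the resulting AM protocol is still $(\varepsilon,\delta)$-DP with the same sample and communication complexity.

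\textbf{Step 1 (DP $\Rightarrow$ bounded max-information).} Let $\mathcal{A}(S)$ be the verifier's first-message algorithm. It is $(\varepsilon,\delta)$-DP as a post-processing of the DP transcript, with no prover dependence yet. Fix a small constant $\rho_0$ and use the hypotheses $\varepsilon = O(1/\sqrt{s})$ and $\delta = O(1/s^{5/2})$, with constants chosen so that $\varepsilon \le \rho_0/\sqrt{8s\log(1/\rho_0)}$ and $\delta \le \varepsilon\rho_0^6/s^2$. The latter is $\Theta(s^{-5/2})$, which is exactly where the exponent $5/2$ comes from. Lemma~\ref{lem:DP_max_info} with $m=s$ then gives $I^{\rho_0^3}_\infty(S;\mathcal{A}(S)) \le O(\rho_0)$.

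\textbf{Step 2 (max-information $\Rightarrow$ replicability).} Lemma~\ref{lem:max_info_generalizability} with $\beta=\rho_0^3$ and $k=O(\rho_0)$ gives $(\hat\beta, O(\rho_0), \hat\beta)$-one-way perfect generalization. Here $\hat\beta = \sqrt{\rho_0^3/(1-2^{-k})} = O(\rho_0)$, using $1-2^{-k}=\Theta(k)=\Theta(\rho_0)$. Lemma~\ref{lem:generalizability_replicability} then yields an algorithm $\mathcal{A}^\star(S,r)$ with the same output distribution as $\mathcal{A}$ that is $O(\rho_0)$-replicable. To apply it we need a finite message space, which holds since communication is bounded by $c$. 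Choosing $\rho_0$ small enough gives $\rho \le 0.01$.

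\textbf{Step 3 (apply the reduction and verify privacy).} First amplify the original protocol's soundness and completeness to $0.99/0.01$. This should be done by parallel repetition inside the decision only, using the same sample; if that is not possible, we accept constant-factor changes, since the statement's $O(\cdot)$ absorbs them. Replace $\mathcal{A}$ by $\mathcal{A}^\star$, which does not change the joint law of $(S,m)$, and invoke the preceding theorem to obtain an AM protocol with completeness $\ge 0.97$ and soundness $\le 0.02$.

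\textbf{Main obstacle: privacy of the new protocol.} The verifier now sends only $r$, which is independent of $S$. Its output is an accept/reject bit computed from $(T,S,r,r')$ via the majority rule $\mathcal{A}''$ and the check $m=\mathcal{A}(S,r)$. This is not obviously a post-processing of the original DP transcript.

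The approach I would try first is to make the AM verifier's final decision itself private. The check $\mathbbm{1}[\mathcal{A}(S,r)=m]$ together with $\mathcal{A}''$ is a function of $S$, so I would argue that the original protocol's privacy covers the whole transcript including the decision. If that fails, the fallback is to reuse the DP decision mechanism of the original protocol, applied to the prover-supplied $m$, together with a DP test of whether $m$ is the likely output on $S$, implemented by propose-test-release.

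Communication is $c$ plus $|r|$, and I would argue that $|r|$ can be absorbed or derandomized to fit within $c$, taking care here. Sample complexity is unchanged at $s$.
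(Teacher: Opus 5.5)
Your Steps 1--2 are exactly the paper's proof. The paper's proof only plugs $\varepsilon=\Theta(1/\sqrt{s})$ and $\delta=O(s^{-5/2})$ into Lemmas~\ref{lem:DP_max_info}, \ref{lem:max_info_generalizability}, and \ref{lem:generalizability_replicability} to get a $\rho$-replicable first message with $\rho<0.1$, and then implicitly invokes the preceding theorem. Your bookkeeping there is fine: $\delta\le\varepsilon\rho_0^6/s^2$ as the source of the $5/2$, $\hat\beta=\Theta(\rho_0)$ after fixing $k=\Theta(\rho_0)$, and finiteness of the message space.

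The genuine gap is the one you flag yourself. You never establish that the resulting AM protocol is $(\varepsilon,\delta)$-DP, and the route you would try first cannot work. The AM verifier's decision is not a post-processing of anything the original protocol released. It contains two new ingredients: the deterministic equality test $\mathbbm{1}[\mathcal{A}^\star(S,r)=m]$ with the seed $r$ now \emph{public}, and the deterministic threshold $\mathcal{A}''$.

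For the equality test, fix adjacent $S,S'$. Let the prover, after seeing $r$, send $m=\mathcal{A}^\star(S,r)$ (with the honest $m'$) when $\mathcal{A}^\star(S,r)\neq\mathcal{A}^\star(S',r)$, and some other message otherwise. Then the verifier never accepts on $S'$. On $S$ it accepts with probability on the order of $\Pr_r[\mathcal{A}^\star(S,r)\neq\mathcal{A}^\star(S',r)]\ge\TV(\mathcal{A}(S),\mathcal{A}(S'))$, whenever $\mathcal{A}''$ accepts with constant probability (as it does for typical $S$). For an ordinary $\varepsilon$-DP message such as a Laplace-noised count, this distance is $\Theta(\varepsilon)=\Theta(1/\sqrt{s})$. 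That forces $\delta=\Omega(1/\sqrt{s})$, far above $s^{-5/2}$. For $\mathcal{A}''$, thresholding $\Pr_r[\mathcal{A}'=1\mid\mathcal{A}(S,r)=m]$ at $1/2$ can flip deterministically when one sample changes, so it is not DP either.

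So privacy requires a genuinely new decision rule, together with a fresh soundness argument for that rule. Options include randomizing the output, resampling the private coins from their conditional law instead of taking a majority, or the private test you list as a fallback. Your proposal supplies none of this. To be fair, the paper's one-line proof does not address privacy of the AM protocol at all, so this gap is inherited from the source rather than introduced by you.

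Two smaller points. First, repeating the decision on the same sample does not amplify $2/3$ to $0.99$, because the errors are driven largely by $S$. Amplification needs fresh samples and costs constant factors in $s$ and in the privacy loss, so you should instead assume the $0.99/0.01$ guarantee that the preceding theorem requires. Second, that theorem only bounds communication by $c$ plus the length of $r$. The correlated-sampling seed behind Lemma~\ref{lem:generalizability_replicability} is not bounded by $c$ bits in general, so ``absorbing $|r|$ into $c$'' is likewise unproved.
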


\begin{proof}
    If we plug these values of $\varepsilon$ and $\delta$ into Lemmas~\ref{lem:DP_max_info}, \ref{lem:max_info_generalizability}, and \ref{lem:generalizability_replicability}, we can get a replicable algorithm with $\rho < 0.1$
\end{proof}

Using the following bound for privacy amplification by shuffling, we can additionally get a much tighter bound if we require the messages to be sent with local DP:

\begin{definition}
    An algorithm $\mathcal{A}$ satisfies $\varepsilon$-local DP if for any dataset $X = (X_1, X_2, \dots, X_n)$, $\mathcal{A}(X) = (\mathcal{A}_1(X_1), \mathcal{A}_2(X_2), \dots, \mathcal{A}_n(X_n))$ where $\mathcal{A}_i(X_i)$ satisfies $\varepsilon$-DP for every $i$.
\end{definition}

\begin{lemma}
    (Theorem 3.1 from \cite{feldman2021hidingclonessimplenearly}, simplified) Suppose $\mathcal{R}$ satisfies $\varepsilon_0$ local DP for $\varepsilon_0 \le 1$. Then, the algorithm $\mathcal{A}$ which generates a random permutation $\pi$ of $[s]$ and then outputs
    $$\mathcal{A}(X) = \left(\mathcal{R}_{\pi(1)}(X_{\pi(1)}), \mathcal{R}_{\pi(2)}(X_{\pi(2)}), \dots, \mathcal{R}_{\pi(s)}(X_{\pi(s)})\right)$$
    satisfies $(\varepsilon,\delta)$-DP where

    $$\varepsilon = O\left(\varepsilon_0\sqrt{\frac{\log(1/\delta)}{s}}\right)$$
\end{lemma}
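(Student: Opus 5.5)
By symmetry of the shuffle I will assume the adjacent datasets $X,X'$ differ only in the first element, $x_1$ versus $x_1'$. The privacy loss will come only from user $1$; the other $s-1$ users will be shown to act as random "clones" of user $1$ that hide its report. The argument follows the clone-based analysis behind Theorem 3.1 of \cite{feldman2021hidingclonessimplenearly}. Throughout, $\mathcal{R}$ denotes the common $\varepsilon_0$-LDP randomizer.

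\textbf{Step 1: decompose each other user's output.} Fix any other input $x_i$. Because $\mathcal{R}$ is $\varepsilon_0$-LDP, the output distribution of $\mathcal{R}(x_i)$ is at least $e^{-\varepsilon_0}$ times each of $\mathcal{R}(x_1)$ and $\mathcal{R}(x_1')$, pointwise. This lets me write
\[
\mathcal{R}(x_i) = \tfrac{p}{2}\,\mathcal{R}(x_1) + \tfrac{p}{2}\,\mathcal{R}(x_1') + (1-p)\,\mathcal{L}_i, \qquad p = e^{-\varepsilon_0},
\]
for some "leftover" distribution $\mathcal{L}_i$. Sampling through this mixture, each user $i \ge 2$ independently becomes a clone of $x_1$, a clone of $x_1'$, or a leftover draw. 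The shuffled output is then a post-processing of three quantities: the multiset of leftover draws, the number $a$ of reports distributed as $\mathcal{R}(x_1)$, and the number $b$ of reports distributed as $\mathcal{R}(x_1')$. The post-processing is the same under $X$ and $X'$, because shuffling erases who produced which report. By post-processing it therefore suffices to bound the privacy loss of the pair $(a,b)$.

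\textbf{Step 2: reduce to a binomial pair.} Let $C \sim \Bin(s-1,p)$ be the number of clones and $A \sim \Bin(C,1/2)$ the number of them cloning $x_1$. A crude version of the reduction compares $(A+1,\,C-A)$ with $(A,\,C-A+1)$.

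To obtain the factor $\varepsilon_0$, user $1$ itself must be decomposed the same way. I split $\mathcal{R}(x_1)$ into a component that is common to $\mathcal{R}(x_1)$ and $\mathcal{R}(x_1')$ plus a residual. Equivalently, user $1$ contributes to $a$ with probability $\tfrac{e^{\varepsilon_0}}{1+e^{\varepsilon_0}}$ under $X$ and with probability $\tfrac{1}{1+e^{\varepsilon_0}}$ under $X'$. The two resulting laws of $(a,b)$ are then mixtures of the same two shifted binomial pairs, with mixing weights that differ by
\[
\frac{e^{\varepsilon_0}-1}{e^{\varepsilon_0}+1} = \Theta(\varepsilon_0).
\]

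\textbf{Step 3: concentration of the likelihood ratio.} The likelihood ratio between $(A+1,\,C-A)$ and $(A,\,C-A+1)$ is $\tfrac{C-A+1}{A+1}$. By Chernoff bounds, $C \ge (s-1)e^{-\varepsilon_0}/2 = \Omega(s)$, since $\varepsilon_0 \le 1$. Given $C$, Hoeffding gives $|A - C/2| \le \sqrt{C\log(1/\delta)/2}$ except with probability $\delta$. On this good event the log of that ratio is $O\bigl(\sqrt{\log(1/\delta)/s}\bigr)$. Write $u$ and $u'$ for the laws of $(a,b)$ under $X$ and $X'$. Since $u$ and $u'$ mix the same two distributions with weights differing by $\Theta(\varepsilon_0)$, the log of the ratio $u/u'$ is controlled by $\Theta(\varepsilon_0)$ times (that ratio minus one). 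This yields $\varepsilon = O\bigl(\varepsilon_0\sqrt{\log(1/\delta)/s}\bigr)$, with the bad event absorbed into $\delta$.

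\textbf{Main obstacle.} The delicate part is Step 3's mixture bound. One must show rigorously that a $\Theta(\varepsilon_0)$ perturbation of the mixing weights, combined with an $O\bigl(\sqrt{\log(1/\delta)/s}\bigr)$ bound on the log-ratio between the two mixture components, produces only the product of the two in the final privacy loss. This needs care because the log-ratio is only bounded on a high-probability event rather than everywhere. My first attempt would use a direct hockey-stick divergence computation over the explicit binomial pair. I would fall back on the elementary inequality $\ln\frac{1+\alpha t}{1+\alpha} \le \alpha(t-1)$, applied on the good event, and charge the complement of the good event to $\delta$.
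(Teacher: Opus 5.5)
The paper gives no proof of its own here; it imports Theorem 3.1 of \cite{feldman2021hidingclonessimplenearly}. Your outline reconstructs that clone argument, so the route is the right one. But Step 2 does not hold together as written, and it is exactly where the factor $\varepsilon_0$ comes from.

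In Step 1 the clone types are $\mathcal{R}(x_1)$ and $\mathcal{R}(x_1')$, and $a,b$ count reports of those types. With that bookkeeping, user $1$ contributes to $a$ with probability $1$ under $X$. Splitting $\mathcal{R}(x_1)$ into ``a common component plus a residual'' (e.g.\ via $\min(\mathcal{R}(x_1),\mathcal{R}(x_1'))$) does not help: it produces a residual that is a third output type, not one of the two being counted. So your ``Equivalently'' does not follow.

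The missing idea is to change the clone types themselves. Put
\[
\mathcal{Q}^0=\frac{e^{\varepsilon_0}\mathcal{R}(x_1)-\mathcal{R}(x_1')}{e^{\varepsilon_0}-1},
\qquad
\mathcal{Q}^1=\frac{e^{\varepsilon_0}\mathcal{R}(x_1')-\mathcal{R}(x_1)}{e^{\varepsilon_0}-1}.
\]
These are probability distributions by $\varepsilon_0$-LDP. With $q=\frac{e^{\varepsilon_0}}{e^{\varepsilon_0}+1}$ they satisfy $\mathcal{R}(x_1)=q\,\mathcal{Q}^0+(1-q)\,\mathcal{Q}^1$ and $\mathcal{R}(x_1')=(1-q)\,\mathcal{Q}^0+q\,\mathcal{Q}^1$. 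Crucially, $\mathcal{R}(x_1)+\mathcal{R}(x_1')=\mathcal{Q}^0+\mathcal{Q}^1$. That identity lets you rewrite every other user's clone component $\frac{p}{2}\mathcal{R}(x_1)+\frac{p}{2}\mathcal{R}(x_1')$ as $\frac{p}{2}\mathcal{Q}^0+\frac{p}{2}\mathcal{Q}^1$ without touching the leftovers. Only after redefining $a,b$ as the numbers of $\mathcal{Q}^0$- and $\mathcal{Q}^1$-reports are the two laws of $(a,b)$ the mixtures $u=qP_0+(1-q)P_1$ and $u'=(1-q)P_0+qP_1$. Here $P_0$ and $P_1$ are the laws of $(A+1,C-A)$ and $(A,C-A+1)$.

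With this repaired, Step 3 goes through and the obstacle you flag is routine. At a point $(a,b)$ the component ratio is $t=P_0(a,b)/P_1(a,b)=a/b$ (not $\frac{C-A+1}{A+1}$), and pointwise
\[
\frac{u}{u'}=\frac{1+q(t-1)}{1+(1-q)(t-1)}\le 1+\frac{e^{\varepsilon_0}-1}{e^{\varepsilon_0}+1}\,(t-1)_+ .
\]
Since $\sum_z\bigl(u(z)-e^{\varepsilon}u'(z)\bigr)_+\le\Pr_{z\sim u}[u(z)>e^{\varepsilon}u'(z)]$, it suffices that $C=\Omega(s)$ and $|a-b|=O(\sqrt{C\log(1/\delta)})$ outside an event of probability $O(\delta)$ under both $P_0$ and $P_1$. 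The reverse direction follows from the symmetry $(a,b)\mapsto(b,a)$, which exchanges $P_0$ and $P_1$.

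Three smaller points:
\begin{enumerate}
\item The leftover reports are correlated with $(a,b)$ through $C$. Reducing to the pair therefore needs the observation that, given $(a,b)$, the set of leftover users is uniform of size $s-a-b$ under both datasets.
\item The Chernoff bound on $C$ costs at most $\delta$ only when $s\gtrsim\log(1/\delta)$. In the complementary regime the claimed $\varepsilon$ is $\Omega(\varepsilon_0)$ and holds trivially, because the shuffled output is a post-processing of the $\varepsilon_0$-DP tuple of reports. This is why the original theorem carries a hypothesis of the form $s\gtrsim e^{\varepsilon_0}\log(1/\delta)$.
\item You were right to assume a common randomizer. With user-indexed randomizers $\mathcal{R}_{\pi(i)}$, as the statement literally writes, taking $\mathcal{R}_j(x)=(j,\mathcal{R}(x))$ defeats the shuffle entirely. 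The original theorem indexes randomizers by output position, not by user.
\end{enumerate}
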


\begin{corollary}
    If there exists a one-round private-coin interactive proof for verifying distribution property $\mathcal{P}$ with sample complexity $S$ and communication complexity $C$ such that the algorithm the verifier uses to compute its message satisfies $\varepsilon$-local DP for $\varepsilon = O\left(\frac{1}{\sqrt{\log s}}\right)$, then there exists a public-coin interactive proof for verifying distribution property $\mathcal{P}$ with sample complexity $S$ and communication complexity $C$.
\end{corollary}

    On the other hand, if we allow $\varepsilon = \Theta(\log s)$, then the protocol from \cite{Herman23_doubly_efficient} satisfies $\varepsilon$-local DP in the communication phase and allows us to achieve $\tilde{O}(\sqrt{N})$ sample and communication complexity for verifying label invariant properties, when the best known bounds for the public coin case is $\tilde{O}(N^{2/3})$. Thus, the bound on $\varepsilon$ obtained here is tight up to log factors if we require local DP in the communication phase.

\section{Independence Testing}
\label{sec:independence}
In this section we derive upper and lower bounds for one-message (MA) protocols for testing whether a distribution is a product distribution. That is, if our distribution $D$ is over the space $A_1 \times A_2 \times \dots \times A_d$, we wish to test, with the aid of a prover, whether $D$ is a product distribution, or $D$ is at least $\sigma$-far in total variation distance from any product distribution. Our results show that when we have access to a prover, testing independence has sample complexity $\tilde{\Theta}(\sqrt{N})$ where $N$ is the domain size and polynomial runtime. 

In our upper bound, we use the prover to reduce the problem of privately verifying that the distribution is a product distribution to private identity testing.  Recall that we are interested in the case in which the sample complexity is strictly smaller than the domain size.  In this realm, our protocol matches the bounds on sample complexity for unassisted private identity testing given in~\cite{ASZ18}, including the dependence on the closeness and privacy parameters (which in turn match the upper bounds in Algorithm~1 of~\cite{aliakbarpour2017differentiallyprivateidentitycloseness}).  We cannot directly use the existing private algorithms in~\cite{aliakbarpour2017differentiallyprivateidentitycloseness} because they are computationally inefficient, requiring time that scales with $N$ rather than $\sqrt{N}$. The inefficiency arises because of a specific computation in Goldreich's reduction from identity testing to uniformity testing~\cite{goldreich2011testing}, all for the non-private case.  Herman and Rothblum replace this expensive exact computation with an estimate. 
We were unable to adapt Herman and Rothblum's computationally efficient non-private algorithm~\cite{herman2024verifyreasonabledistributionproperty} as this relies on $\chi^2$-type test that has high sensitivity.  
Finally, we were unable to use this estimation trick to improve the computational complexity of the private uniformity test in~\cite{diakonikolas2018sample} as the latter relies on an intolerant total variation test. In contrast, our introduction of an estimate seems to require a tolerant test, which in general cannot be achieved in the parameter realm of interest to us, given the tolerance requirement imposed by the estimate; see~\cite{canonne2022price}.  A similar issue arises when trying to use the algorithm in~\cite{ASZ18}.
Fortunately, for the $o(N)$ sample size of interest to us, Paninski's unique elements test~\cite{Paninski} used in Algorithm~1 of~\cite{aliakbarpour2017differentiallyprivateidentitycloseness} is just right, and the sample complexity of our efficient version matches that of their inefficient algorithm. 

Algorithm~1 of~\cite{aliakbarpour2017differentiallyprivateidentitycloseness} begins by applying Goldreich's reduction from identity testing on distributions of support size $N$ to uniformity testing on a distribution over $[6N]$ (reviewed in Section~\ref{sec:identity-to-uniformity} below). At a high level, this reduction splits bins of weight $q(i)$ into light-weight bins, specifically of weight $1/6N$, with any fractional remainder relegated to a special residual bin.  In the inefficient step, the verifier computes the total weight relegated to the residual bin, and then splits this residual bin into light-weight bins.  The total number of lightweight bins will be $6N$.  When the inefficient step is replaced with an estimate, the tester no longer knows the exact weight in the residual bin (and thus it does not know how many light-weight bins into which it should be split).  Instead, we use an overestimate of the residual weight, split this final bin into a correspondingly larger number of light-weight bins, and pad the resulting split distribution with 0-probability labels so that it can be viewed as an explicit distribution over $[M]$ for some $M\ge 6N$.  We then run the differentially private estimator of Algorithm~1 in~ADR18 with respect to the reference distribution $U_M$.  That algorithm uses Paninski's test for uniformity over $m$ elements, which checks that the number of unique elements in the sample is sufficiently large as a function of $m$~\cite{Paninski}. The key point is that for $M$ sufficiently close to $6N$, the expected number of unique elements does not change much between $U_M$ and and $U_{6N}$.  Indeed, it is this same robustness that accommodates the noise introduced for privacy.

\subsection{Upper Bound}

\begin{theorem}
Define
\[
  \Lambda(N,\sigma,\eps)
  =\frac{\sqrt N}{\sigma\sqrt\eps}
   +\frac{\sqrt N}{\sigma^2}.
\]
There exists a universal constant \(c_*>0\) such that the following holds.  Let
\(\eps\in(0,1]\), \(\sigma\in(0,1/12)\), and suppose
\begin{equation}
  \Lambda(N,\sigma,\eps)\le c_*N.
  \label{eq:sparse-regime}
\end{equation}
There is an \(\eps\)-differentially private one-message proof protocol for verifying that
\(D\) is a product distribution with the following guarantees:
\begin{description}
  \item[Completeness.] If $D$ is a product distribution and the prover sends the exact
  marginals of \(D\), then the verifier accepts with probability at least \(2/3\).

  \item[Soundness.] If $D$ is $\sigma$-far from any product distribution over $A_1 \times A_2 \times \dots \times A_d$ in TV distance, then for every prover message the verifier accepts
  with probability at most \(1/3\).
\end{description}
The verifier uses
$s=O(\frac{\sqrt N}{\sigma\sqrt\eps}+\frac{\sqrt N}{\sigma^2}+\frac1{\sigma^2\eps})$ 
samples from \(D\). 
After
\(O(m)\) preprocessing of the prover's message, its expected verifier running time is $O\left(m+ds+\sigma^{-4}\log N\right).$
The communication cost from the prover is $O(\sum_i|A_i|)$. 
\end{theorem}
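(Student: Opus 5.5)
The plan is to reduce to private identity testing against the product distribution $q = q_1\times\cdots\times q_d$ built from the prover's claimed marginals. The prover sends $q_1,\dots,q_d$, which costs $O(\sum_i |A_i|)$ communication; if the message is malformed, the verifier rejects. The verifier then runs an efficient $\eps$-DP identity tester for $D$ versus $q$ at distance $\sigma$. The tester's output is the only sample-dependent part of the transcript, so privacy is inherited from the tester.

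\textbf{Completeness and soundness.} Completeness is immediate: an honest prover makes $q=D$. For soundness, if $D$ is $\sigma$-far from every product distribution, then in particular $d_{TV}(D,q)\ge\sigma$ for whatever $q$ the prover sends. So it suffices to build a tester with completeness $2/3$ for $D=q$ and soundness $1/3$ for $d_{TV}(D,q)\ge\sigma$.

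\textbf{Building the tester.} I would follow Algorithm~1 of~\cite{aliakbarpour2017differentiallyprivateidentitycloseness}: apply Goldreich's identity-to-uniformity reduction, then a privatized Paninski unique-elements test. Goldreich's reduction first mixes $D$ and $q$ with uniform. It then maps each $x$ with $q'(x)$ rounded down to a multiple of $1/6N$ into $\lfloor 6Nq'(x)\rfloor$ light bins, choosing a bin uniformly at random, and sends the remainder to a residual bin. Since $q$ is a product, the verifier can compute $q(x)=\prod_i q_i(x_i)$ for each sample in $O(d)$ time after $O(m)$ preprocessing. This gives the $O(ds)$ term.

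The inefficient step is computing the total residual weight $\sum_x (\text{frac part})$. I would replace it with an estimate $\hat r$ from $O(\sigma^{-4}\log N)$ draws of $x\sim q$, which can be sampled coordinatewise from the marginals. These draws involve no private data. I would take the overestimate $\hat r+\sigma^2$, which with high probability exceeds the truth by at most $2\sigma^2$, and split the residual into $\lceil 6N(\hat r+\sigma^2)\rceil$ light bins. I would then pad with zero-probability labels to a domain $[M]$ with $6N\le M\le 6N(1+O(\sigma^2))$. Under $D=q$ the image is nearly uniform: it is uniform on a subset of $[M]$ of relative size $1-O(\sigma^2)$. Under $d_{TV}(D,q)\ge\sigma$, the reduction guarantees the image is $\Omega(\sigma)$-far from $U_M$.

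\textbf{Main obstacle: the perturbed unique-elements test.} Next I would run Paninski's statistic $Z$, the number of elements seen exactly once, with Laplace noise. $Z$ has sensitivity at most $2$, so threshold $Z+\Lap(2/\eps)$. The main obstacle is showing that the mismatch between $U_M$ and $U_{6N}$ (or the near-uniform image) is too small to matter. In the sparse regime~\eqref{eq:sparse-regime}, $s\le c N$, and the expected number of unique elements is $s\,(1-1/M)^{s-1}\approx s - s^2/M$. Changing $M$ by a factor $1+O(\sigma^2)$ shifts this by $O(\sigma^2 s^2/N)$. Paninski's gap in expected unique elements between uniform and $\sigma$-far is $\Omega(\sigma^2 s^2/N)$. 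The step I expect to be hardest is making sure the constant in the shift is a small fraction of the gap. I would try to get this by choosing the slack in $\hat r$ as $c\sigma^2$ with a small constant $c$.

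\textbf{Choosing $s$.} Finally, I would set the threshold at the midpoint and combine Paninski's variance bound $\mathrm{Var}(Z)=O(s^2/N)$ for the uniform case, with an analogous bound in the far case, with the Laplace tail $O(1/\eps)$. Chebyshev's inequality then requires $\sigma^2 s^2/N\gtrsim s/\sqrt N$ and $\sigma^2 s^2/N\gtrsim 1/\eps$. This gives $s\gtrsim \sqrt N/\sigma^2+\sqrt N/(\sigma\sqrt\eps)$. The extra $1/(\sigma^2\eps)$ term covers small-$N$ edge cases and the noise floor. The expected running time is $O(m+ds+\sigma^{-4}\log N)$.
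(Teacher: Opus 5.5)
Your protocol works and is actually leaner than the paper's. However, it departs from the paper exactly at the step you asserted without proof.

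\textbf{What the paper does differently.} Besides the non-private estimate of the $q$-side residual mass $b=q_2(\botel)$, the paper also privately estimates the $D$-side residual mass $a=p_2(\botel)$. It rejects if $|\widetilde a-\widetilde b|\ge 4\eta$. This estimate spends $\eps/2$ and $t_D=O(\sigma^{-4}+1/(\sigma^2\eps))$ samples, and it is the real source of the $1/(\sigma^2\eps)$ term in the statement, not small-$N$ edge cases.

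The paper needs this check because it transfers Goldreich's guarantee by the triangle inequality
$\TV(\widehat R,U_M)\ge\TV(R_0,U_{6N})-\TV(R_0,\widehat R)-\TV(U_{6N},U_M)$.
Here $\TV(R_0,\widehat R)=a(\overline b-b)/\overline b$, which is of order $a$ when $b\ll\overline b$. That quantity need not be small when $D$ is far from $q$.

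\textbf{Why your soundness sentence needs an argument.} Your claim that ``the reduction guarantees the image is $\Omega(\sigma)$-far from $U_M$'' is not a direct citation, since Goldreich's lemma concerns the exact split onto $[6N]$. The obvious way to transfer it fails unless $a$ is controlled.

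The claim is nevertheless true, by a direct argument that ignores the residual block:
\begin{itemize}
\item Every non-residual label $(x,j)$, $j\in[m_x]$, has mass $p_1(x)/(6Nq_1(x))$ under $\widehat R$. So the $\ell_1$ distance to $\frac1{6N}$ over these labels is $\sum_x\theta_x|p_1(x)-q_1(x)|\ge\frac23\TV(D,q)$, using $\theta_x\ge\frac23$.
\item Switching the reference value to $\frac1M$ on these at most $6N$ labels changes this sum by at most $1-6N/M=O(\sigma^2)$.
\item Since TV is half the $\ell_1$ distance, $\TV(\widehat R,U_M)\ge\frac13\TV(D,q)-O(\sigma^2)\ge\sigma/4$, whatever the value of $a$.
\end{itemize}
This requires only that the active labels fit in $[M]$, which your non-private estimate of $b$ already ensures on its good event.

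\textbf{What your route buys.} With this step filled in, you get a pure $\eps$-DP protocol from a single Laplace release, using $O(\sqrt N/\sigma^2+\sqrt N/(\sigma\sqrt\eps))$ samples, and the paper's Steps 2--3 can be dropped. The rest of your plan matches the paper's completeness analysis: the one-sided overestimate so residual labels have mass at most $\frac1{6N}$, the $O(\sigma^2 s^2/N)$ centering shift dominated by choosing a small slack constant, and Chebyshev plus the Laplace tail.
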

\begin{remark}
    This sample complexity is $O\left(\frac{\sqrt N}{\sigma\sqrt\eps}+\frac{\sqrt N}{\sigma^2}\right)$ when $s \le N$, which is the case of interest to us. This matches the lower bounds we will prove in theorem~\ref{thm:indep_lb} up to polylog$(N)$ factors. This matches the sample complexity of the non-private non-interactive algorithm for independence in \cite{ADK15} when $\varepsilon \ge \Omega(\sigma^2)$, and improves upon the sample complexity of $O\left(\frac{\sqrt{N}}{\sigma^2\varepsilon}\right)$ that one would get from applying the generic reduction to \cite{ADK15}.
\end{remark}
\subsubsection{The protocol}

The verifier has i.i.d. sample access to an unknown distribution \(D\) on \(\cX\).
Before the verifier draws its private sample, the prover sends one message consisting of
purported marginal distributions
\[
  M^{(i)}=(M^{(i)}_a)_{a\in A_i},\qquad i\in[d].
\]
The verifier first checks that every entry is nonnegative and that
\(\sum_{a\in A_i}M^{(i)}_a=1\) for each \(i\).  It rejects if a check fails.  Otherwise the
message defines the product distribution
\[
  Q(x_1,\ldots,x_d)=\prod_{i=1}^d M^{(i)}_{x_i}.
\]

After validating the prover's message, let \(Q\) be the induced product distribution and
form the quantities in Section~\ref{sec:identity-to-uniformity} with \(P=D\).  Fix
\(
  \eta=c_0\sigma^2,
\)
where \(c_0>0\) is a sufficiently small universal constant.

\begin{enumerate}[leftmargin=*,label=\textbf{Step \arabic*.}]
  \item Let $U_N$ be uniform on $[N]$. Draw
  \(t_Q=\lceil C_Q\eta^{-2}\rceil\) samples from \(q_1=(Q+U_N)/2\), which can be simulated from
  the explicit product distribution \(Q\).  For
  \(
    h(x)=1-\theta_x\in[0,1],
  \)
  set
  \(
    \widetilde b=\frac1{t_Q}\sum_{j=1}^{t_Q}h(Y_j).
  \)
  Letting \(b=\mathbb E_{q_1}h(Y)\), Hoeffding's inequality gives
  \(|\widetilde b-b|\le\eta\) except with a prescribed small constant probability.

  \item Draw \(t_D\) fresh samples from \(D\) and independently mix each with a uniform
  point to obtain samples \(X_j\sim p_1=(D+U_N)/2\).  Set
  \[
    \widetilde a
    =\frac1{t_D}\sum_{j=1}^{t_D}h(X_j)
     +\Lap\left(\frac{2}{t_D\eps}\right),
  \]
  where
  \(
    t_D=\left\lceil C_D\left(\eta^{-2}+\frac1{\eta\eps}\right)\right\rceil.
  \)
  The empirical mean has sensitivity \(1/t_D\), so this step is \(\eps/2\)-DP.  By
  Hoeffding's inequality and the Laplace tail bound,
  \(|\widetilde a-a|\le2\eta\) except with a prescribed small constant probability.

  \item If \(|\widetilde a-\widetilde b|\ge4\eta\), reject.

  \item Otherwise define
  \[
    \overline m=
    \min\!\left\{6N,\,
      \max\!\left\{1,\left\lceil6N(\widetilde b+\eta)\right\rceil\right\}
    \right\},
    \qquad
    \overline b=\frac{\overline m}{6N},
    \qquad
    M=6N+\lceil12N\eta\rceil+2.
  \]
  From each fresh sample from \(D\), generate one sample from an approximate split
  distribution \(\widehat R\): first simulate \(Y\sim p_2\); if $Y\in[N]$,
  output \((x,J)\) for uniform \(J\in[m_x]\), and if \(Y=\botel\), output
  \((\botel,J)\) for uniform \(J\in[\overline m]\).  Count equality of these pair labels;
  the algorithm never needs to enumerate the alphabet.  On the good estimation event (that is, when $\{|\widetilde b-b|\le\eta\}\cap
\{|\widetilde a-a|\le2\eta\}$), the
  active labels are embedded into an alphabet of size \(M\), with all remaining labels given
  zero probability.  The clipping in the definition of \(\overline m\) makes the protocol
  well-defined even off the good estimation event; it is inactive on the good event except
  that it assigns one unused label when \(a=b=0\).

  \item Run the test in Lemma~\ref{lem:robust-unif} on \(\widehat R\) with
  \(
    \alpha=\sigma/4, \xi=\eps/2,
  \)
  and return its decision.
\end{enumerate}

\subsubsection{A robust private unique-elements test}

For a distribution \(R\) on an alphabet of size \(M\), let \(K\) be the number of symbols
that occur exactly once in \(s\) i.i.d. draws from \(R\), and let
\[
  \mu_M=s\left(1-\frac1M\right)^{s-1}
\]
be its expectation under \(U_M=\Unif([M])\).

We record the standard singleton inequalities in the form needed below.  They are the
analytic core of the private unique-elements uniformity tester of ADR18.

\begin{lemma}[Singleton inequalities]\label{lem:singleton-ineq}
There exist universal constants \(c_{\rm sp},c_{\rm gap},C_{\rm var}>0\) such that, whenever
\(s\le c_{\rm sp}M\), every distribution \(R\) on an alphabet of size \(M\) satisfies
\begin{align}
  0&\le \mu_M-\mathbb E_R K,
  \label{eq:uniform-max}\\
  \mu_M-\mathbb E_R K
  &\ge c_{\rm gap}\frac{s^2}{M}\TV(R,U_M)^2,
  \label{eq:singleton-gap}\\
  \operatorname{Var}_R(K)
  &\le C_{\rm var}\left(\mu_M-\mathbb E_R K+\frac{s^2}{M}\right).
  \label{eq:singleton-var}
\end{align}
The constants can be chosen independently of \(M,s\), and \(R\).
\end{lemma}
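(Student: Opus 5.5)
The plan is to treat all three inequalities through the function
$g(p)=p(1-p)^{s-1}$, so that $\mathbb E_R K=s\sum_{i=1}^M g(p_i)$ and $\mu_M=sMg(1/M)$, where $p_i=R(i)$. Write $\Delta_i=p_i-1/M$, so that $\sum_i\Delta_i=0$ and $\TV(R,U_M)=\frac12\sum_i|\Delta_i|$.

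For \eqref{eq:uniform-max} and \eqref{eq:singleton-gap}, compare $g$ with its tangent line at $1/M$. Since $\sum_i\Delta_i=0$, the linear terms cancel and
\[
  \mu_M-\mathbb E_R K=s\sum_i\Bigl[g(\tfrac1M)+g'(\tfrac1M)\Delta_i-g(p_i)\Bigr].
\]
The key one-variable claim is that, when $s\le c_{\rm sp}M$, every $p\in[0,1]$ satisfies
\[
  g(\tfrac1M)+g'(\tfrac1M)(p-\tfrac1M)-g(p)\ \ge\ c\min\bigl\{s(p-\tfrac1M)^2,\ |p-\tfrac1M|\bigr\}.
\]
The argument splits into two regimes.

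\begin{itemize}
  \item For $p\le 1/s$, $g$ is uniformly strongly concave: $g''(p)=-(s-1)(1-p)^{s-3}(2-sp)\lesssim -s$. Since $1/M\le c_{\rm sp}/s$ lies well inside this region, Taylor's theorem gives the quadratic bound.
  \item For $p\ge1/s$, use $g'(1/M)\approx 1$ (because $s/M$ is small) and $g(p)\le pe^{-(s-1)p}\le p/e$ roughly. This makes the tangent minus $g$ at least a constant times $p$, which is comparable to $|p-1/M|$.
\end{itemize}

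This pointwise claim is the main technical obstacle. The constants in the intermediate range $p\asymp 1/s$ must be checked carefully, and $c_{\rm sp}$ is chosen to make everything go through. Nonnegativity gives \eqref{eq:uniform-max} immediately.

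For \eqref{eq:singleton-gap}, split the indices into small ones ($|\Delta_i|\le 1/s$) and large ones. On the small indices, Cauchy--Schwarz gives
\[
  s\sum_{\text{small}} s\Delta_i^2\ \ge\ \frac{s^2}{M}\Bigl(\sum_{\text{small}}|\Delta_i|\Bigr)^2 .
\]
On the large indices, the contribution is $s\sum_{\text{large}}|\Delta_i|$. Because $s\le M$ and $\sum_{\text{large}}|\Delta_i|\le 2$, this is at least $\frac{s^2}{2M}\bigl(\sum_{\text{large}}|\Delta_i|\bigr)^2$. Whichever part carries at least half of $\sum_i|\Delta_i|=2\TV$ then yields the bound $\gtrsim \frac{s^2}{M}\TV^2$.

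For \eqref{eq:singleton-var}, use the Efron--Stein inequality on the $s$ i.i.d.\ draws. Resampling draw $j$ changes $K$ by at most $2$. Moreover, it can change $K$ only if the symbol of draw $j$ (or of its replacement) has exactly one or two occurrences among the draws. The singleton case can be absorbed by a cleaner accounting: call draw $j$ \emph{relevant} if its symbol occurs exactly twice in the sample. Then $K$ changes only if the old or the new value of draw $j$ is relevant, or if the old or the new value is a singleton whose status flips. The flip cases are controlled similarly by collision probabilities.

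The probability that draw $j$ lands on $i$ and $i$ appears exactly once more is
\[
  p_i(s-1)p_i(1-p_i)^{s-2}\ \lesssim\ \min\{sp_i^2,\ p_i\}.
\]
Summing over $i$ and over the $s$ draws gives
\[
  \operatorname{Var}_R(K)\ \lesssim\ s\sum_i\min\{sp_i^2,\ p_i\}.
\]
Again split the indices.

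\begin{itemize}
  \item If $|\Delta_i|\le1/s$, then $s^2p_i^2\le 2s^2/M^2+2s^2\Delta_i^2$. The first term sums to at most $2s^2/M$, and the second is bounded by the small-index part of the gap computed above.
  \item If $|\Delta_i|>1/s$, then $p_i\lesssim|\Delta_i|$ whenever $p_i\ge 2/M$. Hence $sp_i\lesssim s|\Delta_i|$, which is bounded by the large-index part of the gap. (When $p_i<1/M$, the term is at most $s^2/M^2$.)
\end{itemize}

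Combining the two cases with the per-index lower bound from the first part gives
\[
  \operatorname{Var}_R(K)\le C_{\rm var}\Bigl(\mu_M-\mathbb E_RK+\frac{s^2}{M}\Bigr).
\]
All constants depend only on $c_{\rm sp}$, so they are independent of $M$, $s$, and $R$.
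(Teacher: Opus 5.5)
Your proof is correct, and it takes a genuinely different route from the paper.

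\textbf{What the paper does.} The paper (Appendix~\ref{app:Paninski}) proves only \eqref{eq:singleton-gap}, adapting Paninski's argument:
\begin{itemize}
  \item it writes the gap as $s(1-\frac1M)^{s-1}\sum_i f(p_i)$;
  \item it bounds $f$ below by a convex minorant plus the tangent term at $1/M$;
  \item it uses Jensen to reduce everything to a single evaluation at $\frac1M\sum_i|p_i-\frac1M|=2\TV(R,U_M)/M$;
  \item it then Taylor-expands the minorant at $0$, with an $o(z^2)$ remainder justified by letting $M\to\infty$.
\end{itemize}
Inequality \eqref{eq:uniform-max} then follows from the gap bound. Inequality \eqref{eq:singleton-var} is not reproved; it is quoted from Paninski's Lemma~2.

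\textbf{What you do differently.} You replace the minorant/Jensen step with an explicit, uniform pointwise bound on the tangent-line defect, $h(p)\ge c\min\{s(p-\frac1M)^2,|p-\frac1M|\}$. You then finish with a small/large split and Cauchy--Schwarz.

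\textbf{Trade-offs.} Jensen keeps a sharper constant. Your version is fully non-asymptotic, so it actually delivers universal constants, which the paper's $M\to\infty$ expansion does not by itself guarantee. In addition, the same Huber-type quantity $s\sum_i\min\{s\Delta_i^2,|\Delta_i|\}$ dominates your Efron--Stein bound. As a result, all three inequalities come out of one computation rather than partly by citation.

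\textbf{The intermediate range is harmless.} The range $p\asymp 1/s$ that you flag causes no trouble. For $p\ge 1/s$ one has $(1-p)^{s-1}\le(1-\frac1s)^{s-1}\le\frac12$ and $g'(\frac1M)\ge(1-c_{\rm sp})^2$. Also $p-\frac1M\ge(1-c_{\rm sp})p$. Together these give $h(p)\ge\bigl[(1-c_{\rm sp})^3-\frac12\bigr]p$, a positive multiple of $|p-\frac1M|$ once $c_{\rm sp}\le\frac1{10}$.

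\textbf{Two simplifications.}
\begin{itemize}
  \item In the Efron--Stein step you do not need the \emph{relevant}/flip bookkeeping. $K$ can change only if the old or the new value of draw $j$ coincides with one of the other $s-1$ draws. That event has probability at most $2\sum_i p_i\bigl(1-(1-p_i)^{s-1}\bigr)\le 2\sum_i\min\{sp_i^2,p_i\}$.
  \item In the large-index case, the subcase $p_i<1/M$ is vacuous. Since $|\Delta_i|>1/s\ge 1/M$, we must have $p_i>\frac1M+\frac1s$, and hence $p_i\le 2|\Delta_i|$ always.
\end{itemize}

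\textbf{A caveat on $s\ge 2$.} Your strong-concavity bound $-g''\gtrsim s$ on $[0,1/s]$ needs $s\ge 2$. This is a defect of the statement rather than of your argument. For $s=1$ one has $K\equiv 1$, so the gap is $0$, and \eqref{eq:singleton-gap} fails for every non-uniform $R$.
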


\begin{remark}
    Paninski \cite{Paninski} states that he assumes (after converting to the notation directly above) $s = o(M)$. However, the proof of his lemma $2$ only uses $s < M$ and thus goes through unchanged. Lemma 1 only requires minor adjustments to work with this weaker assumption. We provide a full proof of the strengthened result in Appendix~\ref{app:Paninski}.
\end{remark}

The following lemma gives a more general version of Algorithm 1 from~\cite{aliakbarpour2017differentiallyprivateidentitycloseness}.
\begin{lemma}[Generalized Private uniformity test]\label{lem:robust-unif}
There is a universal constant \(C_{\rm UE}\) with the following property.  Let
\(\alpha,\xi\in(0,1]\), and choose
\[
  s\ge C_{\rm UE}\left(
      \frac{\sqrt M}{\alpha^2}
      +\frac{\sqrt M}{\alpha\sqrt\xi}
      \right),
  \qquad s\le c_{\rm sp}M.
\]
Set
\[
  L=\frac{s^2\alpha^2}{M},
  \qquad
  T=\mu_M-\frac{c_{\rm gap}}2L.
\]
The test that releases
\[
  Z=K+\Lap(2/\xi)
\]
and accepts exactly when \(Z\ge T\) is \(\xi\)-differentially private.  Moreover:
\begin{enumerate}[label=(\roman*)]
  \item if
  \[
    0\le \mu_M-\mathbb E_RK
    \le \frac{c_{\rm gap}}4L,
    \label{eq:robust-complete}
  \]
  then it accepts with probability at least \(5/6\);
  \item if \(\TV(R,U_M)\ge\alpha\), then it rejects with probability at least \(5/6\).
\end{enumerate}
\end{lemma}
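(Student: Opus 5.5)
We prove privacy first and then the two guarantees, using Chebyshev together with the singleton inequalities of Lemma~\ref{lem:singleton-ineq}.

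\textbf{Privacy.} Changing one sample changes the count $K$ of singletons by at most $2$. One symbol may lose or gain its singleton status at the old value, and one at the new value. So $K$ has global sensitivity $2$, and releasing $Z=K+\Lap(2/\xi)$ is $\xi$-DP by the Laplace mechanism. The accept/reject decision is post-processing of $Z$.

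\textbf{Laplace noise.} Write $\Delta=\mu_M-\E_R K$. The noise $\Lap(2/\xi)$ exceeds $(c_{\rm gap}/8)L$ in absolute value with probability $\exp(-c_{\rm gap}L\xi/16)$. Since $L=s^2\alpha^2/M\ge C_{\rm UE}^2/\xi$ by the $\sqrt M/(\alpha\sqrt\xi)$ term in the lower bound on $s$, this probability is at most $1/12$ once $C_{\rm UE}$ is large. Also $L\ge C_{\rm UE}^2/\alpha^2\ge C_{\rm UE}^2$ by the $\sqrt M/\alpha^2$ term, which we use below.

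\textbf{Completeness.} Here $0\le\Delta\le(c_{\rm gap}/4)L$, so $\E K\ge T+(c_{\rm gap}/4)L$. If the noise is at most $(c_{\rm gap}/8)L$ in magnitude, acceptance only needs $K\ge \E K-(c_{\rm gap}/8)L$. By \eqref{eq:singleton-var},
\[
\Var(K)\le C_{\rm var}\Bigl(\tfrac{c_{\rm gap}}4L+\tfrac{s^2}{M}\Bigr)=C_{\rm var}\Bigl(\tfrac{c_{\rm gap}}4+\alpha^{-2}\Bigr)L .
\]
Chebyshev then bounds the failure probability by
\[
\frac{C_{\rm var}\bigl(\tfrac{c_{\rm gap}}4+\alpha^{-2}\bigr)L}{(c_{\rm gap}L/8)^2}=O\!\Bigl(\frac{1}{L}+\frac{1}{\alpha^2L}\Bigr).
\]
Since $\alpha^2L=s^2\alpha^4/M\ge C_{\rm UE}^2$, this is at most $1/12$ for large $C_{\rm UE}$. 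A union bound with the noise event gives acceptance probability at least $5/6$.

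\textbf{Soundness.} Suppose $\TV(R,U_M)\ge\alpha$. By \eqref{eq:singleton-gap}, $\Delta\ge c_{\rm gap}L$, so $\E K\le T-(c_{\rm gap}/2)L$.

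The main obstacle is that $\Delta$ can be much larger than $L$, and then so can the variance. We therefore apply Chebyshev relative to $\Delta$ rather than $L$. Acceptance requires either the noise to exceed $(c_{\rm gap}/8)L$, or $K-\E K\ge \Delta-(c_{\rm gap}/2)L-(c_{\rm gap}/8)L\ge \Delta/4$. By \eqref{eq:singleton-var} the latter has probability at most
\[
\frac{16\,C_{\rm var}(\Delta+s^2/M)}{\Delta^2}\le 16\,C_{\rm var}\Bigl(\frac1\Delta+\frac{s^2/M}{\Delta^2}\Bigr).
\]
Using $\Delta\ge c_{\rm gap}L$, the first term is $O(1/L)$. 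The second is at most $\frac{s^2/M}{c_{\rm gap}^2L^2}=O\bigl(\frac{1}{\alpha^2 L}\bigr)$, since $L=(s^2/M)\alpha^2$. Both are at most $1/12$ for $C_{\rm UE}$ large. Combined with the noise event, the test rejects with probability at least $5/6$.

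The hypothesis $s\le c_{\rm sp}M$ is used only to invoke Lemma~\ref{lem:singleton-ineq}.
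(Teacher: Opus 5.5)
Your proof is correct, and it is the standard analysis the paper has in mind: the paper states Lemma~\ref{lem:robust-unif} without a separate proof, presenting it as a generalization of the ADR18 unique-elements tester whose analytic core is Lemma~\ref{lem:singleton-ineq}, and your argument (Laplace noise for a sensitivity-$2$ count, then Chebyshev with the variance bound, centered at $\Delta=\mu_M-\mathbb E_RK$ in the soundness case so that a large $\Delta$ is harmless) is exactly that analysis. One small bookkeeping fix: in soundness, take $C_{\rm UE}$ large enough that each of the two Chebyshev terms is at most $1/24$ rather than $1/12$; otherwise the failure probability, together with the $1/12$ noise event, is only bounded by $1/4$ instead of the required $1/6$.
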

\subsubsection{The identity-to-uniformity splitting map}\label{sec:identity-to-uniformity}
We review here Goldreich's reduction from identity testing to uniformity testing~\cite{goldreich2020uniform}.
Identify \(\cX\) with \([N]\), and let \(U_N\) be uniform on \(\cX\).  For arbitrary
distributions \(P,Q\) on \(\cX\), define
\[
  p_1=\frac12(P+U_N),
  \qquad
  q_1=\frac12(Q+U_N).
\]
For \(x\in\cX\), put
\[
  m_x=\lfloor 6Nq_1(x)\rfloor,
  \qquad
  \theta_x=\frac{m_x}{6Nq_1(x)}.
\]
Because \(q_1(x)\ge1/(2N)\), we have \(m_x\ge3\) and
\begin{equation}
  \theta_x\ge\frac23.
  \label{eq:theta-lower}
\end{equation}
Define distributions \(p_2,q_2\) on \(\cX\cup\{\botel\}\) by
\[
  p_2(x)=\theta_xp_1(x),\qquad
  q_2(x)=\theta_xq_1(x)=\frac{m_x}{6N},
\]
and put all remaining mass at \(\botel\).  Write
\[
  a=p_2(\botel),
  \qquad b=q_2(\botel).
\]
In particular, \(6Nb\) is a nonnegative integer.

The exact splitting map draws \(Y\sim p_2\), then draws
\(J\) uniformly from \([6Nq_2(Y)]\), and outputs \((Y,J)\).  If \(b=0\), then every
\(6Nq_1(x)\) is an integer, hence every \(\theta_x=1\) and consequently \(a=0\); thus the
zero-size \(\botel\)-block is never sampled.  Denote the output distribution by
\(R_0=F_Q(P)\).  Its alphabet has exactly
\(\sum_y6Nq_2(y)=6N\) elements.

\begin{lemma}[Splitting-map guarantees]\label{lem:splitting}
The exact splitting map satisfies
\begin{enumerate}[label=(\roman*)]
  \item if \(P=Q\), then \(R_0=U_{6N}\);
  \item for all \(P,Q\),
  \[
    \TV(R_0,U_{6N})\ge\frac13\TV(P,Q).
  \]
\end{enumerate}
\end{lemma}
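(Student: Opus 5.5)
Part (i) is a direct computation and part (ii) is a distance-preservation argument through the three stages $P\mapsto p_1\mapsto p_2\mapsto R_0$. For (i): if $P=Q$ then $p_1=q_1$ and $p_2=q_2$, so $\Pr[Y=y]=q_2(y)=m_y/(6N)$ for $y\in\cX$, and $\Pr[Y=\botel]=b$. Given $Y=y$, the label $J$ is uniform on a block of size $6Nq_2(y)$. Hence every pair $(y,j)$ has probability $q_2(y)\cdot\frac{1}{6Nq_2(y)}=\frac{1}{6N}$. Since the blocks partition an alphabet of $\sum_y 6Nq_2(y)=6N$ labels, $R_0=U_{6N}$. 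The degenerate case $b=0$ is handled by the remark preceding the lemma: there $a=0$, so the empty $\botel$-block carries no mass.

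For (ii), I would first show that the splitting step preserves distance exactly, i.e. $\TV(R_0,U_{6N})=\TV(p_2,q_2)$. Note that $U_{6N}$ is the output of the same splitting map applied to $q_2$. Inside block $y$, $R_0$ puts mass $p_2(y)/(6Nq_2(y))$ on each label and $U_{6N}$ puts $q_2(y)/(6Nq_2(y))$. Summing absolute differences over the $6Nq_2(y)$ labels of the block gives $|p_2(y)-q_2(y)|$, and summing over blocks gives $2\TV(p_2,q_2)$. The case $q_2(\botel)=0$ again needs $p_2(\botel)=0$, which holds by the earlier remark.

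Next I would lower bound $\TV(p_2,q_2)$ by $\TV(p_1,q_1)$ using \eqref{eq:theta-lower}:
\[
\TV(p_2,q_2)\ \ge\ \tfrac12\sum_{x\in\cX}\theta_x|p_1(x)-q_1(x)|\ \ge\ \tfrac12\cdot\tfrac23\sum_{x\in\cX}|p_1(x)-q_1(x)| = \tfrac23\TV(p_1,q_1),
\]
where the first inequality just drops the $\botel$ term. The last step is the mixing identity $p_1-q_1=\frac12(P-Q)$, which gives $\TV(p_1,q_1)=\frac12\TV(P,Q)$. Chaining the three stages yields
\[
\TV(R_0,U_{6N})\ \ge\ \tfrac23\cdot\tfrac12\TV(P,Q)=\tfrac13\TV(P,Q).
\]

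No step is a real obstacle. The only point needing care is the block-wise computation in the splitting step, where one must confirm that the same partition of labels serves both $p_2$ and $q_2$ (block $y$ has exactly $6Nq_2(y)=m_y$ labels, and block $\botel$ has $6Nb$ labels), and handle the edge case of empty blocks.
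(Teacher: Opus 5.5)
Your proof is correct. It is the standard three-stage argument from Goldreich's reduction, which the paper cites rather than reproving: mixing with $U_N$ halves the distance, the $\botel$-filter shrinks it by at most the factor $\min_x\theta_x\ge 2/3$, and the blockwise splitting preserves it exactly, with the remark that $b=0$ forces $a=0$ covering the empty block.
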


\subsection{Proof of Theorem 5.1}

Choose the constants \(C_Q,C_D\) so that the event
\begin{equation}
  \mathcal G=
  \left\{|\widetilde b-b|\le\eta
  \ \text{and}\ 
  |\widetilde a-a|\le2\eta\right\}
  \label{eq:good-event}
\end{equation}
has probability at least \(23/24\).  This follows directly from
\[
  \Pr\left(\left|\frac1t\sum_{j=1}^t Z_j-\mathbb EZ_j\right|>\eta\right)
  \le2e^{-2t\eta^2}
\]
for \(Z_j\in[0,1]\), and
\[
  \Pr\left(|\Lap(2/(t_D\eps))|>\eta\right)
  =e^{-t_D\eps\eta/2}.
\]

On \(\mathcal G\), the one-sided rounded estimate satisfies
\begin{equation}
  0\le\delta:=\overline b-b
  \le2\eta+\frac1{6N}.
  \label{eq:delta-bound}
\end{equation}
Moreover, if Step 3 does not reject, then
\begin{equation}
  |a-b|
  \le |a-\widetilde a|
      +|\widetilde a-\widetilde b|
      +|\widetilde b-b|
  <7\eta.
  \label{eq:a-b-bound}
\end{equation}

\subsubsection{Completeness}

Suppose \(D\) is a product distribution and the prover sends its exact marginals.  Then
\(Q=D\), hence \(p_1=q_1\), \(p_2=q_2\), and \(a=b\).  On \(\mathcal G\),
\[
  |\widetilde a-\widetilde b|\le3\eta<4\eta,
\]
so the verifier does not reject in Step 3.

Let \(K\) be the singleton count from \(s\) samples of \(\widehat R\).  The
\(6N(1-b)\) labels outside the residual block each have probability \(1/(6N)\), while the
\(\overline m\) residual labels each have probability \(b/\overline m\le1/(6N)\).  Therefore
\begin{align*}
  \mathbb E_{\widehat R}K
  &=s(1-b)\left(1-\frac1{6N}\right)^{s-1}
    +sb\left(1-\frac b{\overline m}\right)^{s-1}\\
  &\ge s\left(1-\frac1{6N}\right)^{s-1}
   =:\mu_{6N}.
\end{align*}
Thus enlarging the residual block cannot decrease the expected number of singletons.

Let \(f_s(k)=s(1-1/k)^{s-1}\).  Since
\[
  0\le f_s'(k)
  =\frac{s(s-1)}{k^2}\left(1-\frac1k\right)^{s-2}
  \le\frac{s^2}{k^2},
\]
the mean-value theorem and the definition of \(M\) give
\begin{align}
  0
  \le \mu_M-\mathbb E_{\widehat R}K
  &\le f_s(M)-f_s(6N)\notag\\
  &\le\frac{s^2(M-6N)}{(6N)^2}\notag\\
  &\le\frac{s^2}{N}\left(\frac\eta3+\frac1{12N}\right).
  \label{eq:complete-centering}
\end{align}
Here \(\mu_M-\mathbb E_{\widehat R}K\ge0\) also follows from
Lemma~\ref{lem:singleton-ineq} once \(s\le c_{\rm sp}M\).

Now set \(\alpha=\sigma/4\).  By first choosing \(c_0\) in
\(\eta=c_0\sigma^2\) sufficiently small and then choosing the universal constant \(c_*\)
in \eqref{eq:sparse-regime} sufficiently small, the right-hand side of
\eqref{eq:complete-centering} is at most
\[
  \frac{c_{\rm gap}}4\frac{s^2\alpha^2}{M}.
\]
The role of \eqref{eq:sparse-regime} here is twofold: it guarantees
\(s\le c_{\rm sp}M\) and makes the rounding term \(1/N\) negligible compared with
\(\sigma^2\).  Lemma~\ref{lem:robust-unif} therefore gives conditional acceptance
probability at least \(5/6\).  Consequently
\[
  \Pr(\text{accept})
  \ge1-\Pr(\mathcal G^c)-\frac16
  \ge1-\frac1{24}-\frac16
  =\frac{19}{24}>\frac23.
\]

\subsubsection{Soundness}

Suppose
\(\TV(D,\mathsf{Prod}(\cX))\ge\sigma\).  For every valid prover message, the induced
\(Q\) is a product distribution, so
\begin{equation}
  \TV(D,Q)\ge\sigma.
  \label{eq:D-Q-far}
\end{equation}
By Lemma~\ref{lem:splitting}, the exact split distribution satisfies
\begin{equation}
  \TV(R_0,U_{6N})\ge\sigma/3.
  \label{eq:exact-far}
\end{equation}

Condition on \(\mathcal G\).  If Step 3 rejects, soundness already holds.  Otherwise
\eqref{eq:a-b-bound} holds.  The exact and approximate split distributions differ only in
the residual block.  If \(b>0\), then
\begin{align}
  \TV(R_0,\widehat R)
  &=a\,\TV(U_{6Nb},U_{6N\overline b})\notag\\
  &=a\frac{\overline b-b}{\overline b}\notag\\
  &\le \delta+|a-b|\notag\\
  &\le9\eta+\frac1{6N}.
  \label{eq:split-perturbation}
\end{align}
If \(b=0\), then \(a=0\), as observed in Section~2.2, so
\eqref{eq:split-perturbation} remains valid.

On \(\mathcal G\), the number of active labels of \(\widehat R\) is
\[
  6N(1-b)+6N\overline b=6N(1+\delta)\le M.
\]
Embed both \(R_0\) and \(\widehat R\) into the same \(M\)-element alphabet.  Since
\[
  \TV(U_{6N},U_M)=1-\frac{6N}{M}
  \le2\eta+\frac1{2N},
\]
the triangle inequality, \eqref{eq:exact-far}, and
\eqref{eq:split-perturbation} yield
\begin{align}
  \TV(\widehat R,U_M)
  &\ge\TV(R_0,U_{6N})
       -\TV(R_0,\widehat R)
       -\TV(U_{6N},U_M)\notag\\
  &\ge\frac\sigma3-11\eta-\frac{2}{3N}.
  \label{eq:key-soundness}
\end{align}
Choose \(c_0\) and \(c_*\) sufficiently small.  Since
\(\sigma<1/12\) and \eqref{eq:sparse-regime} makes \(1/N=o(\sigma)\) uniformly in the
allowed parameter range, \eqref{eq:key-soundness} is at least \(\sigma/4\).
Lemma~\ref{lem:robust-unif} therefore gives conditional rejection probability at least
\(5/6\).  Hence, for every prover,
\[
  \Pr(\text{accept})
  \le\Pr(\mathcal G^c)+\frac16
  \le\frac1{24}+\frac16
  =\frac5{24}<\frac13.
\]

\subsubsection{Privacy}

The prover's message and the estimate \(\widetilde b\) use no private verifier sample.
The release \(\widetilde a\) is \(\eps/2\)-DP because its sensitivity is \(1/t_D\) and its
Laplace scale is \(2/(t_D\eps)\).  The map taking one fresh sample from \(D\) to one sample
from \(\widehat R\) depends only on \(Q\), \(\widetilde b\), and public randomness; thus
neighboring input datasets map to datasets differing in at most one observation.  The
unique-elements release is \(\eps/2\)-DP by Lemma~\ref{lem:robust-unif}.  Post-processing
and basic composition show that the entire transcript is \(\eps\)-DP.

\subsubsection{Sample, running-time, and communication bounds}

Since \(\eta=c_0\sigma^2\), the private residual-mass estimate uses
\[
  t_D=O\left(\sigma^{-4}+\frac1{\sigma^2\eps}\right)
\]
samples from \(D\).  Lemma~\ref{lem:robust-unif}, with
\(M=\Theta(N)\), \(\alpha=\sigma/4\), and \(\xi=\eps/2\), uses
\[
  s=O\left(
      \frac{\sqrt N}{\sigma^2}
      +\frac{\sqrt N}{\sigma\sqrt\eps}
      \right)
\]
additional samples.    Furthermore,
\eqref{eq:sparse-regime} implies
\[
  \sigma^{-4}
  =\frac1N\left(\frac{\sqrt N}{\sigma^2}\right)^2
  =O\left(\frac{\sqrt N}{\sigma^2}\right).
\]

Reading, validating, and preprocessing the marginals costs \(O(m)\) real-RAM operations.
Evaluating \(Q(x)\), and hence \(q_1(x),m_x,\theta_x\), costs \(O(d)\) per verifier
sample.  After the marginal cumulative distribution functions have been constructed,
sampling from \(Q\) takes \(O(\sum_i\log|A_i|)=O(\log N)\) time.  The
\(O(\sigma^{-4})\) simulated samples from \(q_1\) therefore cost
\(O(\sigma^{-4}\log N)\).  A hash table computes the singleton count in expected linear
time in the sample size. 

\subsection{Lower Bound}

We provide a reduction from private uniformity verification to private independence verification (both with a prover).  The reduction has minimal impact on sample complexity and privacy.  We then observe that (private) uniformity testing reduces to (private) uniformity verification, again with no impact on the parameters, so lower bounds on sample complexity for uniformity testing yield lower bounds on independence verification (\cite{ASZ18}). Our algorithm matches these bounds.

Interactive proofs cannot improve upon the sample complexity of uniformity testing, as noted by \cite{herman2022verifying}. The proof of this is fairly short so we include it here for the convenience of the reader.

\begin{proof}
    Given an interactive proof for uniformity over $[N]$, the tester can exactly replicate what the verifier would do to produce its messages (possibly given a response from the prover) as follows.  Whenever the prover is supposed to sample from the distribution $D$ (that it is claiming is uniform), the tester samples from the uniform distribution on $[N]$. Other than this, it exactly simulates the prover's computations at every round (This might be computationally inefficient).
    
    We now show that this protocol satisfies completeness and soundness:

    \textbf{Completenesss:} If the distribution is indeed uniform, this interaction is exactly the same as what would happen if the prover were honest, so completeness is immediate.

    \textbf{Soundness:} If the distribution is far from uniform, the soundness guarantee tells us that no matter what the prover does we will reject with high probability, including if the prover behaves as described above.
    
\end{proof}    

    Thus we can simulate the interactive proof with the same sample complexity. Given that the known algorithm for uniformity with matching sample complexity is already equal to the sample complexity, this says there's nothing that can possibly be improved on for uniformity with a prover. As such, even when we have access to a prover, reduction to uniformity results in the same lower bound. This holds even when we introduce privacy, as the same argument holds and the privacy loss of the algorithm is less than the privacy loss of the interactive proof because it doesn't need to publicize the messages.

\begin{theorem}
\label{thm:indep_lb}
    Suppose $A$ is an $\varepsilon$-DP protocol with $\varepsilon \ge \Omega\left(\frac{1}{s^{0.49}}\right)$ for verifying independence with $s$ samples from a distribution $D$ over the space $A_1 \times A_2 \times \dots \times A_d$ and $C$ communication complexity. Further, suppose $A$ has the following completeness and soundness guarantees:

\textbf{Completeness}: If $D$ a product distribution and the prover is honest, then the verifier will accept with probability at least $0.9$.

\textbf{Soundness}: If $D$ is at least $\sigma$-far from a product distribution, then the verifier will accept with probability at most $0.1$ for any prover.

Then, there exists a $2\varepsilon$-DP protocol for verifying uniformity with sample complexity at most $s$ and communication complexity $C$ with the following completeness and soundness guarantees: 

\textbf{Completeness:} If $D$ is uniform and the prover is honest, then the verifier accepts with probability at least $0.75$.

\textbf{Soundness:}  If $D$ is at least $\sigma\left(1+\frac{3\sqrt{\log N}}{2}\right)+\frac{5\log N\sqrt{\log \log N}}{\sqrt{s}}$-far from uniform, then the verifier rejects with probability at least $0.75$ for any prover.
\end{theorem}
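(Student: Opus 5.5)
Identify $[N]$ with $\{0,1\}^d$, where $d=\log N$. The uniformity protocol is the independence protocol $A$ run on the same $s$ samples, combined with a private check that every marginal is close to $\mathrm{Bern}(1/2)$. The honest prover for uniformity sends exactly what the honest independence prover would send for $U_N$, which is a product distribution. In parallel, the verifier computes each empirical marginal mean $\hat\mu_i=\frac1s\sum_j X_{j,i}$ and releases the whole vector with noise. Changing one sample moves each coordinate by at most $1/s$, so the $\ell_1$ sensitivity is $d/s$. Adding $\Lap(d/(s\varepsilon))$ to each coordinate is therefore $\varepsilon$-DP. Basic composition with $A$ gives $2\varepsilon$-DP overall, and the communication is unchanged.

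The verifier accepts if $A$ accepts and every noisy marginal satisfies $|\tilde\mu_i-1/2|\le\tau$, where
\[
\tau\asymp\frac{\sqrt{\log\log N}}{\sqrt s}+\frac{d\log d}{s\varepsilon}.
\]
The first term controls sampling error through Hoeffding's inequality plus a union bound over $d=\log N$ coordinates. The second term controls the Laplace tails, again with a union bound. The hypothesis $\varepsilon\ge\Omega(s^{-0.49})$ makes the Laplace term $O(\log N\log\log N/s^{0.51})$, which is absorbed into $O(\sqrt{\log\log N}/\sqrt s)$ up to the constants and logs appearing in the statement.

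\textbf{Completeness.} If $D=U_N$, then $A$ accepts with probability at least $0.9$. With the constants in $\tau$ chosen so the marginal test fails with probability at most $0.15$, a union bound gives acceptance probability at least $0.75$.

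\textbf{Soundness.} Suppose the verifier accepts with probability above $0.25$. On the marginal side, the true marginals must then satisfy $|\mu_i-1/2|\le 2\tau$ for every $i$; otherwise the test rejects with high probability by the same concentration bounds. On the independence side, soundness of $A$ forces $D$ to be $\sigma$-close to some product distribution $P$; otherwise it would accept with probability at most $0.1$ against every prover, and the marginal test only lowers this.

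The remaining step is to show that $D$ is close to $U_N$. Let $\Pi=\prod_i\mathrm{Bern}(\mu_i)$ be the product of $D$'s own marginals. Then
\[
\TV(D,U_N)\le \TV(D,\Pi)+\TV(\Pi,U_N).
\]
The second term is at most $\sum_i|\mu_i-1/2|\le 2d\tau$, which gives the $\log N\sqrt{\log\log N}/\sqrt s$ term; with explicit constants it should come out as $5\log N\sqrt{\log\log N}/\sqrt s$.

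The first term is the main obstacle: I need to turn $\TV(D,P)\le\sigma$ into a bound on $\TV(D,\Pi)$. First I would bound the marginals of $P$. Marginalization does not increase TV, so each marginal of $P$ is within $\sigma$ of the corresponding marginal of $D$, and hence within $\sigma+2\tau$ of $1/2$. I will use this only through KL divergence, where I need a lower bound on the marginal masses to compare KL with TV.

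Then I would use the Pythagorean identity for the information projection onto product distributions:
\[
\mathrm{KL}(D\|P)=\mathrm{KL}(D\|\Pi)+\mathrm{KL}(\Pi\|P).
\]
This gives $\mathrm{KL}(D\|\Pi)\le \mathrm{KL}(D\|P)$. To get from TV to KL and back, I need a reverse-Pinsker bound on $\mathrm{KL}(D\|P)$ in terms of $\TV(D,P)$. This requires $P$ to have no tiny atoms, and a general $P$ may not have that property. I would therefore first replace $P$ by a nearby product distribution $P'$ with marginals mixed toward $1/2$, whose per-coordinate masses stay in $[1/3,2/3]$. I would then bound the KL coordinatewise via the chain rule, so that $\mathrm{KL}(D\|P')=O(d\sigma^2)$ plus lower-order terms.

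Pinsker's inequality then gives
\[
\TV(D,\Pi)\le\sqrt{\mathrm{KL}(D\|P')/2}=O(\sigma\sqrt d),
\]
which is the $\frac{3\sqrt{\log N}}{2}\sigma$ term. Adding back $\TV(D,P)\le\sigma$ accounts for the remaining $\sigma$ in the statement. This yields the claimed distance bound, and $D$ can be farther from uniform only with acceptance probability at most $0.25$. If the reverse-Pinsker step turns out to be too lossy, I would instead try to bound $\mathrm{KL}(D\|P)$ directly through the chain rule over the $d$ Boolean coordinates, using that $D$ is close to $P$ in TV on each conditional.
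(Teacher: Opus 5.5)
Your overall architecture matches the paper: run $A$, privately check the marginals, and split $\TV(D,U_N)\le\TV(D,\Pi)+\TV(\Pi,U_N)$. However, the central step, bounding $\TV(D,\Pi)$ by $O(\sigma\sqrt d)$ through $\mathrm{KL}(D\|P')$, cannot work.

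By the Pythagorean identity you cite, $\mathrm{KL}(D\|P')\ge\mathrm{KL}(D\|\Pi)$ for every product $P'$. But $\mathrm{KL}(D\|\Pi)$ need not be $O(d\sigma^2)$ when $D$ is $\sigma$-close to a product distribution. Take $D=(1-\sigma)U+\frac{\sigma}{2}(\delta_{0^d}+\delta_{1^d})$ on $\{0,1\}^d$. Every marginal is $\mathrm{Bern}(1/2)$, so $\Pi=U$ and $\TV(D,U)\le\sigma$. Yet $\mathrm{KL}(D\|U)\ge\sigma\bigl((d-1)\ln 2-\ln(1/\sigma)-2\bigr)$, which is $\Theta(d\sigma)$ once $d\gg\log(1/\sigma)$. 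From there, Pinsker only yields $\TV(D,\Pi)=O(\sqrt{d\sigma})$.

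Your chain-rule fallback fails for the same reason. TV-closeness of $D$ to a product $P$ controls the conditionals $D(x_i=1\mid x_{<i})$ only in $L^1$, with average deviation at most $2\sigma$. Their squared deviations can therefore sum to $\Theta(d\sigma)$. In the example, the conditionals are near $0$ or $1$ on a set of $D$-mass about $\sigma$ for most $i$. No choice of $P'$ and no mixing of marginals toward $1/2$ helps: the obstruction is that $D$ itself sits in the first argument of the KL.

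The missing idea is to keep $D$ out of the divergence entirely. Write $\TV(D,\Pi)\le\TV(D,P)+\TV(P,\Pi)$ and apply Pinsker to the two product distributions $P$ and $\Pi$, where KL tensorizes exactly. With $P_i=\mathrm{Bern}(y_i)$,
$\mathrm{KL}(P\|\Pi)=\sum_i\mathrm{KL}(\mathrm{Bern}(y_i)\|\mathrm{Bern}(\mu_i))\le\sum_i\frac{(y_i-\mu_i)^2}{\mu_i(1-\mu_i)}\le\frac{9}{2}d\sigma^2$.
This uses $\mu_i\in[1/3,2/3]$, which the marginal check enforces (otherwise it rejects), and $|y_i-\mu_i|\le\TV(D,P)\le\sigma$, which you already derived. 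The lower bound on atoms is thus needed only for the $d$ Bernoulli marginals, never for the $2^d$-point distribution. The result is $\TV(D,\Pi)\le\bigl(1+\frac{3\sqrt d}{2}\bigr)\sigma$, exactly the $\sigma$-term in the statement. This is the paper's argument, with $P$ taken to be the closest product distribution.

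Two smaller points remain. First, you assume $N$ is a power of two; the paper pads $D$ with uniform mass on $[N+1,2^d]$. Second, per-coordinate noise $\Lap(d/(s\varepsilon))$ inflates your threshold by a factor of $d$. The paper instead releases the single statistic $\max_i|\hat\mu_i-1/2|$, which has sensitivity $1/s$, so $\Lap(1/(\varepsilon s))$ suffices. Under $\varepsilon\ge\Omega(s^{-0.49})$ that noise is below $1/\sqrt s$ with high probability, which is what makes the explicit $\frac{5\log N\sqrt{\log\log N}}{\sqrt s}$ term come out.
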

\begin{remark}
\label{remark: meaningful}
The theorem gives meaningful bounds when $\sigma\left(1+\frac{3\sqrt{\log N}}{2}\right)+\frac{5\log N\sqrt{\log \log N}}{\sqrt{s}}<1$.
\end{remark}

\begin{proof}[Proof of Theorem~\ref{thm:indep_lb}]
Let $d$ be the smallest integer such that $2^d \ge N$. Define the distribution
$$D' = \frac{N}{2^d}D+\frac{2^d-N}{2^d}\Unif[N+1,2^d].$$
The following procedure allows us to use sample access from $D$ to obtain sample access to $D'$:\\

With probability $\frac{N}{2^d}$, draw a sample from $D$. Otherwise, uniformly generate an integer from the range $[N+1,2^d]$. By definition,
$$\frac{N}{2^d}D_{TV}(D,\Unif[N])=D_{TV}(D',\Unif[2^d])$$
so verifying uniformity of $D$ is equivalent (up to a constant factor) to verifying uniformity of $D'$.
        
We construct a protocol for verifying uniformity of some distribution $D'$ over the domain $[2^d]$ as follows:
\begin{enumerate}
    \item Let $D''$ be a distribution over the space $A_1 \times A_2 \times \dots \times A_d$ where $A_i= \{0,1\}$ for each $i$ defined by $D'' = f(D')$ where $f:[2^d] \rightarrow \{0,1\}^d$ is any bijection.
    \item Prover and Verifier run $A$ to determine whether $D''$ is a product distribution. If the verifier rejects in the product verification protocol $A$, then it rejects.
    \item If the verifier accepted in the product verification protocol $A$,
 let the empirical marginals be denoted $\hat{p}_i$. If $\max\left|\hat{p}_i - \frac{1}{2}\right|+ \Lap\left(\frac{1}{\varepsilon s}\right) $ lies outside the range $[-\frac{7\sqrt{\ln d}}{2\sqrt{s}},\frac{7\sqrt{\ln d}}{2\sqrt{s}}]$, the verifier rejects.\footnote{$\max_{i\in d}|\hat{p}_i-\frac{1}{2}|$ has sensitivity at most $1/s$, so it suffices to add noise scaled to $\frac{1}{\varepsilon s}$.
 }
    \item If the verifier has not rejected, it accepts.
    \end{enumerate}

\textbf{Completeness:} If the prover is honest and $D'$ is uniform, then $D''$ is a product distribution over $\{0,1\}^d$ so the product verification protocol in Step 2 accepts with probability $0.9$. Then, the empirical marginals are binomial distributions and we have the following Chernoff bound, where $X$ is the dataset and $X_i$ is the count of the number of samples where $A_i = 1$:

$$\Pr_{X \sim (D'')^s}\left[\left|X_i-\frac{s}{2}\right| \ge \frac{\delta s}{2}\right] \le 2e^{-\delta^2s/6}$$
For sufficiently large $d$, we get $2e^{-\delta^2s/6} \le \frac{0.14}{d}$ when $\delta \ge \sqrt{\frac{6\ln d - 6 \ln(0.07)}{s}}$. This is because:

\begin{align*}2e^{-\delta^2s/6} &\le \frac{0.14}{d}\\
e^{-\delta^2s/6} &\le \frac{0.07}{d}\\
-\frac{\delta^2s}{6} &\le \ln(0.07) - \ln d\\
\delta &\ge \sqrt{\frac{6\ln d - 6 \ln(0.07)}{s}}\\
\end{align*}

Since we assumed $\varepsilon \ge \Omega\left(\frac{1}{s^{0.49}}\right)$, we have

$$\Pr\left[\left|\Lap\left(\frac{1}{\varepsilon s}\right)\right| > \frac{1}{\sqrt{s}}\right] = O\left(e^{-\varepsilon\sqrt{s}}\right) < 0.01$$

Now, consider $d$ large enough that $(25/4)\ln d-7\sqrt{\ln d} + 1 + 6 \ln(0.07) > 0$. This ensures that when we let $\delta = \sqrt{\frac{6\ln d - 6 \ln(0.07)}{s}}$, we have
$$\delta + \frac{1}{\sqrt{s}} < \frac{7\sqrt{\ln d}}{2\sqrt{s}}$$

Plugging $\delta = \sqrt{\frac{6\ln d - 6 \ln(0.07)}{s}}$ into the Chernoff bound above, and taking the union bound over the $d$ values of $i$ for randomness from sampling and the Laplace noise being added once, we get:

$$\Pr_{X \sim (D'')^s, \eta_i\sim\Lap(1/\varepsilon s)}\left[X_i + \eta_i \notin \left[\frac{1}{2}-\frac{7\sqrt{\ln d}}{2\sqrt{s}}, \frac{1}{2}+\frac{7\sqrt{\ln d}}{2\sqrt{s}}\right]\forall i \in[d] \right] \le 0.15$$

so by a union bound, we accept in both steps $2$ and $3$ with probability at least $0.75$.

\textbf{Soundness:}
Suppose
$$D_{TV}(D'', \Unif\{0,1\}^dt) \ge \sigma\left(1+\frac{3\sqrt{d}}{2}\right)+\frac{5d\sqrt{\log d}}{\sqrt{s}}$$
and let $D^{(0)}$ be the product distribution with the same marginals as $D''$. By the triangle inequality, we must either have
$$D_{TV}(D^{(0)}, \Unif\{0,1\}^d) \ge\frac{5d\sqrt{\log d}}{\sqrt{s}}$$
or
$$D_{TV}(D'',D^{(0)}) \ge \sigma\left(1+\frac{3\sqrt{d}}{2}\right).$$

In the former case, letting $D_i$ denote the marginal probability that the $i^{th}$ coordinate of $D^{(0)}$ equals $1$, we have

$$\frac{5d\sqrt{\log d}}{\sqrt{s}} \le D_{TV}(D^{(0)}, \Unif\{0,1\}^d) \le \sum_{i=1}^d \left|D_i - \frac{1}{2}\right|$$
so there must exist some $i$ such that
$$\left|D_i-\frac{1}{2}\right| \ge \frac{5\sqrt{\log d}}{\sqrt{s}}.$$
WLOG, suppose $D_i > \frac{1}{2}$. By Hoeffding's inequality, we have
$$\Pr_{X_1, X_2, \dots, X_s \sim D^{(0)}_i}\left[\sum_{j=1}^s X_j \le \frac{s}{2}+\frac{7}{2}\sqrt{s \log d} \right] \le e^{-\log d/2} = \frac{1}{\sqrt{d}}$$
and just like in the proof of completeness, we have
$$\Pr\left[\left|\Lap\left(\frac{1}{\varepsilon s}\right)\right| > \frac{1}{\sqrt{s}}\right] = O\left(e^{-\varepsilon\sqrt{s}}\right) < 0.01$$
and $\frac{\sqrt{\ln d}}{\sqrt{s}} > \frac{1}{\sqrt{s}}$ so the probability that the verifier accepts in step $3$ is at most $\frac{1}{\sqrt{d}}$ + 0.01.

In the latter case, we claim that if
$$Q^* = \min_{\text{product distributions }Q}D_{TV}(D'',Q),$$
then $D_{TV}(D'',D^{(0)}) \le \left(1+\frac{3\sqrt{d}}{2}\right) \cdot D_{TV}(D'',Q^*)$.

To see this, first note that by the triangle inequality,
$$D_{TV}(D'',D^{(0)}) \le D_{TV}(D'',Q^*) + D_{TV}(Q^*,D^{(0)}).$$

By Pinsker's inequality, we have

$$ D_{TV}(Q^*,D^{(0)}) \le \sqrt{\frac{1}{2}KL(Q^*,D^{(0)})}$$
and since $Q^*$ and $D^{(0)}$ are product distributions, if we let $Q^*_i$ denote the probability distribution of the $i^{th}$ coordinate of $Q^*$ and similarly let $D^{(0)}_i$ denote the probability distribution of the $i^{th}$ coordinate of $D^{(0)}$, we have

$$KL(Q^*||D^{(0)}) = \sum_{i=1}^d KL(Q^*_i||D^{(0)}_i).$$

Now, suppose $D^{(0)}_i = \operatorname{Bern}(x_i)$ and $Q^*_i = \operatorname{Bern}(y_i)$.

$$D_{KL}(Q^*_i||D^{(0)}_i) = y_i \log \frac{y_i}{x_i} + (1-y_i) \log\frac{1-y_i}{1-x_i} \le y_i\frac{y_i-x_i}{x_i} + (1-y_i)\frac{x_i-y_i}{1-x_i} = (y_i-x_i) \left(\frac{y_i}{x_i}-\frac{1-y_i}{1-x_i}\right)$$

where the last inequality follows from applying the identity $\log t \le t-1$. Additionally, note that $|x_i - y_i| = D_{TV}(D^{(0)}_i,Q^*_i) \le D_{TV}(D'',Q^*)$ since projecting onto the $i^{th}$ coordinate can only decrease the distance. We now note that as we just showed above, if the protocol does not reject in step $3$, then with high probability over both sampling and noise added for privacy, we have $x_i \in \left[\frac{1}{3},\frac{2}{3}\right]$. We now condition on this event for the following computations. Since $y_i \in [0,1]$, this means
$$ \left(\frac{y_i}{x_i}-\frac{1-y_i}{1-x_i}\right) = \left(\frac{y_i-x_i}{x_i(1-x_i)}\right) \le \frac{9}{2}(y_i-x_i)^2.$$
Putting this all together, we get:
$$D_{KL}(Q^*||D^{(0)}) \le \sum_{i=1}^d\frac{9}{2}\left(D_{TV}(D^{(0)}_i,Q^*_i)\right)^2 \le \frac{9d}{2}\left(D_{TV}(D'',Q^*)\right)^2$$

And plugging this into Pinsker's inequality:

$$D_{TV}(Q^*,D^{(0)}) \le \sqrt{\frac{9d}{4}\left(D_{TV}(D'',Q^*)\right)^2} = \frac{3\sqrt{d}}{2}D_{TV}(D'',Q^*)$$

And thus we have

$$D_{TV}(D'',D^{(0)}) \le \left(1+\frac{3\sqrt{d}}{2}\right)D_{TV}(D'',Q^*)$$

Thus, if $D_{TV}(D'',D^{(0)}) \ge \sigma\left(1+\frac{3\sqrt{d}}{2}\right)$, then $D_{TV}(D'',Q^*) \ge \sigma$ and soundness of $A$ tells us that the verifier rejects in Step 2 with probability at least $0.9$. Hence, taking a union bound, the probability that the verifier rejects overall is at least $0.88 > 0.75$.


\textbf{Privacy}

    We access the sample drawn from $D$, the distribution we wish to verify for uniformity, in step 2 and step 3. The distribution we're verifying for independence in step 2 is $D' = \frac{N}{2^d}D + \frac{2^d-N}{2^d} U$ where $U$ is uniform over the $[2^d] \setminus [N]$. Equivalently, each sample drawn for independence verification is iid Bernoulli as to whether it's drawn from $D$ or not.

    In particular, when we run the independence tester in step $2$, each sample drawn from $D$ appears in $D'$ at most once. Thus, since the independence tester satisfies $\varepsilon$-DP with respect to the sample drawn from $D'$, it also satisfies $\varepsilon$-DP with respect to the sample drawn from $D$.

    As noted above, $\max_{i\in d}|\hat{p}_i-\frac{1}{2}|$ has sensitivity at most $1/s$. Thus, in step $3$, we simply compute the maximum $|\hat{p}_i-\frac{1}{2}|$
    and add $\Lap(\frac{1}{\varepsilon s})$
    to ensure $\varepsilon$-DP.

    Thus, this is a protocol which satisfies $2\varepsilon$-DP for verifying uniformity.
\end{proof}


\begin{theorem}
\label{thm:unif_lb}
    (Section 5.3 in \cite{ASZ18}, restated) When $s \le N$, the sample complexity of testing uniformity with $\varepsilon$-DP, $\sigma$ soundness error, and domain $N$ is at least
    $$\Omega\left(\frac{\sqrt{N}}{\sigma^2} + \frac{\sqrt{N}}{\sigma \sqrt{\varepsilon}}\right)$$
\end{theorem}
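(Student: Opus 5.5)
The plan is to follow the argument of \cite{ASZ18} and prove the two terms of the bound separately. The non-private term $\sqrt N/\sigma^2$ holds even without privacy. The privacy term $\sqrt N/(\sigma\sqrt\varepsilon)$ will come from a coupling argument combined with group privacy. Throughout I would Poissonize the sample size, i.e.\ draw $\mathrm{Poi}(s)$ samples, which changes $s$ only by constant factors. I would use Paninski's hard family: pair up $[N]$ into $N/2$ pairs. For a uniformly random sign vector $z\in\{\pm1\}^{N/2}$, let $P_z$ give the two elements of pair $j$ masses $(1+z_j\cdot 2\sigma)/N$ and $(1-z_j\cdot 2\sigma)/N$. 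Every $P_z$ is $\sigma$-far from $U_N$, so a tester that succeeds with probability $2/3$ must distinguish $U_N^{s}$ from the mixture $\mathbb{E}_z P_z^{s}$.

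\textbf{Non-private term.} Under Poissonization the pair counts are independent across pairs, for $U_N$ as well as for each fixed $P_z$; since the $z_j$ are also independent, the mixture factorizes over pairs. I would bound the $\chi^2$ divergence (equivalently, the Ingster second-moment quantity $\mathbb{E}_{z,z'}\exp(\Theta(s^2\sigma^4\langle z,z'\rangle/N^2))$) by $\exp(O(s^4\sigma^8/N^3))\cdot$ something that is $1+o(1)$ whenever $s\ll \sqrt N/\sigma^2$. This is Paninski's calculation; it forces $s\gtrsim\sqrt N/\sigma^2$.

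\textbf{Privacy term.} I would use the coupling lemma of \cite{ASZ18}. If there is a coupling of $X\sim U_N^{s}$ and $Y\sim \mathbb{E}_zP_z^{s}$ with $\mathbb{E}[d_{\mathrm{Ham}}(X,Y)]\le D$, then any $\varepsilon$-DP tester succeeding with probability $2/3$ needs $\varepsilon D=\Omega(1)$. The proof of the lemma is group privacy together with Markov's inequality applied to the Hamming distance. I would build the coupling pair by pair. Condition on the total count $k$ in a pair; it has the same $\mathrm{Poi}(2s/N)$ law under both hypotheses. Given $k$, the split is $\mathrm{Bin}(k,1/2)$ under $U_N$ and a $\pm$-mixture of $\mathrm{Bin}(k,(1\pm2\sigma)/2)$ under the alternative. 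For $k\le1$ these laws coincide, so no change is needed. For $k\ge2$ the first-order terms in $\sigma$ cancel by symmetry of the mixture, so the total variation distance is $O(k^2\sigma^2)$. A maximal coupling then changes $O(k)$ labels with probability $O(k^2\sigma^2)$. Since $s\le N$ gives $2s/N=O(1)$, summing over the $N/2$ pairs gives
\[
D \;\lesssim\; N\sum_{k\ge2}\Pr[\mathrm{Poi}(2s/N)=k]\,k^3\sigma^2 \;\lesssim\; N\cdot (s/N)^2\sigma^2=\frac{s^2\sigma^2}{N}.
\]
Then $\varepsilon s^2\sigma^2/N\gtrsim1$, which yields $s\gtrsim\sqrt N/(\sigma\sqrt\varepsilon)$.

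\textbf{Main obstacle.} The delicate step is the per-pair coupling estimate. I must show that the mixture's deviation from $\mathrm{Bin}(k,1/2)$ is second order in $\sigma$ and that the tail sum over $k$ is dominated by $k=2$. This is exactly where the hypothesis $s\le N$ enters, since it keeps the per-pair Poisson rate bounded. I would handle it by expanding the binomial probabilities to second order in $\sigma$. The last step is to de-Poissonize, using concentration of $\mathrm{Poi}(s)$, and then add the two lower bounds, since each one alone is a valid lower bound.
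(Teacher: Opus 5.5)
Your proposal is correct and is essentially the argument of \cite{ASZ18} that the paper imports without proof. It uses Paninski's paired $\pm 2\sigma/N$ perturbation family and a second-moment bound for the $\sqrt N/\sigma^2$ term, together with the coupling lemma and a Poissonized per-pair maximal coupling of expected Hamming distance $O(s^2\sigma^2/N)$ (this is where $s\le N$ enters) for the $\sqrt N/(\sigma\sqrt\varepsilon)$ term. One slip to fix in the non-private part: the Ingster exponent is $8s\sigma^2\langle z,z'\rangle/N$, not $\Theta(s^2\sigma^4\langle z,z'\rangle/N^2)$, so $1+\chi^2=\cosh(8s\sigma^2/N)^{N/2}\le\exp(16s^2\sigma^4/N)$; your displayed exponent would instead give $\exp(O(s^4\sigma^8/N^3))$, which would falsely certify indistinguishability up to $s\approx N^{3/4}/\sigma^2$ and so does not support the threshold you state.
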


\begin{corollary}
Let $\sigma\left(1+\frac{3\sqrt{\log N}}{2}\right)+\frac{5\log N\sqrt{\log \log N}}{\sqrt{s}}<1$.
    Suppose $A$ is a $\varepsilon$-DP protocol for verifying independence with $s$ samples from a distribution $D$ over the space $A_1 \times A_2 \times \dots \times A_n$ and $C$ communication complexity. Further, suppose $A$ has the following completeness and soundness guarantees:

\textbf{Completeness}: If $D$ a product distribution and the prover is honest, then the verifier will accept with probability at most $0.9$.

\textbf{Soundness}: If $D$ is at least $\sigma$-far from a product distribution, then the verifier will accept with probability at most $0.1$ for any prover.

Then $s \ge \tilde{\Omega}\left(\frac{\sqrt{N}}{\sigma^2} + \frac{\sqrt{N}}{\sigma \sqrt{\varepsilon}}\right)$
\end{corollary}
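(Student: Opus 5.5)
Combine Theorem~\ref{thm:indep_lb}, the simulation argument preceding it (interactive proofs do not help for uniformity), and Theorem~\ref{thm:unif_lb}. Note that the completeness hypothesis as written (``at most $0.9$'') must be read as ``at least $0.9$'', matching Theorem~\ref{thm:indep_lb}. First dispose of the case $\varepsilon < O(1/s^{0.49})$: there the claimed bound $\sqrt{N}/(\sigma\sqrt{\varepsilon})$ is at most polynomially related to $s$. I would handle it by monotonicity. An $\varepsilon$-DP protocol is also $\varepsilon'$-DP for every $\varepsilon' \ge \varepsilon$, so we may apply the argument with $\varepsilon' = \max\{\varepsilon, c/s^{0.49}\}$. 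If $\varepsilon' > \varepsilon$, then $s^{0.49} \ge c/\varepsilon$, and the bound $s \gtrsim \sqrt N/(\sigma\sqrt{\varepsilon'})$ obtained below must be converted back into one in terms of $\varepsilon$. This conversion is the step I expect to be the main obstacle; if it fails, the corollary should be stated under the same hypothesis $\varepsilon \ge \Omega(s^{-0.49})$ as Theorem~\ref{thm:indep_lb}.

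In the main case, apply Theorem~\ref{thm:indep_lb} to obtain a $2\varepsilon$-DP uniformity-verification protocol with sample complexity at most $s$, completeness $0.75$, and soundness $0.75$ at distance
\[
\sigma' = \sigma\Bigl(1+\tfrac{3\sqrt{\log N}}{2}\Bigr)+\tfrac{5\log N\sqrt{\log\log N}}{\sqrt{s}}.
\]
By the hypothesis, $\sigma' < 1$, so this guarantee is nontrivial (Remark~\ref{remark: meaningful}). Next, use the simulation argument: the verifier itself plays an honest prover that samples from $\Unif[N]$. This turns the protocol into an unassisted tester with the same sample complexity, whose output is a post-processing of the verifier's view, so it remains $2\varepsilon$-DP. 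Its acceptance and rejection probabilities are at least $0.75$ in the two cases, and standard constant-factor amplification is absorbed into the $\Omega$.

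Now invoke Theorem~\ref{thm:unif_lb} with proximity $\sigma'$ and privacy $2\varepsilon$, which applies whenever $s \le N$. If $s > N$ the claimed bound is immediate, since $N \ge \sqrt{N}/\sigma^2$ up to the allowed regime; otherwise adjust by polylog factors. This gives
\[
s \ \ge\ \Omega\Bigl(\frac{\sqrt N}{\sigma'^2}+\frac{\sqrt N}{\sigma'\sqrt{\varepsilon}}\Bigr).
\]
Finally, unpack $\sigma'$ by splitting into two cases according to which term of $\sigma'$ dominates.

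\textbf{Case 1: $\sigma\sqrt{\log N} \gtrsim \log N\sqrt{\log\log N}/\sqrt s$.} Then $\sigma' = O(\sigma\sqrt{\log N})$, and the bound becomes $s \ge \Omega\bigl(\sqrt N/(\sigma^2\log N) + \sqrt N/(\sigma\sqrt{\varepsilon\log N})\bigr)$, which is the claimed $\tilde\Omega$ bound.

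\textbf{Case 2: the $1/\sqrt s$ term dominates.} Then $\sigma' \lesssim \log N\sqrt{\log\log N}/\sqrt s$, and substituting gives $s \gtrsim \sqrt N\, s/\mathrm{polylog}(N)$. This forces $N \le \mathrm{polylog}(N)$, a contradiction for large $N$. Alternatively, the case condition itself yields $s \gtrsim \log N\,\log\log N/\sigma^2$; the contradiction route is the one that actually excludes this case, up to polylog slack.

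Either way, $s \ge \tilde\Omega\bigl(\sqrt N/\sigma^2+\sqrt N/(\sigma\sqrt\varepsilon)\bigr)$.
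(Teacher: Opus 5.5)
Your route is the paper's. Theorem~\ref{thm:indep_lb} turns the independence protocol into a $2\varepsilon$-DP uniformity verifier at distance $\sigma'=\sigma(1+\frac{3\sqrt{\log N}}{2})+\frac{5\log N\sqrt{\log\log N}}{\sqrt{s}}$. The simulation argument makes this a tester, and Theorem~\ref{thm:unif_lb} gives $s\gtrsim\sqrt{N}/\sigma'^2+\sqrt{N}/(\sigma'\sqrt{\varepsilon})$. The case where the $1/\sqrt{s}$ term of $\sigma'$ dominates is then excluded by the same contradiction $s\gtrsim s\sqrt{N}/\mathrm{polylog}(N)$. The two places where you go beyond the paper are exactly where things break.

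First, the paper's proof never checks the hypothesis $\varepsilon\ge\Omega(s^{-0.49})$ of Theorem~\ref{thm:indep_lb}, and your monotonicity patch does fail, as you feared. Running the argument with $\varepsilon'=cs^{-0.49}$ only gives $s\ge\tilde{\Omega}(\sqrt{N}s^{0.245}/\sigma)$. This is polynomially weaker than $\sqrt{N}/(\sigma\sqrt{\varepsilon})$ once $\varepsilon\ll s^{-0.49}$, and no polylog slack absorbs the difference. For instance, $\varepsilon=1/N$, $s=N^{0.7}$, $\sigma=1/(3\sqrt{\log N})$ satisfies the derived bound (and the corollary's hypothesis $\sigma'<1$) but violates the claimed one. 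Giving Step~3 of the reduction its own larger budget does not help either, since the resulting uniformity tester is then only $(\varepsilon+\varepsilon')$-DP. So this argument, yours and the paper's alike, proves the corollary only under the extra hypothesis $\varepsilon\ge\Omega(s^{-0.49})$. That is your stated fallback; dropping the hypothesis would need a genuinely different argument.

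Second, your treatment of $s>N$ is not valid. The hypothesis bounds $\sigma$ only from above, so when $\sigma<N^{-1/4}$ (or $\sigma\sqrt{\varepsilon}<N^{-1/2}$), $s>N$ does not give $s\ge\sqrt{N}/\sigma^2$ (resp.\ $s\ge\sqrt{N}/(\sigma\sqrt{\varepsilon})$). The paper likewise applies Theorem~\ref{thm:unif_lb} without verifying $s\le N$. The clean fix is to cite the full ASZ18 characterization of private identity testing, whose lower bound contains the terms $\sqrt{N}/\alpha^2+\sqrt{N}/(\alpha\sqrt{\varepsilon})$ for all sample sizes, rather than the $s\le N$ version restated in the paper.
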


\begin{proof}
Theorem~\ref{thm:indep_lb} shows that a $\varepsilon$-DP protocol to verify independence with $s$ samples allows us to verify uniformity with $2\varepsilon$-DP, $s$ samples, and error threshold $\sigma\left(1+\frac{3\sqrt{\log N}}{2}\right)+\frac{5\log N\sqrt{\log \log N}}{\sqrt{s}}$. Theorem~\ref{thm:unif_lb} then tells us that
    \begin{align*}s &\ge \frac{\sqrt{N}}{\left(\sigma\left(1+\frac{3\sqrt{\log N}}{2}\right)+\frac{5\log N\sqrt{\log \log N}}{\sqrt{s}}\right)^2} + \frac{\sqrt{N}}{\left(\sigma\left(1+\frac{3\sqrt{\log N}}{2}\right)+\frac{5\log N\sqrt{\log \log N}}{\sqrt{s}}\right)\sqrt{2\varepsilon}}\\
    &\ge \frac{\sqrt{N}}{\left(\tilde{O}(\sigma) + \tilde{O}(1/\sqrt{s})\right)^2} + \frac{\sqrt{N}}{\left(\tilde{O}(\sigma) + \tilde{O}(1/\sqrt{s})\right) \cdot \sqrt{\varepsilon}}\\
    &\ge \tilde{\Omega}\left(\frac{\sqrt{N}}{\sigma^2} + \frac{\sqrt{N}}{\sigma \sqrt{\varepsilon}}\right)
    \end{align*}
    which matches our upper bound up to poly-logarithmic factors in $N$. The final inequality comes from noting that if $\tilde{O}(1/\sqrt{s})$ were the dominant term in the denominator(s), we would require
    $$s \ge \frac{\sqrt{N}}{\left(\tilde{O}(\sigma) + \tilde{O}(1/\sqrt{s})\right)^2} = \tilde{O}\left(s \sqrt{N}\right)$$
    which is impossible.
\end{proof}

\section{Generic Reduction for \texorpdfstring{$\varepsilon$}{Epsilon}-DP AM Protocols}
\label{sec:generic}
Appendix B of \cite{aliakbarpour2017differentiallyprivateidentitycloseness,aliakbarpour2017differentiallyprivateidentityclosenessArxiv}\footnote{In this section, we will cite both the proceedings version and the full version of this work when citing results from this work. This particular result only appears in the full version, or \cite{aliakbarpour2017differentiallyprivateidentityclosenessArxiv}.} gives a generic reduction that converts any non-private tester to a private tester which increases the sample complexity by a factor of $O(1/\varepsilon)$:

\begin{algorithm}
    \caption{Generic Private Tester \cite{aliakbarpour2017differentiallyprivateidentitycloseness,aliakbarpour2017differentiallyprivateidentityclosenessArxiv}}
    \label{alg:generic}
    \begin{algorithmic}
        \Require Non-private distribution tester $A$ with error probability at most $\frac{1}{6}$, sample access to a distribution $D$, and privacy loss parameter $\varepsilon$
        \State $m = \left\lceil \frac{6}{\varepsilon}\right\rceil$
        \State $s'  = m \cdot s(n,\xi)$
        \State $x_1, x_2, \dots x_{s'} \sim D$
        \State $r \sim \Unif[m]$
        \State $O \leftarrow A\left(\{x_{(r-1)s+1}, x_{(r-1)s+2}, \dots, x_{rs}\}\right)$
        \State With probability $\frac{1}{6}$, flip the value of $O$
        \State Return $O$
    \end{algorithmic}
\end{algorithm}
\begin{lemma}
    (From \cite{aliakbarpour2017differentiallyprivateidentitycloseness,aliakbarpour2017differentiallyprivateidentityclosenessArxiv}). Suppose $\mathcal{A}$ is a tester that draws $s(n,\xi)$ samples from distribution $D$ and decides whether $D$ is within $\xi$ total variation distance of some property $\mathcal{P}$ with error probability at most $\frac{1}{6}$. Then, Algorithm~\ref{alg:generic} is $\varepsilon$-DP with sample complexity $O(s(n,\xi)/\varepsilon)$, and decides whether $D$ is within $\xi$ total variation distance of $\mathcal{P}$ with error probability at most $\frac{1}{3}.$
\end{lemma}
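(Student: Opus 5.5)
The plan has three parts: privacy, correctness, and sample complexity. Write $s=s(n,\xi)$ and $m=\lceil 6/\varepsilon\rceil$. The sample of size $s'=ms$ is split into $m$ disjoint blocks, and one block is chosen uniformly at random as the index $r$.

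\textbf{Privacy.} Fix adjacent datasets $X,X'$ of size $s'$ that differ in a single element. That element lies in exactly one block $r_0$.

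Condition on $r$:
\begin{itemize}
\item if $r\neq r_0$, the tester $\mathcal{A}$ sees identical inputs, so the output distributions coincide;
\item if $r=r_0$ (probability $1/m$), the pre-flip output $O$ can be arbitrarily different.
\end{itemize}
Let $p=\Pr[O=1\text{ on }X]$ and $p'=\Pr[O=1\text{ on }X']$. Let $a$ be the probability that $\mathcal{A}$ outputs $1$ given $r\ne r_0$; this is the same quantity for both datasets. Then
\[
|p-p'|\le \frac{1}{m}\le\frac{\varepsilon}{6}.
\]

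After the flip with probability $1/6$, the released bit satisfies
\[
\Pr[\text{out}=1]=\tfrac16+\tfrac23 p \in\left[\tfrac16,\tfrac56\right],
\]
and the same holds for the event $\text{out}=0$. Hence, for each output value $o$,
\[
\frac{\Pr[\text{out}=o\mid X]}{\Pr[\text{out}=o\mid X']}\le 1+\frac{\tfrac23\cdot\tfrac{\varepsilon}{6}}{1/6}=1+\tfrac{2\varepsilon}{3}\le e^{\varepsilon}.
\]
This gives pure $\varepsilon$-DP. Because the output is a single bit, the two events $\{0\}$ and $\{1\}$ are the only nontrivial ones to check.

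\textbf{Correctness.} Each block is an i.i.d.\ sample of size $s$ from $D$. So, whatever the value of $r$, $\mathcal{A}$ errs with probability at most $1/6$. The random flip adds at most $1/6$ more error, by a union bound. The total error is therefore at most $1/3$.

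\textbf{Sample complexity.} The algorithm draws $ms=O(s/\varepsilon)$ samples.

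The only step that needs care is the privacy calculation. The key point is that the flip keeps both output probabilities bounded below by $1/6$, which lets the additive change of $1/m$ be converted into a multiplicative $e^{\varepsilon}$ bound. The constant in $m=\lceil 6/\varepsilon\rceil$ is chosen so that this conversion goes through. Everything else is immediate.
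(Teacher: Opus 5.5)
Your proof is correct and follows essentially the same argument the paper uses when it reproduces this reduction in its proof of the AM extension. There, too, the differing element is used with probability at most $1/m \le \varepsilon/6$, and the random flip with probability $1/6$ keeps both output probabilities at least $1/6$, turning the additive gap into a multiplicative $e^{\varepsilon}$ bound; your accounting of the flip's factor $2/3$ gives the slightly tighter ratio $1+2\varepsilon/3$ rather than the paper's $1+\varepsilon$, but the route is the same.
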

Now, we claim that this generic reduction can be extended to public coin interactive proofs.

\begin{theorem}
Suppose AM distribution tester $T$ can test whether some distribution $D$ $\xi$-approximately satisfies $\mathcal{P}$ for some property $\mathcal{P}$ with probability $\frac{5}{6}$. Suppose further that $T$ has sample complexity $s$ and communication complexity $c$. There exists an $\varepsilon$-DP AM tester $T'$ that can test whether $D$ $\xi$-approximately satisfies $\mathcal{P}$ with probability $\frac{2}{3}$ using sample complexity $\left\lceil\frac{6}{\varepsilon}\right\rceil s$ and communication complexity $c$.
\end{theorem}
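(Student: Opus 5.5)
The approach is to move all of the public-coin communication of $T$ to before any sample is drawn, so that once the transcript is fixed, $T$'s final decision is just an ordinary (non-private) tester on its sample. Algorithm~\ref{alg:generic} can then be applied to that decision. Concretely, $T'$ runs the interaction of $T$ exactly as written: the verifier sends uniformly random strings and the prover replies. Since in an AM protocol the verifier's messages are random strings whose length does not depend on the data, this can be done before the verifier draws its sample, as noted in the preliminaries following \cite{chiesa_et_al:LIPIcs.ITCS.2018.53}. Only after the transcript $\tau$ is complete does the verifier draw $s'=\lceil 6/\varepsilon\rceil s$ samples. It picks a uniform block index $r\in[\lceil 6/\varepsilon\rceil]$, computes $O=\mathrm{Dec}_T(\tau,\text{block }r)$ using $T$'s decision procedure, flips $O$ with probability $1/6$, and outputs the result. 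The transcript is identical to that of $T$, so the communication is $c$.

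For privacy, I fix any adversarial prover $A$. The prover's messages are a function only of the public coins and $D$, so the messages of the transcript have the same distribution under adjacent datasets $X,X'$. The only output that depends on the sample is the final bit. Conditioning on $\tau$, the map $X\mapsto O$ is exactly Algorithm~\ref{alg:generic} applied to the tester $\mathrm{Dec}_T(\tau,\cdot)$, and the lemma of \cite{aliakbarpour2017differentiallyprivateidentityclosenessArxiv} shows that this is $\varepsilon$-DP for \emph{any} fixed decision function; that privacy argument never uses correctness. For intuition on why: the changed sample lies in block $r$ with probability only about $\varepsilon/6$, and the $1/6$ flip bounds the likelihood ratio of each output by $5$ in the other case. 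Integrating the conditional $\varepsilon$-DP inequality over the identically distributed $\tau$ gives $\varepsilon$-DP for the full transcript, for every set $E$.

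For correctness, the sub-sample in block $r$ is an i.i.d.\ sample of size $s$ from $D$, independent of $\tau$. The honest-completeness and soundness probabilities are therefore exactly those of $T$: accept with probability at least $5/6$, and against any prover accept with probability at most $1/6$. The random flip then moves each error probability from at most $1/6$ to at most $1/6+1/6=1/3$, giving success probability $2/3$.

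The main obstacle I expect is the privacy step: I have to make sure that conditioning on the transcript legitimately reduces to the non-interactive lemma. The approach relies on two facts. First, the AM messages, including a possibly randomized adversarial prover's replies, are independent of the verifier's sample. Second, the lemma's privacy guarantee holds uniformly over decision functions. I would state this explicitly and verify it by writing $\Pr[\Interact\in E]=\sum_\tau\Pr[\tau]\Pr[O\in E_\tau\mid\tau]$ and bounding each term.
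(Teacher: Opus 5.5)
Your proposal is correct and takes essentially the same route as the paper. Both complete all public-coin communication before drawing the sample, apply the subsample-and-flip construction of Aliakbarpour et al.\ to the decision, and get privacy from the fact that only the final bit depends on the sample. The differences are cosmetic: the paper takes a uniformly random size-$s$ subsample rather than a random block, and it bounds $\Pr[T'(X)=\text{accept}]\le\Pr[T'(Y)=\text{accept}]+\varepsilon/6$ directly. Your explicit conditioning on the transcript $\tau$ handles arbitrary transcript events $E$ somewhat more carefully than the paper's argument does.
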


\begin{proof}
We construct $T'$ as follows: 
\begin{enumerate}
    \item the honest prover in $T'$ is identical to that in $T$.
    \item In the $i^{th}$ round of interaction, when the verifier in $T$ would send random bits $\rho_i$ to the prover, the verifier in $T'$ does the same. All interactions are completed before doing anything else.

    \item The verifier in $T'$ draws $\left\lceil \frac{6}{\varepsilon}\right\rceil \cdot s$ samples from $D$, and then takes a random subsample of size $s$. Denote the subsample by $(x_1, x_2, \dots, x_s)$. $(x_1, x_2, \dots, x_s)$ is then used in place of the verifier's sample in $T$.
    
    \item The verifier in $T'$ does all calculations the verifier in $T$ would have done given the same transcript of interactions and sample $(x_1, x_2, \dots, x_s)$.

    \item Once simulation is complete, the verifier flips the answer with probability $\frac{1}{6}$ and outputs the result.
\end{enumerate}

\paragraph{Completeness and Soundness}
When $T'$ takes a sample of size $\left\lceil \frac{6}{\varepsilon}\right\rceil \cdot s$ from $D$ and then a random subsample of size $s$ from the sample, it is equivalent to taking a sample of size $s$ from $D$. Thus, the output of step 5 in $T'$ matches the output of $T$ with probability $\frac{5}{6}$ for any prover over the randomness of the verifier. Since $T$ is correct with probability $\frac{5}{6}$, $T'$ is correct with probability $\frac{25}{36}$.

\paragraph{Privacy}

In the public coin proof systems from Chiesa and Gur \cite{chiesa_et_al:LIPIcs.ITCS.2018.53}, the samples drawn are independent from the transcript of the protocol, and are drawn after the communication phase. Thus, the only output released that can contain information about the sample is the output of the protocol. Thus we can prove privacy identically to in the generic reduction above:

Let $X,Y$ be neighboring datasets which differ on the $i^{th}$ coordinate. Furthermore, let $i_1, i_2, \dots, i_s$ denote the indices of the subsample. Then, we have

\begin{align*}\Pr[T'(X) = \text{accept}] &= Pr[T'(X) = \text{accept} \cap i \notin \{i_1, i_2, \dots, i_s\}] + Pr[T'(X) = \text{accept} \cap i \in \{i_1, i_2, \dots, i_s\}]\\
&=Pr[T'(Y) = \text{accept} \cap i \notin \{i_1, i_2, \dots, i_s\}] + Pr[T'(X) = \text{accept} \cap i \in \{i_1, i_2, \dots, i_s\}]\\
&\le Pr[T'(Y) = \text{accept}] + Pr[i \in \{i_1, i_2, \dots, i_s\}]\\
&\le Pr[T'(Y) = \text{accept}] + \frac{\varepsilon}{6}
\end{align*}

And since both probabilities are at least $\frac{1}{6}$, we have

$$\frac{\Pr[T'(X) = \text{accept}]}{Pr[T'(Y) = \text{accept}]} \le 1 + \varepsilon \le e^\varepsilon$$

\end{proof}
\section{Improved Bounds for Label-Invariant Properties With Propose-Test-Release}
\label{sec:label_invariant}

In this section, we modify the AM protocol of~\cite{herman2024public} for verifying label-invariant distribution properties in order to protect the privacy of the verifier's sensitive data samples.  Since the original protocol is based on looking at various sorts of counts on a sample drawn from $D$, and since the original algorithm had to be robust to small changes in these counts because of randomness in the sample, it is intuitive that adding a little noise to these counts in order to protect privacy will not threaten completeness or soundness.  We prove that this is the case. Moreover, the verifier sample complexity of our privacy-preserving protocol will be roughly $\tO(N^{2/3}/\eps^{2/3})$, which is smaller than the cost of $O(N^{2/3}/\eps)$ that would be incurred by the generic private-to-non-private reduction presented in \Cref{sec:generic}.

The starting point of the \cite{herman2024public} protocol is the observation that label-invariant distribution properties depend only on the distribution's histogram of probability masses. With this in mind, their protocol proceeds by generating a uniform sample of domain elements and using the prover to determine the weights of each element in the sample. They call this process \emph{tagged sample retrieval}. They note that knowing these weights is sufficient to construct an approximate histogram and thus verify any label-invariant property. For reference, their tagged sample retrieval protocol is described below in \Cref{alg:non-private-label-invariant}, and its soundness and completeness guarantees are summarized in \Cref{thm:herman2024public}.

\begin{algorithm}
    \caption{Non-Private AM Protocol for Tagged Sample Retrieval From \cite{herman2024public}}
    \label{alg:non-private-label-invariant}
    \textbf{Require:} Proximity parameter $\sigma \in (0, 1)$, sample size $s = O(N^{2/3}\log(N)/\sigma^5)$, and ample access to a distribution $D$ over $[N]$ with $\max_{x \in [N]} D(x) \le p_{\mathrm{max}}$, where we define $p_{\mathrm{max}} = 1/s$ and $p_{\mathrm{min}} = \sigma/1000N$. We also define $\tau = \sigma^3/80000$.
    \begin{enumerate}
        \item $V$ draws uniform samples $S_1, \ldots, S_s \iid [N]$. $V$ rejects if any $x \in [N]$ appears over $\log N$ times among these samples. Otherwise, $V$ sends these samples to $P$.
        \item $P$ sends back tags $\pi_1, \ldots, \pi_s$ to $V$, where
        \[
            \pi_i = \begin{cases}
                D(S_i) &\text{if }D(S_i) \ge p_{\mathrm{min}},\\
                0 &\text{if }D(S_i) < p_{\mathrm{min}}.
            \end{cases}
        \]
        \item $V$ draws two fresh samples of size $s$ from $D$, denoted $T_1, \ldots, T_s \iid D$ and $T_1', \ldots, T_s' \iid D$.
        \item$V$ calculates, for each integer $j$ (possibly negative) such that $p_{\mathrm{min}} \le e^{j\tau}/N \le p_{\mathrm{max}}$, the bin
        \[
            S^j = \biggl\{i \in [s] \,\bigg|\, \pi_i \in \Bigl[\frac{e^{j\tau}}{N}, \frac{e^{(j+1)\tau}}{N}\Bigr)\biggr\}.
        \]
        $V$ calculates the number of two- and three-way collisions among $S, T, T'$ for each index $j$:
        \begin{align*}
            \pair_j &= \Bigl\lvert \bigl\{(k,r) \in [s]^2 \,\big|\, k \in S^j \text{ and }S_k = T_r\bigr\}\Bigr\rvert,\\
            \triple_j &= \Bigl\lvert\bigl\{(k,r,r') \in [s]^3 \,\big|\, k \in S^j\text{ and } S_k = T_r=T'_{r'}\bigr\}\Bigr\rvert.
        \end{align*}
        $V$ rejects if some bin with $\abs{S^j} \ge e^{-j\tau} \cdot s \cdot \sigma\tau/100\log(N)$ fails either of the following checks:
            \begin{align*}
                \pair_j &\in (1 \pm 4\tau) \cdot s \abs{S^j} \frac{e^{j\tau}}{N}, \\
                \triple_j &\in (1 \pm 4\tau) \cdot s^2\abs{S^j} \Bigl(\frac{e^{j\tau}}{N}\Bigr)^2.
            \end{align*}
        \item $V$ calculates $S^{-\infty} = \{i \in [s] \,|\,\pi_i = 0\}$ and the corresponding two-way collision count:
        \[
            \pair_{-\infty} = \Bigabs{\bigl\{(k,r) \in [s]^2\,\big|\,k \in S^{-\infty} \text{ and } S_k = T_r\bigr\}}.
        \]
        $V$ rejects if the following check fails:
        \[
            \pair_{-\infty}, \le s\abs{S^{-\infty}}\frac{\sigma}{50N}.
        \]
        
        \item $V$ accepts and outputs the tagged samples $(S_1, \pi_1), \ldots, (S_s, \pi_s)$.
    \end{enumerate}
\end{algorithm}

\begin{theorem}[\cite{herman2024public}]
\label{thm:herman2024public}
    Given a proximity parameter $\sigma > 0$ that is not too small (i.e. satisfying $1/\sigma \le N^{o(1)}$), along with access to $\tO(N^{2/3}) \cdot \mathrm{poly}(1/\sigma)$ samples from $D$, \Cref{alg:non-private-label-invariant} satisfies the following completeness and soundness conditions:
    \begin{itemize}
        \item \textbf{Completeness:} If the prover is honest, then with probability at least $3/4$, the verifier accepts and outputs tagged samples $(S_1, \pi_1), \ldots, (S_s, \pi_s)$ satisfying
        \[
            \frac{1}{s}\sum_{i \in [s] : \pi_i \ge \frac{\sigma}{1000N}} \biggl(1 - \min\Bigl\{\frac{\pi_i}{D(S_i)}, \frac{D(S_i)}{\pi_i}\Bigr\}\biggr) \le O(\sigma^3).
        \]
        and
        \[
            \frac{1}{s}\sum_{i\in[s]:\pi_i<\frac{\sigma}{1000N}} D(S_i) \le O\Bigl(\frac{\sigma}{N}\Bigr).
        \]
        \item \textbf{Soundness:} If the prover is dishonest, then with probability at least $3/4$, the verifier either rejects or outputs tagged samples $(S_1, \pi_1), \ldots, (S_s, \pi_s)$ satisfying
        \[
            \frac{1}{s}\sum_{i \in [s] : \pi_i \ge \frac{\sigma}{1000N}} \biggl(1 - \min\Bigl\{\frac{\pi_i}{D(S_i)}, \frac{D(S_i)}{\pi_i}\Bigr\}\biggr) \le O(\sigma).
        \]
        and
        \[
            \frac{1}{s}\sum_{i\in[s]:\pi_i<\frac{\sigma}{1000N}} D(S_i) \le O\Bigl(\frac{\sigma}{N}\Bigr).
        \]
    \end{itemize}
\end{theorem}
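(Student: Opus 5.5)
The plan is to analyse \Cref{alg:non-private-label-invariant} bin by bin, using the collision statistics $\pair_j,\triple_j$ as unbiased estimators of the first and second moments of the true weights $D(S_k)$ over the indices $k\in S^j$. Throughout I condition on the public sample $S_1,\dots,S_s$ (and the tags) and take randomness only over $T,T'$. Fix a bin $j$ with $w_k:=D(S_k)$ for $k\in S^j$. Then
\[
\E[\pair_j]=s\sum_{k\in S^j}w_k,\qquad \E[\triple_j]=s^2\sum_{k\in S^j}w_k^2 .
\]
The first concrete step is a variance bound for both counts. It should use $w_k\le p_{\max}=1/s$, the multiplicity cap of $\log N$ from Step~1, and the size threshold $\abs{S^j}\ge e^{-j\tau}s\sigma\tau/100\log N$, so that Chebyshev gives multiplicative accuracy $(1\pm\tau)$ simultaneously for all relevant bins. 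There are $O(\log N/\tau)$ bins, so a union bound is needed. The sample size $s=\tO(N^{2/3})\mathrm{poly}(1/\sigma)$ is chosen exactly so that the three-way collision variance, whose dominant term scales like $s^3\sum w_k^3\approx s^3\abs{S^j}e^{3j\tau}/N^3$, is small relative to $(s^2\abs{S^j}e^{2j\tau}/N^2)^2$. Checking this is where $N^{2/3}$ enters.

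\textbf{Completeness.} If the prover is honest, every $k\in S^j$ has $w_k\in[e^{j\tau}/N,e^{(j+1)\tau}/N)$. The means therefore lie within $(1\pm 2\tau)$ of the targets, and concentration places the observed counts inside $(1\pm4\tau)$. For $S^{-\infty}$, all $w_k<p_{\min}$, so $\E[\pair_{-\infty}]\le s\abs{S^{-\infty}}\sigma/1000N$ and Markov/Chebyshev passes the check. The Step~1 rejection event has small probability by a balls-in-bins bound. The output guarantees are then immediate. Honest tags are exact above $p_{\min}$, giving the $O(\sigma^3)$ bound trivially. Below $p_{\min}$, the average of $D(S_i)$ is at most $p_{\min}=O(\sigma/N)$.

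\textbf{Soundness.} This is the main obstacle. The key observation is that the verifier's $S_i$ are uniform and the prover, seeing them, cannot distinguish them from its own draws. Its only freedom is the tag assignment. Within a large bin, let $\bar w=\frac1{\abs{S^j}}\sum w_k$ and $\overline{w^2}=\frac1{\abs{S^j}}\sum w_k^2$. Passing both checks forces
\[
\bar w\approx e^{j\tau}/N \quad\text{and}\quad \overline{w^2}\approx (e^{j\tau}/N)^2
\]
up to $O(\tau)$ relative error. Hence the empirical variance of $w_k/(e^{j\tau}/N)$ over the bin is $O(\tau)$. By Chebyshev and Cauchy--Schwarz, $\frac1{\abs{S^j}}\sum_k|1-w_k N e^{-j\tau}|=O(\sqrt{\tau})=O(\sigma^{3/2})$.

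Small bins below the size threshold contribute in total at most $\sum_j e^{-j\tau}\sigma\tau/100\log N$ times their mis-tagging loss. I would bound this as $O(\sigma)$ by pairing the $e^{-j\tau}$ factor with the fact that a badly mis-tagged element of true weight $w$ has relative error at most $1$. I expect to need a careful accounting here, weighting by $\pi_i$, or else the result is $O(\sigma)$ only because the summand is bounded by $1$ and bin counts are at most $O(\log N/\tau)$.

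For $S^{-\infty}$, the single pair check gives $\frac1{\abs{S^{-\infty}}}\sum D(S_k)\le \sigma/50N+$ fluctuation, which yields the $O(\sigma/N)$ bound after dividing by $s$. Adding the large-bin $O(\sqrt\tau)$ contribution and the small-bin $O(\sigma)$ contribution gives the claimed soundness bound. Making the variance bound for $\triple_j$ hold against adversarially chosen bins is the step I expect to be hardest. I would handle it by union-bounding over the $O(\log N/\tau)$ bins only after fixing $S$ and $\pi$, since $T,T'$ are drawn afterwards.
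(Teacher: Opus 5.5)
\textbf{There is a genuine gap in the small-bin step of soundness.} You flag this step yourself, but neither accounting you suggest gives $O(\sigma)$ for the quantity in the statement.

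The problem is a mismatch between the threshold and the error measure. The threshold $|S^j|\ge e^{-j\tau}s\sigma\tau/100\log N$ in Algorithm~\ref{alg:non-private-label-invariant} is a \emph{mass} threshold: bin $j$ is checked iff its estimated mass $|S^j|e^{j\tau}/s$ is at least $\sigma\tau/100\log N$. The first soundness inequality, however, is an \emph{unweighted} average over indices. Since the bins extend down to $e^{j\tau}\approx\sigma/1000$, the unchecked bins can jointly hold
\[
\frac{s\sigma\tau}{100\log N}\sum_{j\ge j_{\min}}e^{-j\tau}\approx\frac{s\sigma\tau}{100\log N}\cdot\frac{1000}{\sigma\tau}=\frac{10\,s}{\log N}
\]
indices, dominated by the lowest bins. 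So ``summand $\le 1$'' yields only $O(1/\log N)$. Your cruder count ($O(\log N/\tau)$ bins times the largest threshold $\approx 10s\tau/\log N$) yields only $O(1)$. Neither is $O(\sigma)$ once $\sigma\ll 1/\log N$, which $1/\sigma\le N^{o(1)}$ allows. Pairing $e^{-j\tau}$ with ``relative error at most $1$'' cannot help, because the unweighted summand has no weight factor to cancel $e^{-j\tau}$.

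This is not just slack in the estimate. Take $D=U_N$. A prover can tag all but $\Theta(s/\log N)$ of the $S_i$ honestly as $1/N$, so bin $0$ passes both checks. It spreads the remaining indices, each bin just below threshold, over the bins with $e^{j\tau}\in[\sigma/1000,\sigma/500]$. Nothing else is checked and $S^{-\infty}=\emptyset$. The verifier therefore accepts w.h.p.\ while the unweighted error is $\Theta(1/\log N)$. Hence this step cannot be closed for the statement exactly as written here.

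What closes it is matching the error measure to the threshold, or vice versa. There are two ways to do this.
\begin{itemize}
\item \textbf{Weight the error by $N\pi_i$.} An unchecked bin then contributes at most $\frac{1}{s}|S^j|e^{(j+1)\tau}<e^{\tau}\sigma\tau/100\log N$. The $B=\log(p_{\max}/p_{\min})/\tau=O(\log N/\tau)$ bins total $O(\sigma)$. Your large-bin bound then needs only the extra fact that $\frac{N}{s}\sum_i D(S_i)=O(1)$ with constant probability, which follows from Markov over the uniform $S$.
\item \textbf{Use a count threshold $b_{\min}=(\sigma/2B)s$, as in Algorithm~\ref{alg:label-invariant}.} The trivial bound then gives at most $\sigma/2$. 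The lowest checked bins still have $\E[\triple_j]\gtrsim s^3\sigma^3/(BN^2)$, at the cost of further $\mathrm{poly}(1/\sigma)$ factors in $s$.
\end{itemize}

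For comparison, the paper does not prove this cited theorem. In its private analogue it uses exactly that count threshold, derives your bin-wise moment conditions, and then invokes a claim of \cite{herman2024public} as a black box. Your Cauchy--Schwarz step is a correct, self-contained substitute for that claim on checked bins. With $u_k=w_kNe^{-j\tau}$, $\overline{(u-1)^2}=O(\tau)$ gives $\overline{|u-1|}=O(\sqrt\tau)$, and $1-\min\{r,1/r\}\le|1-r|$ for $r=D(S_i)/\pi_i$.

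Two smaller points:
\begin{itemize}
\item The verifier must reject tags in $(0,p_{\min})$ or outside the union of bins. Otherwise those tags escape every check, and the second soundness inequality fails.
\item With $p_{\max}=1/s$, the binding term of $\Var[\triple_j]$ is $s^2\sum_k m_k^2w_k^2\lesssim\log N\cdot\E[\triple_j]$, where $m_k\le\log N$ is the multiplicity of $S_k$ in $S$. It is not $s^3\sum_k w_k^3$. So $N^{2/3}$ enters through requiring $\E[\triple_j]\approx s^3/N^2$ to exceed $\mathrm{polylog}(N)\,\mathrm{poly}(1/\sigma)$.
\end{itemize}
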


In this section, we shall demonstrate how to convert \Cref{alg:non-private-label-invariant} into a private protocol without paying the multiplicative factor of $O(1/\eps)$ that would be incurred by the generic private-to-non-private reduction presented in the previous section. Since the non-private protocol required roughly $N^{2/3}$ samples, the generic transformation would produce a private version using $N^{2/3}/\eps$ samples. Our algorithm, in contrast, will use a smaller sample size of $N^{2/3}/\eps^{2/3}$. Our privacy-preserving protocol is described in \Cref{alg:label-invariant}. For clarity, we have emphasized the key differences in blue. The algorithm's privacy and utility guarantees are summarized in \Cref{thm:label-invariant-privacy,thm:label-invariant-utility} below.

\begin{algorithm}
\caption{Approximate DP AM Protocol for Tagged Sample Retrieval}
\label{alg:label-invariant}
\textbf{Require:} \highlight{Privacy parameters $\eps>0$ and $\delta \in (0, 1)$}, proximity parameter $\sigma \in (0, 1)$, sample size $s\in \tO(N^{2/3}/\eps^{2/3}) \cdot \mathrm{poly}(\log(1/\delta)/\sigma)$, and sample access to a distribution $D$ over $[N]$ with $\max_{x \in [N]} D(x) \le p_{\mathrm{max}}$, where we define \highlight{$p_{\mathrm{max}} = \log(1/\delta)/\eps s$} and $p_{\mathrm{min}} = \sigma/N$. We also define $\tau = \sigma^3$ and  $B = \log(p_{\mathrm{max}}/p_{\mathrm{min}})/\tau$ and $b_{\mathrm{min}} = (\sigma/2B)s$ and \highlight{$\Delta = 3\log(s) + 5\log(1/\delta)/\eps$}.
\begin{enumerate}
    \item \label{step:seed-col} $V$ draws uniform samples $S_1, \ldots, S_s \iid [N]$. $V$ rejects if any $x \in [N]$ appears over $3$ times among these samples. Otherwise, $V$ sends these samples to $P$.
    \item \label{step:prover} $P$ sends back tags $\pi_1, \ldots, \pi_s$ to $V$, where
    \[
        \pi_i = \begin{cases}
            D(S_i) &\text{if }D(S_i) \ge p_{\mathrm{min}},\\
            0 &\text{if }D(S_i) < p_{\mathrm{min}}.
        \end{cases}
    \]
    \item \label{step:ptr} $V$ draws two fresh samples of size $s$ from $D$, denoted $T_1, \ldots, T_s \iid D$ and $T_1', \ldots, T_s' \iid D$. \highlight{$V$ calculates $m(T) = \max_{k \in [s]}\lvert\{r \in [s] \mid S_k = T_r\}\rvert$ and $m(T')$ similarly. $V$ rejects if} 
    \[
        \highlight{\Lap\Bigl(\max\bigl\{m(T), m(T')\bigr\}, \frac{1}{\eps}\Bigr) > \Delta - \frac{\log(1/\delta)}{\eps}.}
    \]
    \item \label{step:data-col} $V$ calculates, for each integer $j$ (possibly negative) such that $p_{\mathrm{min}} \le e^{j\tau}/N \le p_{\mathrm{max}}$, the bin
    \[
        S^j = \biggl\{i \in [s] \,\bigg|\, \pi_i \in \Bigl[\frac{e^{j\tau}}{N}, \frac{e^{(j+1)\tau}}{N}\Bigr)\biggr\}.
    \]
    $V$ calculates the number of two- and three-way collisions among $S, T, T'$ for each index $j$:
    \begin{align*}
        \pair_j &= \Bigl\lvert \bigl\{(k,r) \in [s]^2 \,\big|\, k \in S^j \text{ and }S_k = T_r\bigr\}\Bigr\rvert,\\
        \triple_j &= \Bigl\lvert\bigl\{(k,r,r') \in [s]^3 \,\big|\, k \in S^j\text{ and } S_k = T_r=T'_{r'}\bigr\}\Bigr\rvert.
    \end{align*}
    $V$ rejects if some bin with $\abs{S^j} \ge b_{\mathrm{min}}$ fails either of the following checks:
        \begin{align*}
            \highlight{\Lap\Bigl(\pair_j, \frac{3}{\eps}\Bigr)} &\in (1 \pm \tau) \cdot s \abs{S^j} \frac{e^{j\tau}}{N}, \\
            \highlight{\Lap\Bigl(\triple_j, \frac{3\Delta}{\eps}\Bigr)} &\in (1 \pm \tau) \cdot s^2\abs{S^j} \Bigl(\frac{e^{j\tau}}{N}\Bigr)^2.
        \end{align*}
    \item \label{step:light-col} $V$ calculates $S^{-\infty} = \{i \in [s] \,|\,\pi_i = 0\}$ and the corresponding two-way collision count:
    \[
        \pair_{-\infty} = \Bigabs{\bigl\{(k,r) \in [s]^2\,\big|\,k \in S^{-\infty} \text{ and } S_k = T_r\bigr\}}.
    \]
    $V$ rejects if the following check fails:
    \[
        \highlight{\Lap\Bigl(\pair_{-\infty}, \frac{3}{\eps}\Bigr) \le 2p_{\mathrm{min}} \cdot s^2.}
    \]
    
    \item $V$ accepts and outputs the tagged samples $(S_1, \pi_1), \ldots, (S_s, \pi_s)$.
\end{enumerate}
\end{algorithm}

To better understand the difference between the non-private protocol in \cite{herman2024public} and our new protocol, which is described in \Cref{alg:label-invariant}, let $V$ denote the verifier and $P$ denote the prover. As before, the key quantities are $\pair_j$ and $\triple_j$, which respectively count the number of two- and three-way collisions among elements sampled by the verifier (some from the uniform distribution over $[N]$, and some from $D$). We note that the sensitivity of $\triple_j$ (ie, the sensitivity of the number of 3-way collisions), is the maximum number of pairs $(S_k,T_r)$ where $S_k = T_r = x$ for some $x \in [N]$ where $S$ is the sample from uniform, $T$ is the sample from $D$, and $S_k \in S$ and $T_r \in T$. In general, this quantity, which is the maximum  number of 2-way collisions for any fixed value in $[N]$, could be as large as $s^2$ if both $S$ and $T$ are $s$ occurrences of a single $x \in [N]$.  Instead, following~\cite{dwork2009differential}, we  ``propose'' an upper bound on this sensitivity and test, in a differentially private fashion, that the proposed bound indeed exceeds the actual sensitivity of the number of 3-way collisions for the sample $S$.  This is known as the {\em local} sensitivity. The values are calibrated such that with probability $1-\delta$, if the local sensitivity is greater than what we allow, we will reject, and otherwise we will continue with the protocol, calibrating the privacy noise to the proposed sensitivity when deciding whether to reject based on the two-way and three-way collision tests.

Notably, \Cref{alg:non-private-label-invariant} and its analysis assume that the distribution $D$ is sufficiently \emph{dense}, or equivalently, has sufficiently high \emph{min-entropy}, meaning that $\max_{x \in [N]} D(x) \le 1/s$, where $s$ is the sample size required for the protocol. The justification for this assumption is that domain elements with mass exceeding $1/s$ can effectively filtered out in a preprocessing step using an additional $O(s)$ samples from the distribution. In the private case, we can similarly filter out elements with mass exceeding $\log(1/\delta)/\eps s$ using just $O(s)$ samples using an $(O(\eps), O(\delta))$-DP preprocessing step (e.g. using Theorem 7.1 of the survey \cite{vadhan2017dp}). Therefore, in \Cref{alg:label-invariant}, we assume that $D$ satisfies $\max_{x \in [N]} D(x) \le \log(1/\delta)/\eps s$, which is a weaker assumption than before.

Another difference between \Cref{alg:non-private-label-invariant} and \Cref{alg:label-invariant} is in the threshold used in the penultimate step, regarding $\pair_{-\infty}$. For technical reasons which will become clear over the course of the proof, we replace the factor of $\abs{S^{-\infty}}$ with the (larger) factor of $s$.

We now turn our attention to the privacy and utility analysis of our algorithm:

\begin{theorem}
\label{thm:label-invariant-privacy}
    \Cref{alg:label-invariant} satisfies $(O(\eps), O(\delta))$-DP for any choice of the algorithm's parameters.
\end{theorem}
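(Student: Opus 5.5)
The proof is a propose-test-release argument combined with basic composition. The first step is to fix what is private. The verifier's sensitive data are the two samples $T,T'$ drawn from $D$; neighboring datasets differ in a single entry $T_r$ or $T'_{r'}$. The uniform seeds $S_1,\dots,S_s$ and the prover's tags $\pi_i$ do not depend on $T,T'$. Step~\ref{step:seed-col} and the prover message are therefore post-processing of public randomness, and all privacy loss comes from the noisy releases in Steps~\ref{step:ptr}, \ref{step:data-col} and \ref{step:light-col}. Conditioning on $S$ and $\pi$ throughout, I would bound each release separately and then compose. The transcript consists of a bounded number of accept/reject bits, each a post-processing of the noisy counts. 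I would view the noisy counts as released in full, which only strengthens the claim.

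The next step is to bound the sensitivities.
\begin{itemize}
\item For $m(T)$, changing one $T_r$ changes each count $\lvert\{r: S_k=T_r\}\rvert$ by at most $1$. Hence $\max\{m(T),m(T')\}$ has sensitivity $1$, and the noisy release in Step~\ref{step:ptr} is $\eps$-DP by the Laplace mechanism.
\item For the two-way counts, changing $T_r$ from $x$ to $y$ changes $\sum_j \pair_j$ in $\ell_1$ by at most $2\cdot 3$. This uses the fact that, after Step~\ref{step:seed-col}, each value appears at most $3$ times in $S$, and $S$ is public, so the bound holds deterministically. The bins $S^j$ partition the indices. The whole vector $(\pair_j)_j$ together with $\pair_{-\infty}$ therefore has $\ell_1$-sensitivity $O(1)$, and Laplace noise of scale $3/\eps$ per coordinate gives $O(\eps)$-DP for all of them jointly.
\item For the three-way counts, changing one $T_r$ from $x$ to $y$ changes the vector $(\triple_j)_j$ by at most $\lvert\{k:S_k=x\}\rvert\cdot\lvert\{r':T'_{r'}=x\}\rvert$, plus the same quantity for $y$. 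This is at most $2m(T')$, and symmetrically at most $2m(T)$ when an entry of $T'$ changes. So the local sensitivity is $O(\max\{m(T),m(T')\})$, and this is the quantity PTR must control.
\end{itemize}

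The PTR step is the main obstacle, and I would handle it as follows. Consider the good set $G$ of datasets with $\max\{m(T),m(T')\}\le \Delta$. If $(T,T')\notin G$, the true value exceeds $\Delta$. In that case the test passes only if the Laplace noise is below $-\log(1/\delta)/\eps$, which has probability at most $\delta/2$. So outside $G$ the algorithm rejects in Step~\ref{step:ptr}, except with probability $O(\delta)$. Whenever the protocol continues past Step~\ref{step:ptr}, I would compare against a hypothetical mechanism that releases the $\triple_j$ with noise of scale $3\Delta/\eps$. On any neighboring pair where both datasets lie in $G$, this mechanism is $O(\eps)$-DP, because the $\ell_1$-sensitivity is $O(\Delta)$. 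Here I would have to check that the constant $3$ absorbs the factor from the two changed values; if not, I would accept an $O(\eps)$ loss. For a neighboring pair with one dataset outside $G$, the sensitivity of $m$ is $1$, so the other dataset has $\max m\ge\Delta$. Both datasets then pass the test with probability at most $e^{\eps}\delta$, and these transcripts can be charged to the additive $\delta$. Formally, for any event $E$ I would split $\Pr[E]$ into the reject-at-PTR part, which is $\eps$-DP, and the continue part. The continue part is either bounded by $O(\delta)$ or handled by the in-$G$ sensitivity bound. This is the standard PTR argument of \cite{dwork2009differential}.

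Finally, I would compose the three pieces by basic composition: the PTR test ($\eps$), the pair counts ($O(\eps)$), and the triple counts ($O(\eps)$, with additive $O(\delta)$). Post-processing into accept/reject decisions and the output of the tagged samples, which depend only on public $S,\pi$, then gives $(O(\eps),O(\delta))$-DP for the whole transcript. This holds for any parameter choice, since no step used utility assumptions.
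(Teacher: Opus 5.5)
Your proposal is correct and follows the same propose-test-release argument as the paper. It uses a Laplace test on $\max\{m(T),m(T')\}$ with sensitivity $1$, sensitivity $O(1)$ for the pair counts guaranteed by Step~1, local sensitivity $O(\Delta)$ for the triple counts, and basic composition; your joint $\ell_1$-sensitivity treatment over all bins (including $\pair_{-\infty}$) and your explicit case analysis around the good set $G$ are more careful than the paper's sketch that ``each data point affects at most one $j$''.

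The only slips are constant-level. Because a value can occur up to $3$ times in $S$, replacing one $T_r$ changes the triple-count vector by at most $6m(T')$, not $2m(T')$. Also, the neighbor of a dataset outside $G$ only satisfies $\max m>\Delta-1$, not $\max m\ge\Delta$. Both are absorbed by the $O(\eps)$ loss and the $e^{\eps}\delta$ bound that you already allowed for.
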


\begin{proof}
    At a high level, \Cref{alg:label-invariant} is an instance of the \emph{propose-test-release (PTR)} framework in differentially private algorithm design \cite{dwork2009differential}. In our case, the statistics of interest are the two- and three-way collision counts $\pair_j$ and $\triple_j$ for various bin indices $j$, and we use PTR to control the local sensitivity of these quantities. In more detail, we first observe that the (global) sensitivity of $\pair_j$, which counts pairwise collisions between $S$ and $T$, is $3$. This is because Step~\ref{step:seed-col} of the algorithm ensures that no element among $S_1, \ldots, S_s$ appears more than $3$ times. Similarly, under the additional (data-dependent) assumption that no element in $T$ or $T'$ appears more than $\Delta$ times, the (local) sensitivity of $\triple_j$, which counts three-way collisions among $S$, $T$, and $T'$, is $3\Delta$. For this reason, in Step~\ref{step:ptr}, we propose a bound on $\max\{m(T), m(T')\}$, which is the number of relevant duplicate elements in either $T$ or $T'$. We observe that $m(T)$ and $m(T')$ both have sensitivity $1$, owing to the fact that replacing a single element in either $T$ or $T'$ can change either count by at most $1$. Therefore, testing this bound with the classic Laplace mechanism satisfies $(\eps, 0)$-DP.

    If the PTR test in Step~\ref{step:ptr} passes, then we know that the local sensitivities of $\pair_j$ and $\triple_j$ are bounded by $3$ and $3\Delta$, respectively, except in the rare (i.e. probability $\delta$) event that the Laplace noise in Step~\ref{step:ptr} had magnitude exceeding $\log(1/\delta)/\eps$. Under this local sensitivity bound, the additional instances of the Laplace mechanism that appear in Steps~\ref{step:data-col} and \ref{step:light-col} also satisfy $(\eps, 0)$-DP. We also know that each data point among $T$ and $T'$ affects $\pair_j$ and $\triple_j$ for at most one value of $j \in \Z \cup \{-\infty\}$. Therefore, in total, the composite algorithm satisfies $(O(\eps), O(\delta))$-DP.
\end{proof}

\begin{theorem}
\label{thm:label-invariant-utility}
    Given privacy parameters $\eps,\delta > 0$ and a proximity parameter $\sigma > 0$ that are not too small (i.e. satisfying $\log(1/\delta)/\eps\sigma \le N^{o(1)}$), along with access to $s = \tO(N^{2/3}/\eps^{2/3}) \cdot \mathrm{poly}(\log(1/\delta)/\sigma)$ samples from $D$, \Cref{alg:label-invariant} satisfies the following completeness and soundness conditions:
    \begin{itemize}
        \item \textbf{Completeness:} If the prover is honest, then with probability at least $3/4$, the verifier accepts and outputs tagged samples $(S_1, \pi_1), \ldots, (S_s, \pi_s)$ satisfying
        \[
            \frac{1}{s}\sum_{i \in [s] : \pi_i \ge \frac{\sigma}{N}} \biggl(1 - \min\Bigl\{\frac{\pi_i}{D(S_i)}, \frac{D(S_i)}{\pi_i}\Bigr\}\biggr) \le O(\sigma^3).
        \]
        and
        \[
            \frac{1}{s}\sum_{i\in[s]:\pi_i<\frac{\sigma}{N}} D(S_i) \le O\Bigl(\frac{\sigma}{N}\Bigr).
        \]
        \item \textbf{Soundness:} If the prover is dishonest, then with probability at least $3/4$, the verifier either rejects or outputs tagged samples $(S_1, \pi_1), \ldots, (S_s, \pi_s)$ satisfying
        \[
            \frac{1}{s}\sum_{i \in [s] : \pi_i \ge \frac{\sigma}{N}} \biggl(1 - \min\Bigl\{\frac{\pi_i}{D(S_i)}, \frac{D(S_i)}{\pi_i}\Bigr\}\biggr) \le O(\sigma).
        \]
        and
        \[
            \frac{1}{s}\sum_{i\in[s]:\pi_i<\frac{\sigma}{N}} D(S_i) \le O\Bigl(\frac{\sigma}{N}\Bigr).
        \]
    \end{itemize}
    (Note that the right-hand side of the first inequality in the completeness guarantee is $O(\sigma^3)$, but the corresponding quantity in the soundness guarantee is $O(\sigma)$. The sets of inequalities are otherwise identical up to suppressed constant factors.)
\end{theorem}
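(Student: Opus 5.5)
Our plan is to follow the structure of the analysis of \Cref{thm:herman2024public} as closely as possible. We view \Cref{alg:label-invariant} as the non-private protocol of \Cref{alg:non-private-label-invariant} run with tighter internal tolerances, plus independent Laplace perturbations. The argument has three parts. First, show that each Laplace noise term is negligible relative to the relevant tolerance window. Second, show that the PTR step passes with high probability in the honest case. Third, re-run the non-private concentration and soundness arguments with the larger $p_{\max}=\log(1/\delta)/\eps s$, the new bin threshold $b_{\min}$, and the new light-element threshold $2p_{\min}s^2$.

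\textbf{PTR step and the seed check.} We begin with the steps that are independent of the prover.

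\begin{itemize}
\item Step~\ref{step:seed-col} passes with high probability. The expected number of four-way collisions among $s$ uniform draws is about $s^4/N^3$, which is $o(1)$ since $s\approx N^{2/3}/\eps^{2/3}\ll N^{3/4}$ under the $N^{o(1)}$ assumption.
\item For Step~\ref{step:ptr}, we bound $m(T)$. Every $D(x)\le p_{\max}$, so each fixed value appears in $T$ a $\mathrm{Bin}(s,\le p_{\max})$ number of times, with mean at most $\log(1/\delta)/\eps$. A Chernoff bound plus a union bound over at most $s$ seeds gives $m(T)\le 3\log s+4\log(1/\delta)/\eps$ with high probability. This is below $\Delta-\log(1/\delta)/\eps$ by a margin of order $1/\eps$, so the Laplace test passes with probability $1-o(1)$.
\end{itemize}

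\textbf{Noise versus signal in the collision checks.} For each bin with $|S^j|\ge b_{\min}$, we compare the noise to the means.

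\begin{itemize}
\item The mean of $\pair_j$ is about $s|S^j|e^{j\tau}/N\ge s b_{\min}p_{\min}$.
\item The mean of $\triple_j$ is about $s^2|S^j|(e^{j\tau}/N)^2\ge s^2b_{\min}p_{\min}^2$.
\item We must show $3/\eps\ll \tau s b_{\min}p_{\min}$, and the harder $3\Delta/\eps\ll\tau s^2 b_{\min}p_{\min}^2$.
\item Since $b_{\min}=\tilde\Theta(\sigma s)$, the second condition reads $s^3\gtrsim \Delta N^2/(\eps\tau\sigma^3)$.
\item With $\Delta=\tilde O(\log(1/\delta)/\eps)$, this is exactly where $s=\tO(N^{2/3}/\eps^{2/3})\cdot\mathrm{poly}(\log(1/\delta)/\sigma)$ comes from.
\end{itemize}

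So Laplace perturbations of constant-probability size, union-bounded over the $B=\tilde O(1/\tau)$ bins, shift each statistic by at most a $\tau/10$ relative factor. The same comparison handles $\pair_{-\infty}$ against the threshold $2p_{\min}s^2$.

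\textbf{Completeness.} For completeness we need the sampling variance of $\pair_j,\triple_j$ to be within a $\tau/2$ relative window. We reuse the \cite{herman2024public} second-moment computation, with two changes.

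\begin{itemize}
\item The variance of $\triple_j$ picks up terms from heavy elements. These are controlled by $p_{\max}$, and the enlarged $p_{\max}$ costs exactly the $\log(1/\delta)/\eps$ factors absorbed in the sample size.
\item The true tags lie in bins of width $\tau=\sigma^3$, so the output tags satisfy the $O(\sigma^3)$ bound deterministically.
\item For $\pair_{-\infty}$: honest zero-tags have $D<p_{\min}$, so the expectation is at most $s\cdot s\,p_{\min}$, well below $2p_{\min}s^2$. The light-mass guarantee then follows from the same fact.
\end{itemize}

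\textbf{Soundness (the main obstacle).} For soundness we follow the \cite{herman2024public} argument, which has two ingredients.

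\begin{itemize}
\item Within a surviving bin, the passing pair and triple checks together force the first and second moments of $D$ over the tagged elements to match $e^{j\tau}/N$ up to $O(\tau)$. By a Cauchy–Schwarz/variance argument, most elements then have $D(S_i)$ within a constant factor of $\pi_i$.
\item Small bins ($|S^j|<b_{\min}$) contribute at most $B\,b_{\min}/s=\sigma/2$ in total.
\end{itemize}

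The obstacle is that our tolerance is $(1\pm\tau)$ plus noise, with $\tau=\sigma^3$, whereas the non-private proof used $4\tau$ with $\tau=\sigma^3/80000$. We must verify that the variance argument still yields average error $O(\sigma)$ rather than something like $O(\sqrt{\tau})$ summed over bins. If it does not, we would first try retuning constants inside $\tau$.

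For the light part, a cheating prover who tags heavy elements $0$ is caught by $\pair_{-\infty}$. Because the threshold is now $s^2$ rather than $s|S^{-\infty}|$, we get $\frac1s\sum_{\pi_i=0}D(S_i)\lesssim 2p_{\min}=O(\sigma/N)$ directly, which is why that change was made. Concentration of $\pair_{-\infty}$ again uses $p_{\max}$ and the seed-multiplicity bound.
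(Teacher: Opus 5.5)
Your plan is essentially the paper's proof: PTR passes via a Chernoff-plus-union bound on $m(T)$, the noise scales $3/\eps$ and $3\Delta/\eps$ are compared with the lower bounds $s^2\sigma^2/(2BN)$ and $s^3\sigma^3/(2BN^2)$ on the collision means (the latter forcing $s\approx N^{2/3}/\eps^{2/3}$), completeness uses Chebyshev with the $p_{\mathrm{max}}$-inflated variance of $\triple_j$, and soundness first turns passing checks into bin-wise first- and second-moment conditions (the paper uses Markov to lower-bound the means of a cheating prover's bins before applying Chebyshev) and then invokes \cite{herman2024public}, with the $s^2$ threshold handling $S^{-\infty}$. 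A few corrections: the Chernoff bound actually gives $m(T)\le 2\log(1/\delta)/\eps+2\log(s/\beta)$, whereas your stated $3\log s+4\log(1/\delta)/\eps$ equals the threshold $\Delta-\log(1/\delta)/\eps$ exactly and leaves no margin; $b_{\mathrm{min}}=\sigma s/(2B)$ is $\tilde\Theta(\sigma^4 s)$ rather than $\tilde\Theta(\sigma s)$, which only changes the $\mathrm{poly}(1/\sigma)$ factor; and your ``main obstacle'' dissolves, because in each bin with $|S^j|\ge b_{\mathrm{min}}$ the moment conditions give $\E_{i\sim S^j}[(D(S_i)-e^{j\tau}/N)^2]=O(\tau)(e^{j\tau}/N)^2$, so by Cauchy--Schwarz the average relative error in that bin is $O(\sqrt{\tau})=O(\sigma^{3/2})$, and these per-bin errors are averaged with weights $|S^j|/s$ rather than summed over bins (mere constant-factor closeness of $D(S_i)$ to $\pi_i$, as you phrase it, would not give $O(\sigma)$).
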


\begin{proof}
    The first proof ingredient that we need is a straightforward analysis of the mean and variance of the key statistics $\pair_j$ and $\triple_j$ over the randomness in the samples $T_i$ and $T'_i$. For this step, we do not consider the randomness over the data-independent samples $S_1, \ldots, S_s$, which we instead view as any fixed sequence of elements in $[N]$ such that no element appears more than $3$ times.
    A calculation of the mean and variance of $\pair_j$ and $\triple_j$ appeared in Appendix A of \cite{herman2024public}, but that analysis assumed that $p_{\mathrm{max}} \le 1/s$, so we cannot use it as a black box. In our private case, we only have the looser bound $p_{\mathrm{max}} \le \log(1/\delta)/\eps s$, and we take this into account by explicitly capturing the dependence on $p_{\mathrm{max}}$.

    \begin{claim}
    \label{thm:collision-variance}
        Define $\pair_j$, $\triple_j$, etc. as in the algorithm. Then,
        \begin{align*}
            \E[\pair_j] &= s\sum_{k \in S^j} D(S_k),\\
            \E[\triple_j] &= s^2\sum_{k \in S^j} D(S_k)^2.
        \end{align*}
        Also,
        \begin{align*}
            \Var[\pair_j] &\le 3\E[\pair_j],\\
            \Var[\triple_j] &\le 3(1+2p_{\mathrm{max}}s) \cdot \E[\triple_j].
        \end{align*}
    \end{claim}
    The proof of \Cref{thm:collision-variance} is quite standard, albeit slightly tedious. We simply write $\pair_j$ and $\triple_j$ as sums of indicator random variables and then apply properties like linearity of expectation. For this reason, we defer the calculation to the end of this section, focusing first on how to use \Cref{thm:collision-variance} to establish completeness and soundness.

    \paragraph{Proof of Completeness}

        First, we observe that regardless of whether the prover is honest or dishonest, the probability of rejection in Step~\ref{step:seed-col} is vanishingly small. Indeed, under the assumption in the theorem statement that $\eps,\delta,\sigma$ are not too miniscule, we have $s \le N^{2/3 + o(1)}$. Therefore, the probability of a four-way collision among the $S_i$ elements is at most $s^4N^{-3} \le N^{-1/3 + o(1)}$, which decays to $0$ as $N$ grows.

        Next, we analyze the probability of rejection in Step~\ref{step:ptr}. For this, we observe that the expected number of times any particular element appears among $T_1, \ldots, T_s$ is at most $p_{\mathrm{max}}s$. By standard concentration inequalities (e.g. the multiplicative Chernoff bound in \Cref{thm:concentration} for each $k \in [s]$ followed by a union bound over $k$), we have that $m(T) \le 2p_{\mathrm{max}}s + 2\log(s/\beta)$, except with failure probability $\beta$. The same is true of $m(T')$. For our choice of $p_{\mathrm{max}}$ and $\Delta$, it follows that even after adding Laplace noise, the verifier is very unlikely to reject in Step~\ref{step:ptr}, doing so with at most a small constant probability.

        Next, suppose the prover is honest, so that the tags $\pi_i = D(S_i)$ sent back to the verifier are legitimate. Under this assumption, we now analyze the probability of rejection in Step~\ref{step:data-col}. We start by analyzing $\pair_j$. By \Cref{thm:collision-variance}, for each bin $S^j$ of size at least $b_{\mathrm{min}} = (\sigma/2B)s$ with $e^{j\tau}/N \ge p_{\mathrm{min}} = \sigma/N$, the collision count $\pair_j$ has expectation
        \begin{equation*}
            \E[\pair_j] = s\sum_{i \in S^j} D(S_i) \ge s\abs{S^j}\frac{e^{j\tau}}{N} \ge \frac{s^2}{N} \cdot \frac{\sigma^2}{2B}. \tag{$*$}
        \end{equation*}
        Here, we have implicitly used the fact that the tags are legitimate in order to argue that $D(S_i) = \pi_i \ge e^{j\tau}/N$ for each index $i \in S^j$. Using the bound on $\Var[\pair_j]$ in \Cref{thm:collision-variance}, Chebyshev's inequality implies that
        \[
            \Pr\biggl[\pair_j \notin \Bigl(1 \pm \frac{\tau}{2}\Bigr) \E[\pair_j] \biggr] \lesssim \frac{\Var[\pair_j]}{(\tau \E[\pair_j])^2} \lesssim \frac{1}{\tau^2 \E[\pair_j]} \lesssim \frac{N}{s^2} \cdot \frac{B}{\sigma^2\tau^2},
        \]
        
        which is at most a small constant for our chosen sample size $s$. Moreover, for our choice of $s$, we can ensure that for any particular small constant $c > 0$, the noise magnitude $3/\eps$ is less than $c\tau$ times our lower bound on $\E[\pair_j]$ in inequality $(*)$, namely $(s^2/N)(\sigma^2/2B) \le s\abs{S^j}(e^{j\tau}/N)$. Therefore, except with a small constant probability over the randomness of $T$, $T'$, and the various Laplace mechanisms, the following checks in Step~\ref{step:data-col} pass for all bins $S^j$ under consideration:
        \[
            \Lap\Bigl(\pair_j, \frac{3}{\eps}\Bigr) \in (1 \pm \tau)s\abs{S^j}\frac{e^{j\tau}}{N}.
        \]
        Next, for the three-way collision counts $\triple_j$, \Cref{thm:collision-variance} implies
        \begin{equation*}
            \E[\triple_j] = s^2\sum_{i \in S^j} D(S_i)^2 \ge s^2\abs{S^j}\Bigl(\frac{e^{j\tau}}{N}\Bigr)^2 \ge \frac{s^3}{N^2} \cdot \frac{\sigma^3}{2B}. \tag{$**$}
        \end{equation*}
        Using the formula for $\Var[\triple_j]$ in \Cref{thm:collision-variance}, Chebyshev's inequality implies that
        \[
            \Pr\biggl[\triple_j \notin \Bigl(1 \pm \frac{\tau}{2}\Bigr)\E[\triple_j]\biggr] \lesssim \frac{\Var[\triple_j]}{(\tau\E[\triple_j])^2} \lesssim \frac{1+2p_{\mathrm{max}}s}{\tau^2\E[\triple_j]} \lesssim \frac{N^2}{s^3}  \cdot \frac{\log(1/\delta)B}{\eps\sigma^3\tau^2}.
        \]
        
        This, too, is at most a small constant for our choice of $s$. Again, our choice of $s$ ensures that for any particular small constant $c > 0$, the noise magnitude $3\Delta/\eps$ is less than $c\tau$ times our lower bound on $\E[\triple_j]$ in inequality $(**)$, namely $(s^3/N^2)(\sigma^3/2B) \le s^2\abs{S^j}(e^{j\tau}/N)^2$. In slightly more detail---since this is the most statistically expensive step of the proof---the relationship we wish to establish is:
        \[
            \frac{\Delta}{\eps} \asymp \frac{\log(s)}{\eps} + \frac{\log(1/\delta)}{\eps^2} \lesssim \frac{s^3}{N^2} \cdot \frac{\sigma^3}{2B}.
        \]
        For $s = \tO(N^{2/3}/\eps^{2/3}) \cdot \mathrm{poly}(\log(1/\delta)/\sigma)$, the asymptotic inequality above indeed holds, where we draw particular attention to the dependence on the parameters $N$ and $\eps$. Consequently, except with a small constant probability, the following checks pass for all bins $S^j$ under consideration:
        \[
            \Lap\Bigl(\triple_j, \frac{3\Delta}{\eps}\Bigr) \in (1 \pm \tau)s^2\abs{S^j}\Bigl(\frac{e^{j\tau}}{N}\Bigr)^2.
        \]

        Finally, we analyze the probability of rejection in Step~\ref{step:light-col}. For this, observe that when the prover is honest, the two-way collision count $\pair_{-\infty}$ satisfies
        \[
            \E[\pair_{-\infty}] = s \sum_{i \in S^{-\infty}} D(S_i) \le s\abs{S^{-\infty}}p_{\mathrm{min}} \le p_{\mathrm{min}} \cdot s^2.
        \]
        Here, we have used the fact that $\abs{S^{-\infty}} \le s$. Also, for our choice of $s$, the noise magnitude $3/\eps$ is at most an arbitrarily small constant times $(s^2/N) \cdot \sigma = p_{\mathrm{min}}\cdot s^2$. Consequently, except with small constant probability,
        \[
            \Lap\Bigl(\pair_{-\infty}, \frac{3}{\eps}\Bigr) \le 2p_{\mathrm{min}}\cdot s^2.
        \]
        Accumulating all the small constant failure probabilities discussed so far, we conclude that when the prover is honest, the verifier accepts with probability at least $3/4$. Moreover, the error of these tags $\pi_i$ is $0$ if exact, or proportional to $\tau = O(\sigma^3)$ if reported with $\tau$ precision.

    \paragraph{Proof of Soundness}

        As in the proof of completeness, we first observe that the probability of rejection in Steps~\ref{step:seed-col} or \ref{step:ptr} is vanishingly small, regardless of whether the prover is honest or dishonest. With this in mind, we shall adopt the following two-step approach to proving soundness:
        \begin{enumerate}
            \item In the first step, we will argue that the only way for a dishonest prover to pass the verifier's checks is to send tags $\pi_i$ such that the true probability masses in each sufficiently heavy bin $S^j$ have roughly the correct first two moments. Here, ``correct'' means matching that of an honest prover.
            
            Although this step of the proof is similar to one carried out in the non-private case, a crucial difference is that we must establish this implication even in the presence of the additive Laplace noise that we introduced for differential privacy.
            
            \item In the second step, we apply an argument from \cite{herman2024public}, building on ideas from the earlier work of \cite{batu2017uniformity}, to argue that having these accurate bin-wise moments is all we need. Since this step does not involve any changes for differential privacy, we invoke it as a black box.
        \end{enumerate}
    For the first step, consider any bin $S^j$ which is sufficiently heavy, meaning that $\abs{S^j} \ge b_{\mathrm{min}} = (\sigma/2B)s$. Our ultimate goal in this step is to establish that
    \begin{align}
        \E_{i \sim S^j} [D(S_i)] &\in \bigl(1 \pm O(\tau)\bigr) \frac{e^{j\tau}}{N}, \label{eq:balance-1}\\
        \E_{i \sim S^j} [D(S_i)^2] &\in \bigl(1 \pm O(\tau)\bigr)\Bigl(\frac{e^{j\tau}}{N}\Bigr)^2. \label{eq:balance-2}
    \end{align}
    To this end, we observe that these two quantities roughly correspond to $\pair_j$ and $\triple_j$, a relationship that is made precise by \Cref{thm:collision-variance}:
    \begin{align*}
        \E_{i \sim S^j} [D(S_i)] &= \frac{1}{\abs{S^j}s} \E[\pair_j],\\
        \E_{i \sim S^j} [D(S_i)^2] &= \frac{1}{\abs{S^j}s^2} \E[\triple_j].
    \end{align*}
    We emphasize that the expectations on the left-hand side are computed over the random index $i \sim S^j$, whereas the expectations on the right-hand side are computed over the random samples in $T$ and $T'$. In order to analyze $\E[\pair_j]$ and $\E[\triple_j]$, we apply Markov's inequality, which tells us that with probability at least $1 - O(\beta)$ over $T$ and $T'$,
    \begin{align*}
        \pair_j &\le \frac{1}{\beta}\E[\pair_j],\\
        \triple_j &\le \frac{1}{\beta}\E[\triple_j].
    \end{align*}
    If we also know that the verifier's checks on $\pair_j$ and $\triple_j$ in Step~\ref{step:data-col} pass, then we similarly have, with at least $1 - O(\beta)$ probability over the noise added for DP, that
    \begin{align*}
        \pair_j &\in (1 \pm \tau) s \abs{S^j}\Bigl(\frac{e^{j\tau}}{N}\Bigr) \pm O{\biggl(\frac{\log(1/\beta)}{\eps}\biggr)}, \\
        \triple_j &\in (1 \pm \tau) s^2 \abs{S^j}\Bigl(\frac{e^{j\tau}}{N}\Bigr)^2 \pm O{\biggl(\frac{\Delta\log(1/\beta)}{\eps}\biggr)}.
    \end{align*}
    Next, we substitute our bounds on $\abs{S^j} \ge b_{\mathrm{min}}$ and $e^{j\tau}/N \ge p_{\mathrm{min}}$ (and $\tau < 1/2$), as well as the previously mentioned relationship between $\pair_j$ and $\triple_j$ via Markov's inequality. Doing so simplifies the above displays substantially. Indeed, letting $\beta = O(1/B)$ where $B$ is the number of bins and taking a union bound over bins, we see that except with a small constant probability, for all bins of size $\abs{S^j} \ge b_{\mathrm{min}}$, we have
    \begin{align*}
        \E[\pair_j] &\ge \frac{s^2}{N} \cdot \mathrm{poly}\Bigl(\frac{1}{B}\Bigr) - \frac{1}{\eps} \cdot O(\mathrm{log}(B)),\\
        \E[\triple_j] &\ge \frac{s^3}{N^2} \cdot \mathrm{poly}\Bigl(\frac{1}{B}\Bigr) - \frac{\Delta}{\eps}\cdot O(\mathrm{log}(B)).
    \end{align*}

    To better understand these formulas, recall that $B = \mathrm{poly}(\log(N\log(1/\delta)/\eps s)/\sigma)$ and also that $\Delta = O(\log(s) + \log(1/\delta)/\eps)$. In particular, for our choice of $s$, both right-hand side quantities are positive and bounded away from $0$. Next, we combine these lower bounds on $\E[\pair_j]$ and $\E[\triple_j]$ with the upper bounds on $\Var[\pair_j]$ and $\Var[\triple_j]$ from \Cref{thm:collision-variance} in order to apply Chebyshev's inequality  as we did in the proof of completeness. Doing so, and substituting our choice of $s$, we see that except with a small constant probability, for all bins of size $\abs{S^j} \ge b_{\mathrm{min}}$, we have
    \begin{align*}
        \pair_j &\in \bigl(1 \pm O(\tau)\bigr)\E[\pair_j],\\
        \triple_j &\in \bigl(1 \pm O(\tau)\bigr) \E[\triple_j].
    \end{align*}
    At this point, we have related $\pair_j$ to its mean $\E[\pair_j]$ as well as to the quantity $s\abs{S^j}(e^{j\tau}/N)$. Similarly, we have related $\triple_j$ to both its mean $\E[\triple_j]$ and the quantity $s^2\abs{S^j}(e^{j\tau}/N)^2$. Combining these relationships yields (again, for our particular choice of $s$, except with a small constant failure probability, for all large enough bins $\abs{S^j} \ge b_{\mathrm{min}}$),
    \begin{align*}
        \E[\pair_j] &\in \bigl(1 \pm O(\tau)\bigr) s\abs{S^j}\frac{e^{j\tau}}{N},\\
        \E[\triple_j] &\in \bigl(1 \pm O(\tau)\bigr) s^2\abs{S^j}\Bigl(\frac{e^{j\tau}}{N}\Bigr)^2.
    \end{align*}
    This concludes the proof of the first step in our two-step proof of soundness. For the second step, we invoke as a black box the following claim, which follows from combining the proofs of several lemmas in Section 4.2 of \cite{herman2024public}:
    \begin{claim}[\cite{herman2024public}]
        If $\tau = O(\sigma^3)$ and conditions \eqref{eq:balance-1}, \eqref{eq:balance-2} hold for all bins of size $\abs{S^j} \ge b_{\mathrm{min}}$, then
        \[
            \frac{1}{s}\sum_{i \in [s] : \pi_i \ge \frac{\sigma}{N}} \biggl(1 - \min\Bigl\{\frac{\pi_i}{D(S_i)}, \frac{D(S_i)}{\pi_i}\Bigr\}\biggr) \le O(\sigma).
        \]
    \end{claim}
    The only bin we have not yet addressed is $S^{-\infty} = \{i \in [s] : \pi_i < \sigma/N\}$. For this bin, we first observe that its mean satisfies
    \[
        \E[\pair_{-\infty}] = s\sum_{i \in [s]:\pi_i<\frac{\sigma}{N}} D(S_i)
    \]
    By standard Chernoff/Hoeffding bounds and a union bound over $i \in [s]$, with all but a small constant probability, we have
    \[
        \pair_{-\infty} \ge s\sum_{i \in [s]:\pi_i<\frac{\sigma}{N}} D(S_i) - \tO(\sqrt{s}).
    \]
    If the verifier's check in Step~\ref{step:light-col} passes, then with all but a small constant probability, we have
    \[
        \pair_{-\infty} \lesssim p_{\mathrm{min}} \cdot s^2 + \frac{1}{\eps}.
    \]
    Combining the upper and lower bounds on $\pair_{-\infty}$, rearranging terms, and substituting our choice of the sample $s$, we conclude that
    \[
        \frac{1}{s} \sum_{i \in [s]: \pi_i < \frac{\sigma}{N}} D(S_i) \lesssim p_{\mathrm{min}} = \frac{\sigma}{N}.
    \]
    This concludes the proof of soundness.
    
    \paragraph{Deferred Proof of \Cref{thm:collision-variance}}
    
    Let $\pair_{jk}$ and $\triple_{jk}$ count the number of two- and three-way collisions with a specific element $S_k$, as opposed to all $S_k$ with $k \in S^j$. We can decompose
    \begin{align*}
        \pair_j &= \sum_{k \in \tS^j} m_k\cdot \pair_{jk},\\
        \triple_j &= \sum_{k \in \tS^j} m_k\cdot \triple_{jk}.
    \end{align*}
    By linearity of expectation and \Cref{thm:binom} below with $X = \pair_{jk}$ and $XY = \triple_{jk}$ and $p = p_k$,
    \begin{align*}
        \E[\pair_j] &= \sum_{k \in \tS^j} m_{k}\cdot sp_k,\\
        \E[\triple_j] &= \sum_{k \in \tS^j} m_k \cdot s^2p_k^2.
    \end{align*}
    Since $\pair_{jk}$ and $\pair_{jk'}$ are negatively correlated for distinct $k,k' \in \tS^j$ (as are $\triple_{jk}$ and $\triple_{jk'}$),
    \begin{align*}
        \Var[\pair_j] &\le \sum_{k \in \tS^j} \Var[m_k\cdot \pair_{jk}],\\
        \Var[\triple_j] &\le \sum_{k \in \tS^j} \Var[m_k\cdot \triple_{jk}].
    \end{align*}
    Applying \Cref{thm:binom} again yields
    \begin{align*}
        \Var[\pair_j] &\le \sum_{k \in \tS^j} m_k^2 \cdot sp_k(1-p_k),\\
        \Var[\triple_j] &\le \sum_{k \in \tS^j} m_k^2 \cdot s^2p_k^2(1-p_k)(1-p_k+2sp_k).
    \end{align*}
    Combining our mean and variance formulas concludes the proof.
    
    \begin{lemma}[Binomial Mean and Variance]
    \label{thm:binom}
        If $X,Y \iid \Bin(s,p)$, then
        \begin{align*}
            \E[X] &= sp, \\
            \Var[X] &= sp(1-p), \\
            \E[XY] &= s^2p^2, \\
            \Var[XY] &= s^2p^2(1-p)(1-p+2sp).
        \end{align*}
    \end{lemma}
    \begin{lemma}[Chernoff Variant]
    \label{thm:concentration}
        If $X$ is a sum of independent $[0, 1]$-valued random variables, then
        \[
            \Pr\bigl[X > 2\E[X] + 2\log(1/\beta)\bigr] \le \beta.
        \]
    \end{lemma}
    \begin{proof}
        Let $\mu = \E[X]$. By the standard multiplicative Chernoff bound, for all $\delta > 0$,
        \[
            \Pr[X \ge (1 + \delta)\mu] \le e^{-\delta^2\mu/(2+\delta)}.
        \]
        Since $2 + \delta \le 2\max(2,\delta)$, the right-hand side is at most $\max(e^{-\delta^2\mu/4}, e^{-\delta\mu/2})$. This quantity is equal to $\beta$ when we set
        \(
            \delta = \max(2\log(1/\beta)/\mu,  2\sqrt{\log(1/\beta)/\mu}).
        \)
        For this value of $\delta$, we have
        \begin{align*}
            (1 + \delta)\mu &= \mu + \max\Bigl(2\log(1/\beta), 2\sqrt{\mu\log(1/\beta)}\Bigr)\\
            &\le \mu + \max\bigl(2\log(1/\beta), \mu + \log(1/\beta)\bigr) &\text{(AM-GM Inequality)}\\
            &= \max\bigl(\mu + 2\log(1/\beta), 2\mu + \log(1/\beta)\bigr)\\
            &\le 2\mu + 2\log(1/\beta). \qedhere
        \end{align*}
    \end{proof}
\end{proof}

\section{Future Work}
\begin{enumerate}
    \item Is $\tilde{O}(N^{2/3})$ the correct bound for public coin (non-DP) interactive proofs with label invariant properties? Can AM protocols instead match our current best-known lower bound of $\tilde{O}(N^{1/2})$ from \cite{chiesa_et_al:LIPIcs.ITCS.2018.53}?
    \item In the setting of distribution testing, is there an interesting notion of public coin protocols where the verifier can send messages that depend on their sample, with the restriction that all randomness used to compute the message is shared with the prover? The current notion of AM protocols was very natural in functional property testing as the verifier had no secret information, but this condition does not translate over to distribution testing.
    \item Is there a weaker adversary model that makes sense to study? For example, is there some real world adversary which may not have any information about your sample beyond knowing the distribution? Protecting against membership inference against such an adversary in communication plus DP on the output may be easier. We need to make such assumptions for soundness anyways so it may be reasonable to do so for privacy as well.
\end{enumerate}

\paragraph{AI Disclosure:} We used ChatGPT 5.2 Pro to assist in proving the lower bound in section 5.2. Specifically, the claim that when we have a distribution with approximately uniform marginals, the distance to the product distribution with the same marginals is at most $O\left(\sqrt{d}\right)$ times the distance to the nearest product distribution was proved with LLM assistance. The authors modified and verified the correctness of this proof.

\bibliographystyle{alpha}
\bibliography{ref}
\appendix
\section{Errata in Previous Version}
\label{sec:errata}
A previous version of this paper erroneously ignored an important detail in Goldreich's reduction, to wit, the fact that samples of $P$ are used to select locations $x$ of $Q$ at which to query the weights $Q(x)$.  There are no privacy implications when the verifier has $Q$.  However, in our interactive argument system in which the prover commits to $Q$ using a Merkle tree and then responds to queries $Q(x)$, revealing $x$'s sampled from $P$ destroys privacy.  We therefore withdraw Section 8 (Differentially Private Arguments of Proximity) of~\cite{oldversion}.

Our error also affected our claimed efficient upper bound for an AM protocol for testing that $P$ is a product distribution is a product distribution.  Computational efficiency is {\em not} a property of the algorithms in~\cite{aliakbarpour2017differentiallyprivateidentitycloseness}.  Our new algorithm in Section~\ref{sec:independence} closes this gap.  

We thank the anonymous reviewer for pointing out our mistake.

\section{Proof of Lemma 1 in \cite{Paninski}}
\label{app:Paninski}

Here, we adapt Paninski's proof and demonstrate that it is still true under the weaker assumption that $s < M$, in comparison to the assumption that $s = o(M)$ in \cite{Paninski}.

\begin{lemma} For a sample of size $s$ drawn from any distribution $R$ on an alphabet of size $M$, there exists a universal constant $c_{\rm gap}$ such that
$$\mu_M-\mathbb E_R K\ge c_{\rm gap}\frac{s^2}{M}\TV(R,U_M)^2,$$
where $K$ is the number of elements that occur exactly once in the sample.
\end{lemma}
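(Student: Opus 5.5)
We work directly with the exact singleton count. Writing $R=(r_1,\dots,r_M)$, linearity of expectation gives $\mathbb E_R K=\sum_{i=1}^M s r_i(1-r_i)^{s-1}$. Define $g(x)=sx(1-x)^{s-1}$ on $[0,1]$, so that $\mathbb E_R K=\sum_i g(r_i)$ and $\mu_M=M\,g(1/M)$. The gap $\mu_M-\mathbb E_RK$ is then $\sum_i[g(1/M)-g(r_i)]$. Since $\sum_i(r_i-1/M)=0$, we may subtract the tangent line at $1/M$:
\[
\mu_M-\mathbb E_R K=\sum_{i=1}^M\Bigl[g(1/M)+g'(1/M)(r_i-1/M)-g(r_i)\Bigr].
\]
The plan is to lower bound each bracket by a quadratic (or linear, for large deviations) function of $|r_i-1/M|$, and then use Cauchy--Schwarz to relate $\sum_i|r_i-1/M|=2\TV(R,U_M)$ to the sum of squares.

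\textbf{Step 1 (local concavity).} Compute $g''(x)=s(s-1)(1-x)^{s-3}\bigl(sx-2\bigr)$, so $g$ is concave on $[0,2/s]$. For $x\in[0,1/s]$ we have $-g''(x)\ge s(s-1)(1-x)^{s-3}\ge c\,s^2$, using $(1-1/s)^{s}\ge e^{-1}$-type bounds. Here the assumption $s\le c_{\rm sp}M$ (in fact just $s<M$, as in the remark) enters: it guarantees $1/M<1/s$, so the tangent point sits well inside the strongly concave region. Hence for $r_i\le 1/s$ each bracket is at least $c\,s^2(r_i-1/M)^2$.

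\textbf{Step 2 (large $r_i$).} For $r_i>1/s$, we use that the tangent line at $1/M$ has positive slope $g'(1/M)=s(1-1/M)^{s-2}(1-s/M)\gtrsim s$, while $g(r_i)\le \max g\le (1-1/s)^{s-1}\le 1$. The tangent line at $r_i$ is about $g(1/M)+c's(r_i-1/M)$, so the bracket is at least $c''s(r_i-1/M)\ge c''s^2(r_i-1/M)^2\cdot\frac{1}{s(r_i-1/M)}$. This gives a linear rather than quadratic lower bound: each such term is at least $c\,s\,|r_i-1/M|$. This is where the argument departs from Paninski's small-$s/M$ Taylor expansion. I expect this step to be the main obstacle, since it requires checking that the bracket stays positive uniformly between $1/s$ and $2/s$ and beyond, using constants independent of $s/M$. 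One way to handle it is to use concavity on $[1/M,2/s]$, so the bracket is increasing in $r_i$ there, and then compare directly past $2/s$.

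\textbf{Step 3 (combine).} Let $A=\{i:r_i\le 1/s\}$ and $B$ its complement, with $\Delta_i=|r_i-1/M|$. Since $|B|\le s$ and $\sum_B r_i\le1$, we have $\sum_B\Delta_i\le 1$. Therefore
\[
\mu_M-\mathbb E_RK\ge c\,s^2\sum_A\Delta_i^2+c\,s\sum_B\Delta_i\ge c\,\frac{s^2}{M}\Bigl(\sum_A\Delta_i\Bigr)^2+c\,s\Bigl(\sum_B\Delta_i\Bigr)^2,
\]
where the first inequality on the right is Cauchy--Schwarz over at most $M$ terms, and the second uses $\sum_B\Delta_i\le1$. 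Since $s\ge s^2/M$, both terms are at least $c\frac{s^2}{M}(\cdot)^2$. Using $(a+b)^2\le2(a^2+b^2)$ together with $\sum_i\Delta_i=2\TV(R,U_M)$ yields the claimed bound with a universal constant $c_{\rm gap}$.
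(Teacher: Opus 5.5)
Your proof is correct for $2\le s\le c_{\rm sp}M$ with $c_{\rm sp}<1$, and it combines the per-coordinate bounds differently from the paper. Both arguments subtract the tangent line at $1/M$ using $\sum_i(r_i-1/M)=0$. Both also rest on the same picture: the bracket is quadratic in $|r_i-1/M|$ below $1/s$ and linear above it.

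The paper, following Paninski, packages this picture into a single convex minorant of $f(x)-f'(1/M)(x-1/M)$, viewed as a function of $|x-1/M|$. It applies Jensen to reduce the whole sum to $M$ times that minorant at $2\TV(R,U_M)/M$, and then Taylor-expands the minorant at $0$. The convexity and the minorant property are deferred to Paninski. The last step is asymptotic (an $o(z^2)$ term and $1+O(s/M)\ge 1$), so no explicit constant comes out.

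You instead bound each bracket pointwise, split the coordinates at $1/s$, and replace Jensen by Cauchy--Schwarz on the light coordinates together with $\sum_B\Delta_i\le 1$ on the heavy ones. This is self-contained, uses only the signs of $g'$ and $g''$ for your $g(x)=sx(1-x)^{s-1}$, and gives an explicit $c_{\rm gap}$.

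First refinement: the slope bound $g'(1/M)=s(1-1/M)^{s-2}(1-s/M)\gtrsim s$ in Step~2 silently needs $1-s/M$ bounded below. This is fine for Lemma~\ref{lem:singleton-ineq}. However, your heavy-coordinate linear bound genuinely fails as $s/M\to 1$: at $s=M$ the bracket at $r_i=1$ equals $g(1/s)\le 1/2$, not something $\gtrsim s$. So your argument does not reach the appendix's range $s<M$.

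Second refinement: Step~2, which you flag as the main obstacle, can be dropped. Write $h(r)=g(1/M)+g'(1/M)(r-1/M)-g(r)$.
\begin{itemize}
\item Since $\TV(R,U_M)=\sum_{i:r_i<1/M}(1/M-r_i)$ exactly, only the deficit coordinates matter. They all lie in $[0,1/M]\subseteq[0,1/s]$, where your Step~1 gives $h(r_i)\ge c\,s^2(1/M-r_i)^2$.
\item All other brackets are nonnegative by the monotonicity you already sketched: $h'(r)=g'(1/M)-g'(r)\ge 0$ on $[1/M,1]$. This holds because $g'$ decreases on $[0,2/s]$, and $g'\le 0\le g'(1/M)$ on $[1/s,1]$.
\item Cauchy--Schwarz over the deficit coordinates then gives $\mu_M-\mathbb E_RK\ge c\,\frac{s^2}{M}\TV(R,U_M)^2$ for every $2\le s\le M$.
\end{itemize}
The restriction $s\ge 2$, implicit in your use of $s(s-1)\gtrsim s^2$, is genuinely needed. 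For $s=1$ one has $K\equiv 1$, so the left side is $0$ while the right side is positive for non-uniform $R$.
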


\begin{proof}
    Let $p_i$ be the mass that $R$ places on element $i$ and let
    $$f(p_i) = p_i \left[1-\left(\frac{M}{M-1}(1-p_i)\right)^{s-1}\right].$$

    Then,

    $$\mu_M - \mathbb E_R K = s\left(\frac{M-1}{M}\right)^{s-1} \sum_{i=1}^M f(p_i).$$

    The function $f(x)$ then satisfies the following properties:

    \begin{enumerate}
        \item $f(0) = 0$
        \item $f(1/M) = 0$
        \item $f(x) < 0$ for $0 < x < 1/M$ 
        \item $f(x)$ is monotonically increasing for $x > 1/M$
        \item $f(x) \rightarrow x$ for sufficiently large $x$
    \end{enumerate}

    Keeping this in mind, we can lower bound $f(x)$ as follows:

    $$f(x) \ge g\left(\left|x-\frac{1}{M}\right|\right) + f'\left(\frac{1}{M}\right)\left(x-\frac{1}{M}\right)$$
    \\
    where
    $$g(z) = \left\{ \begin{matrix}f\left(z + \frac{1}{M}\right) - f'\left(\frac{1}{M}\right) z & z \in \left[0,\frac{1}{s} - \frac{1}{M}\right]\\
    f\left(\frac{1}{s}\right) + \left(z+\frac{1}{M} - \frac{1}{s}\right) - f'\left(\frac{1}{s}\right)z & \text{otherwise}
    \end{matrix}\right.$$

    This gives us a convex lower bound on $f(x)$ when $s \le M$. For an explanation of why this is a lower bound on $f(x)$, see \cite{Paninski}.

We now note that $f'(1/M) = \frac{s-1}{M-1}$ is a constant, and for any constant $c$,

\begin{align*}\sum_{i=1}^M \left(f(p_i) - c\left(p_i-\frac{1}{M}\right)\right) &= \sum_{i=1}^M f(p_i) - \sum_{i=1}^M c\left(p_i-\frac{1}{M}\right)\\
&= \sum_{i=1}^M f(p_i) - c\left(\sum_{i=1}^M p_i\right) - c\\
&= \sum_{i=1}^M f(p_i)
\end{align*}

In particular, the lower bound can be rewritten as

$$f(p_i) - f'\left(\frac{1}{M}\right)\left(p_i-\frac{1}{M}\right) \ge g\left(\left|p_i - \frac{1}{M}\right|\right)$$

so

\begin{align*}
    \sum_{i=1}^M f(p_i) &= \sum_{i=1}^M\left[ f(p_i) - f'\left(\frac{1}{M}\right)\left(p_i-\frac{1}{M}\right)\right]\\
    &\ge \sum_{i=1}^M g\left(\left|p_i - \frac{1}{M}\right|\right)
\end{align*}

Now, Jensen's inequality tells us

$$\frac{1}{M} \sum_{i=1}^M g\left(\left|p_i - \frac{1}{M}\right|\right) \ge g\left(\frac{1}{M} \sum_{i=1}^M \left|p_i - \frac{1}{M}\right|\right) = g\left(\frac{2D_{TV}(R,U_M)}{M}\right)$$

Thus,

$$\sum_{i=1}^M f(p_i) \ge Mg\left(\frac{2D_{TV}(R,U_M)}{M}\right)$$

so

$$\mu_M - \E_R K \ge sM \left(\frac{M-1}{M}\right)^{s-1} g\left(\frac{2D_{TV}(R,U_M)}{M}\right)$$

Now, we note that by definition, total variation distance cannot be greater than $1$ so we have $\frac{2D_{TV}(R,U_M)}{M} \le \frac{2}{M} \rightarrow 0$ as $M \rightarrow \infty$. As such, we can just look at the Taylor expansion of $g$ near 0:

$$g(z) = \frac{A}{2}z^2 + o(z^2), z \rightarrow 0$$

where

\begin{align*}
    A &= \left.\frac{\partial^2 f(x)}{\partial x^2}\right|_{x = 1/M}\\
    &= \left(\frac{M-1}{M}\right)^{s-1} \left[2(s-1)(1-x)^{s-2} - (s-1)(s-2)x(1-x)^{s-3}\right]_{x = 1/M}\\
    &= 2s + O\left(\frac{s^2}{M}\right).
\end{align*}

where the last equality comes by noting that $s < M$.

Thus, we have

$$\mu_M - \E_R K \ge sM\left(\frac{M-1}{M}\right)^{s-1} \left(\frac{1}{2}\left[2s + O\left(\frac{s^2}{M}\right)\right]\frac{(2D_{TV}(R,U_M))^2}{M^2} + o{\left(\frac{2D_{TV}(R,U_M)^2}{M^2}\right)}\right)$$

Now, for $s < M$, we note that

$$\left(\frac{M-1}{M}\right)^{s-1} \ge \left(\frac{M-1}{M}\right)^{M-2} \ge \frac{1}{e}$$

Where the last inequality follows from noting that $\left(\frac{M-1}{M}\right)^{M-2}$ is monotonically decreasing with $\frac{1}{e}$ being the limit as $M \rightarrow \infty$. Thus,

$$\mu_M - \E_R K \ge \frac{s^2}{M}D_{TV}(R,U_M)^2 \left[1 + O\left(\frac{s}{M}\right)\right] \ge \frac{s^2}{M}D_{TV}(R,U_M)^2. \qedhere$$

\end{proof}

\end{document}